\DeclareMathOperator{\interior}{int}
\DeclareMathOperator{\linspan}{span}
\newcommand{\sys}{\mathcal{S}}
\newcommand{\env}{\mathcal{E}}
\newcommand{\bZ}{\mathbb{Z}}
\newcommand{\init}{^\textnormal{i}}
\DeclarePairedDelimiter{\ceil}{\lceil}{\rceil}
\def\DJ{{\hbox{D\kern-.8em\raise.15ex\hbox{--}\kern.35em}}}
\def\openone{\leavevmode\hbox{\small1\kern-3.8pt\normalsize1}}
\def\CC{\mathbb{C}}
\def\bC{\mathbb{C}}
\def\NN{\mathbb{N}}
\def\11{\mathbb{I}}
\newcommand{\inv}{^{-1}}
\newtheorem{theorem}{Theorem}[section]
\newtheorem{lemma}[theorem]{Lemma}
\newtheorem{proposition}[theorem]{Proposition}
\newtheorem{corollary}[theorem]{Corollary}
\theoremstyle{definition}
\newtheorem{definition}[theorem]{Definition}
\newtheorem{example}[theorem]{Example}
\newcommand{\braket}[1]{ \langle #1 \rangle}
\def\reff#1{(\ref{#1})}
\def\eps{\varepsilon}
\newcommand{\tr}{\mathop{\rm Tr}\nolimits}
\newcommand{\re}{\mathop{\rm Re}\nolimits}
\newcommand{\spec}{{\rm sp}}
\newcommand{\rank}{\mathop{\rm rank}\nolimits}
\newcommand{\bra}[1]{\langle#1|}
\newcommand{\ket}[1]{|#1\rangle}
\newcommand{\cB}{{\cal B}}
\newcommand{\cD}{{\cal D}}
\newcommand{\cF}{{\mathcal{F}}}
\newcommand{\cN}{{\cal N}}
\newcommand{\cT}{{\cal T}}
\newcommand{\cH}{{\cal H}}
\newcommand{\cK}{{\cal K}}
\newcommand{\cJ}{{\cal J}}
\newcommand{\cP}{\mathcal{P}}
\newcommand{\cL}{{\cal L}}
\newcommand{\cR}{{\cal R}}
\newcommand{\R}{{\mathbb{R}}}
\newcommand{\M}{{\mathbb{M}}}
\def\e{\mathrm{e}}
\newcommand{\iu}{\mathrm{i}\mkern1mu}
\newcommand{\lb}{\left(}
\newcommand{\rb}{\right)}
\theoremstyle{definition}
\theoremstyle{remark}
\newtheorem{remark}{Remark}
\newtheorem{remarks}[remark]{Remarks}
\newtheorem*{remarks*}{Remarks}
\numberwithin{equation}{section}
\newcommand{\EEB}{\operatorname{EEB}}
\newcommand{\EB}{\operatorname{EB}}
\newcommand{\LEA}{\operatorname{LEA}}
\newcommand{\AEB}{\operatorname{AEB}}
\newcommand{\AES}{\operatorname{AES}}
\newcommand{\ES}{\operatorname{ES}}
\newcommand{\spr}{\operatorname{spr}}
\newcommand{\SEP}{\operatorname{SEP}}
\newcommand{\PPT}{\operatorname{PPT}}
\newcommand{\Pas}{P_{\operatorname{asym}}}
\crefname{proposition}{prop.}{prop.}
\crefname{theorem}{thm.}{thm.}
\crefname{corollary}{cor.}{cor.}
\newcommand{\p}{(}
\DeclareRobustCommand\openone{\leavevmode\hbox{\small1\normalsize\kern-.33em1}}
\newcommand{\one}{\mathbb{I}}
\newcommand{\id}{{\rm{id}}}
\newcommand{\be}{\begin{equation}}
    \newcommand{\ee}{\end{equation}}
\newcommand{\bea}{\begin{eqnarray}}
    \newcommand{\eea}{\end{eqnarray}}
\newcommand{\beas}{\begin{eqnarray*}}
    \newcommand{\eeas}{\end{eqnarray*}}
\def\cDH{\cD(\cH)}
\newcommand{\ketbra}[2]{|#1\rangle\langle#2|}
\newcommand{\daniel}[1]{\color{orange}Daniel: #1\color{black}}
\begin{document}
\title{Eventually entanglement breaking Markovian dynamics: \\structure and characteristic times}
\author[1]{Eric P. Hanson}
\affil[1]{\small DAMTP, Centre for Mathematical Sciences, University of Cambridge, UK}

\author[2,4]{Cambyse Rouz\'{e}}
\affil[2]{\small Statistical Laboratory, Centre for Mathematical Sciences, University of Cambridge, UK}

\author[3]{Daniel Stilck Fran\c{c}a}

\affil[3]{QMATH, Department of Mathematical Sciences, University of Copenhagen, Universitetsparken 5, 2100 Copenhagen, Denmark}

\affil[4]{Department of Mathematics, Technische Universit\"at M\"unchen, 85748 Garching, Germany}

\date{February 21, 2019}

\maketitle

\begin{abstract}
    We investigate entanglement breaking times of Markovian evolutions in discrete and continuous time. In continuous time, we characterize which Markovian evolutions are \emph{eventually entanglement breaking}, that is, evolutions for which there is a finite time after which any entanglement initially present has been destroyed by the noisy evolution. In the discrete time framework, we consider the \emph{entanglement breaking index}, that is, the number of times a quantum channel has to be composed with itself before it becomes entanglement breaking. The PPT$^2$ conjecture is that every PPT quantum channel has an entanglement breaking index of at most 2; we prove that every faithful PPT quantum channel has a finite entanglement breaking index, and more generally, any faithful PPT CP map whose Hilbert-Schmidt adjoint is also faithful is eventually entanglement breaking. We also provide a method to obtain concrete bounds on this index for any faithful quantum channel. To obtain these estimates, we use a notion of \emph{robustness of separability} to obtain bounds on the radius of the largest separable ball around faithful product states. We also extend the framework
    of Poincar\'e inequalities for nonprimitive semigroups to the discrete setting to quantify the convergence of quantum semigroups in discrete time,  which is of independent interest.

\end{abstract}

\section{Introduction}\label{intro}

Distributing entangled quantum states using noisy quantum channels reliably and efficiently is one
of the fundamental challenges in quantum information theory, both from an experimental and theoretical point of  view.
Entanglement breaking channels, i.e. quantum channels that only output separable states when acting on one half
of a bipartite quantum state, are useless for such
non-classical communication protocols. If half of an entangled quantum state passes through a noisy channel several times, at what point does it lose its entanglement with the other half? In this work, we will prove several upper and lower bounds on this time in terms of properties of the channel; in other words, we establish bounds on the entanglement-breaking time of the channel.
This question arises naturally in the context of quantum repeaters~\cite{Bauml_2015,Christandl_2017} and such bounds limit their power to
implement non-classical communication protocols.

Mathematically, the situation can be modeled by means of continuous or discrete time \textit{quantum Markov semigroups} (QMS). The classification of the ability of QMS to preserve entanglement  both at finite time and asymptotically has recently received considerable attention.
The authors of~\cite{lami_entanglement-saving_2016} (see also \cite{kennedy2017composition})
took a more qualitative and asymptotic point of view, completely characterizing the class of discrete time QMS that do not become entanglement breaking asymptotically. Additionally, \cite{rahaman2018eventually} showed that certain classes of channels lead to eventually entanglement breaking QMS.

We further contribute to this classification by providing a simple characterization of the finite time and asymptotic entanglement behavior of continuous time QMS based on the structure of their \textit{decoherence-free} (DF) subalgebra, building upon the work of~\cite{lami_entanglement-saving_2016}. In particular, we show that the class of eventually entanglement breaking continuous-time QMS is precisely the one of primitive QMS, that is of those which possess a unique full-rank invariant state. On the other hand, continuous time QMS that break entanglement asymptotically are precisely those possessing a non-trivial, yet commutative DF subalgebra. Finally, QMS with a non-commutative DF subalgebra never break entanglement. We also manage to obtain quantitative lower and upper bounds on the entanglement-breaking time of a QMS: the former are obtained by exploring the connection between the spectrum of a quantum channel and whether its Choi matrix is of positive partial transpose (PPT). On the other hand, upper bounds are derived by exploiting the rapid-decoherence property of Markovian evolutions, which will provide us with a method to estimate how far the Choi matrix of the QMS at a given time is to its asymptotics. This property is usually obtained through functional inequalities for the underlying quantum channel and we relate our results to the most widely used in the literature, like the Poincar\' e inequality~\cite{TKRV10} or the logarithmic Sobolev inequality~\cite{kastotemme,bardet2017estimating,bardet2018hypercontractivity}. This combined with some knowledge about the geometry of the convex set of separable states \cite{gurvits2002} allows us to obtain quantitative bounds on the entanglement-breaking times.

The situation becomes more complicated in the discrete time setting. The asymptotic picture was developed by \cite{lami_entanglement-saving_2016}, where discrete time quantum Markov semigroups which preserve entanglement asymptotically were characterized. We continue this path by characterizing faithful quantum channels which are entanglement-breaking in finite time. To do so, we first study the ability of irreducible quantum channels~\cite{evans_spectral_1978,fagnola_irreducible_2009} to preserve entanglement. We show that there are irreducible quantum channels that only become entanglement breaking asymptotically and, in the case of quantum channels with an invariant state of full rank we obtain a converse, relating quantum channels that do not preserve entanglement asymptotically to irreducible channels. Moreover, again under the \textit{faithfulness} assumption, i.e. that of the existence of an invariant state of full rank, we show that quantum PPT channels become entanglement breaking after a finite number of iterations by exploring results on the structure of the spectrum of irreducible maps. We then extend this result to faithful completely positive maps by a suitable similarity transformation.
This generalizes the results of~\cite{rahaman2018eventually}, where the authors showed the statement for doubly stochastic channels.

However, in light of the PPT squared conjecture~\cite{Bauml_2015}, it is also desirable to obtain quantitative bounds to when a channel becomes entanglement breaking. So far, the conjecture was only proved for low-dimensional cases~\cite{Christandl2018,Chen2019} or for some particular families of quantum channels~\cite{Collins2018,kennedy2017composition}.
In~\cite{Christandl2018} the authors obtain upper bounds on the number of iterations in terms
of the Schmidt number of a channel. Here instead, we once again adopt an approach based on functional inequalities for discrete time QMS. Unfortunately, there is not a lot of work on functional inequalities for nonprimitive evolutions in discrete time~\cite{Szehr2015,Carbone2019}, which is particularly important in our setting. To start amending this gap in the literature and increase the class of examples our techniques apply to, we generalize the framework of Poincar\' e inequalities to discrete-time nonprimitive QMS. We  believe these techniques are of independent interest and will find applications elsewhere. Lower bounds are again found by use of the PPT criterion.

The lower and upper bounds we obtain with our techniques are tight up to constants for some classes of examples.
Moreover, we use similar techniques to consider the similar problems of when a
pair of QMS becomes entanglement annihilating~\cite{Morav_kov__2010}, how often one has to apply a doubly stochastic primitive quantum channel until it becomes mixed unitary and when the output of quantum Gibbs samplers are approximate quantum Markov networks.

\paragraph{Structure of the paper and contributions}
\begin{itemize}
    \item In \Cref{sec:cts}, we investigate the entanglement breaking properties of continuous time quantum Markov semigroups. We prove the set of primitive QMS coincides with the set of eventually entanglement breaking continuous time QMS in \Cref{propprimitivecontinuoustime}, and establish upper and lower bounds on the entanglement-breaking times in \Cref{tEB_upper} and \Cref{tEB_lower}. To establish the upper bounds, we use a notion of ``robustness of separability'' in \Cref{ballsseparable} and use strong decoherence bounds which we discuss in \Cref{strongdeco}. The lower bounds use the PPT criterion.
    \item In \Cref{sec3}, we investigate the structural properties of discrete time QMS. We identify classes of eventually entanglement breaking channels that are dense in the set of quantum channels in \Cref{thm:EEB_dense}, and discuss an application of this to the PPT$^2$ conjecture in \Cref{cor:PPT2-reduction-prim}. We discuss irreducible evolutions in \Cref{sec:irred}, prove they are asymptotically entanglement-breaking in the remarks following \Cref{prop:decomp-irred}, and establish necessary conditions for an irreducible map to be PPT, as well as sufficient conditions for an irreducible map to be entanglement-breaking, in \Cref{theoremEB}.  We then prove that PPT quantum channels with a full-rank invariant state are eventually entanglement-breaking in \Cref{cor:pptfulleeb}. We use these results to characterize faithful eventually entanglement breaking discrete time QMS in \Cref{prop:charEEB-discrete}. In \Cref{convergencedynamics}, we discuss the ``phase subspace'' of asymptotically entanglement-breaking channels, and in particular characterize irreducible quantum channels in terms of their phase subspace. In \Cref{sec:CPnonTP} we relate completely positive non-trace-preserving maps to quantum channels, and in \Cref{thm:CP_PPT_EB_discrete} establish weak conditions under which a completely positive PPT map is eventually entanglement breaking.
    \item In \Cref{sec:finite-time-discrete}, we establish finite-time properties of discrete time quantum Markov semigroups. In \Cref{strongmixing}, we investigate the so-called decoherence-free subalgebra of discrete-time evolution, and the contraction properties of the evolution with respect to this subalgebra, which is more subtle than in the continuous-time case. In \Cref{prop:discrete-Poincare}, we establish a discrete-time Poincar\'e inequality, which we use in \Cref{EBTdiscrete} to establish entanglement-breaking times, including for PPT channels with a full rank invariant state (\Cref{prop:PPT-EB_time}). We also provide a method to compute entanglement-breaking times for any faithful quantum channel in \Cref{rem:prim-eb-bound}. In \Cref{sec:tMU} we apply these techniques to establishing times for doubly-stochastic discrete time evolutions to become mixed unitary. We also discuss another application of continuous-time decoherence bounds, namely to approximate quantum Markov networks, in \Cref{sec:aQMN}.

\end{itemize}
\subsection{Notation, basic definitions and preliminaries}\label{sec:preliminaries}

\paragraph{States and norms}
Let $(\cH,\langle .|.\rangle)$ be a finite dimensional Hilbert space of dimension $d_\cH$.
We denote by $\cB(\cH)$ the Banach space of bounded operators on $\cH$, by $\cB_{\text{sa}}(\cH)$ the
subspace of self-adjoint operators on $\cH$, i.e. $\cB_{\text{sa}}(\cH)=\left\{X=\cB(\cH);\ X=X^\dagger\right\}$,
and by $\cB_{\text{sa}}^+(\cH)$ the cone of positive semidefinite operators on $\cH$,
where the adjoint of an operator $Y$ is written as $Y^\dagger$.
The identity operator on $\cH$ is denoted by $\mathbb{I}_\cH$, dropping the index $\cH$ when it is unnecessary.
Similarly, we will denote by $\id_{\cH}$, or simply $\id$, the identity superoperator on $\cB(\cH)$.
We denote by $\mathcal{D}(\cH)$ the set of positive semidefinite, trace one operators on $\cH$,
also called \emph{density operators}, and by $\cD_+(\cH)$ the subset of full-rank density operators. We denote by $|\Omega\rangle $ the maximally entangled state on $\cH\otimes \cH$: given any orthonormal basis $|i\rangle $ of $\cH$,
$$|\Omega\rangle :=\frac{1}{\sqrt{d_\cH}}\, \sum_{i=1}^{d_\cH} |i\rangle\otimes |i\rangle.$$
Given a bipartite system $\cH\otimes \mathcal{K}$, we denote by $\operatorname{SEP}\p \cH:\mathcal{K})$
the convex subset of separable states in $\cD\p\cH\otimes \mathcal{K})$.

For $p\ge 1$, the Schatten $p-$norm of an operator $A\in\cB(\cH)$ is denoted by $\|A\|_{p}:=(\tr|A|^p)^{\frac{1}{p}}$, where $|A|=\sqrt{A^\dagger A}$, and we denote by $\cT_p(\cH)$ the corresponding Schatten class. As usual, the operator norm is denoted by
$\|A\|_{\infty}$. We simply denote by $\|\Phi\|_{p\to q}$ the operator norm of a superoperator $\Phi$ from Schatten class $p$ to Schatten class $q$. Moreover, given a density matrix $\sigma\in\cD_+(\cH)$, we define the $\sigma-$weighted $p-$norm to be given by
$\|A\|_{p,\sigma}=\tr(|\sigma^\frac{1}{2p}A\sigma^\frac{1}{2p}|^p)^{\frac{1}{p}}$. In the case $p=2$, the $\|.\|_{2,\sigma}$ norm derives from an inner product $\langle A,\,B\rangle_\sigma:=\tr(\sigma^{\frac{1}{2}} A^\dagger\sigma^{\frac{1}{2}} B)$. The norms of superoperators between two such spaces are denoted by $\|\Phi\|_{p,\sigma\to q,\omega}$.
These norms provide a natural framework
to study the convergence of semigroups and we refer to e.g.~\cite{OLKIEWICZ1999246} for a review of some of their properties.

\paragraph{Quantum channels, Markovian evolutions and their spectrum}

Next, a \textit{quantum channel} is a completely positive, trace preserving map $\Phi:\cB(\cH) \to \cB(\cH)$. Given a linear map $\Phi:\cB(\cH)\to\cB(\cH)$, its  spectrum is denoted by $\spec(\Phi)$. We will exploit extensively the special structure of the spectrum of quantum channels and its connection to the semigroup's asymptotic behaviour, which we will now review in detail.
$\Phi^*$ corresponds to the dual map with respect to the Hilbert Schmidt inner product $\langle A,\,B\rangle_{\operatorname{HS}}:=\tr(A^\dagger B)$. Like any linear operator on $\cB(\cH)$, a quantum channel $ \Phi$ admits a Jordan decomposition:
\begin{equation}\label{Jordandecomp}
    \Phi=\Phi_P+\Phi_Q\,,~~~\Phi_P=\sum_{k:\,|\lambda_k|=1}\,\lambda_k P_k\,,~~~\Phi_Q=\sum_{k:\,|\lambda_k|<1}\lambda_k P_k+N_k\,,
\end{equation}
where $\lambda_k$ are the eigenvalues of $\Phi$, $P_k$ the associated (not necessarily orthogonal) eigenprojections, and $N_k^{d_k}=0$, where $d_k:=\tr(P_k)$, so that $\sum_{k=1}^K P_k=\mathbb{I}$. Note that we split the Jordan decomposition into two parts, $\Phi_P$ and $\Phi_Q$ and implicitly used the fact that all eigenvalues of quantum channels are contained in the unit disk of the complex plane. The operator $\Phi_P$ corresponds to the eigenvalues of modulus $1$, referred to as the \textit{peripheral spectrum}.
For any $k\in[K]$, $\lambda_kP_k+N_k$ constitutes the $k$-th Jordan block of $\Phi$. Here, we have used that the peripheral eigenvalues of a quantum channel are semi-simple, so there is no associated nilpotent part \cite{wolf2012quantum}.

In particular, since $\Phi$ is hermiticity preserving (and in particular positive), the eigenvalues of $\Phi$ either are real, or come in conjugate pairs. Since, moreover, $\Phi$ is positive unital ($\Phi(\mathbb{I})=\mathbb{I}$) or trace preserving ($\tr\Phi(A)=\tr(A)$ for all $A\in\cB(\cH)$), $ 1\in\spec(\Phi)$, all the other eigenvalues of $\Phi$ lie in the unit disc of the complex plane, and the eigenvalues lying on the peripheral spectrum are associated to one-dimensional Jordan blocks.

Given a quantum channel $\Phi:\cB(\cH)\to \cB(\cH)$, the sequence $\{\Phi^n\}_{n\in\NN}$ is called a \textit{discrete time quantum Markov semigroup} (discrete time QMS). Here, semigroup refers simply to the property that $\Phi^{n+m}=\Phi^n\circ \Phi^m$. Analogously, a \textit{continuous time quantum Markov process} (continuous time QMS) corresponds to a family $(\Phi_t)_{t\ge 0}$ of quantum channels $\Phi_t:\cB(\cH)\to\cB(\cH)$ that satisfies the following conditions: $\Phi_0 = \id$, $\Phi_{t+s} = \Phi_t\circ \Phi_s$ for any $s,t\in\R_0^+$, and $\Phi_t$ depends continuously on $t$. Any continuous time QMS can be written as $\Phi_t = e^{t\mathcal{L}}$ for a generator $\mathcal{L}:\cB(\cH) \to \cB(\cH)$.

We primarily consider discrete or continuous time QMS which are \emph{faithful}, that is, those that have a full-rank invariant state.
Let us recall basic ergodic properties of these evolutions.
The simplest case is that of \textit{primitive evolutions}. We call a quantum dynamical semigroup (or the Liouvillian generator) \textit{primitive} if it has a unique full rank fixed point $\sigma$. In this case for any initial state $\rho\in\cD(\cH)$
we have $\rho_t = e^{t\mathcal{L}}(\rho)\rightarrow \sigma$ as $t\rightarrow\infty$ (see \cite[Theorem 14]{ergodicchiribella}). We refer to \cite{ergodicchiribella,wolf2012quantum} for other characterizations of primitive channels and sufficient conditions for primitivity.

A notion closely related to primitivity is that of irreducibility. A positive linear map $\Phi:\cB(\cH)\to\cB(\cH)$ is said to be \textit{irreducible} if, for any orthogonal projection $P\in\cB(\cH)$, $\Phi(P\cB(\cH)P)\subset P\cB(\cH)P$ implies that $P=0$ or $P=\mathbb{I}$. For a positive trace-preserving map, this property is equivalent to the existence of a unique invariant state $\sigma > 0$ such that for every $\omega\in \cD(\cH)$, we have
\begin{equation} \label{eq:irred-erogdic-average}
    \lim_{N\to\infty}\frac{1}{N}\sum_{n=0}^{N-1} \Phi^n(\omega) = \sigma,
\end{equation}
by \cite[Cor. 6.3]{wolf2012quantum}. In the case of an irreducible, completely positive (in fact, for a Schwarz) map, it is known from Perron-Frobenius theory that the peripheral eigenvalues $\lambda_k$ in (\ref{Jordandecomp}) are non-degenerate and equal to $\phi^k$, where $\phi := \exp(2\iu \pi/z)$, for some fixed $z\le d_\cH$ (with $z=1$ if and only if the channel is primitive). Another useful criterion is that for a quantum channel, irreducibility is equivalent to $1$ being a non-degenerate eigenvalue with the corresponding eigenvector (being proportional to) a faithful quantum state. This makes irreducibility a good candidate for a property that implies that the quantum channel is EEB, although we will show that is generally not the case.

It is then natural to introduce the \textit{phase subspace} $\tilde{\cN}(\Phi)$  to study evolutions that are not necessarily primitive. It is defined as
\begin{align*}
    \tilde{\cN}(\Phi)=\text{span}\{X\in \cB(\cH):\exists \phi\in \R \text{ s.t. }\Phi(X)=e^{i\phi}X\},
\end{align*}
i.e. the linear span of the peripheral points and denote by $P$ the projection onto it.

More generally, given a \textit{faithful} continuous-time quantum Markov semigroup $(\Phi_t)_{t\ge 0}$ (i.e. for any $t\ge 0$, $\Phi_t(\sigma)=\sigma$ for a given full-rank state $\sigma$), there exists a completely positive, trace preserving map $E_\cN$, such that for any $\rho\in\cD(\cH)$, $\Phi_t(\rho)-\Phi_t\circ E_\cN(\rho)\to 0$ as $t\to\infty$ (see e.g. \cite{carbone_decoherence_2013}) and so that there exists a self-adjoint operator $H$ such that, $\Phi_t\circ E_\cN(\rho)=\e^{iHt}E_\cN(\rho)\e^{-iHt}$. In words, the evolution asymptotically behaves like a unitary evolution. On the other hand,  observables under the action adjoint semigroup $(\Phi_t^*)_{t\ge 0}$  converge towards a matrix subalgebra $\cN((\Phi_t^*)_{t\ge 0})$ of $\cB(\cH)$ of the following form
\begin{align}
    \cN((\Phi_t^*)_{t\ge 0}):=\bigoplus_{i\in\cJ}\,\cB(\cH_i)\otimes \mathbb{I}_{\cK_i}\,,\,\,\,\,\,\,\,\,\,\,\cH:=\bigoplus_{i\in\cJ}\,\cH_i\otimes\cK_i\,.
\end{align}
where the evolution is unitary. The completely positive map $E_\cN^*$ is a projection onto  $\cN((\Phi_t^*)_{t\ge 0})$. Moreover, $E_\cN^*$ is a \textit{conditional expectation onto $\cN((\Phi_t^*)_{t\ge 0})$}, meaning it is a positive linear unital map $E_{\cN}^*: \cB(\cH) \to \cN((\Phi_t^*)_{t\ge 0})$ such that $E_{\cN}^*(YXZ) = Y E_{\cN}^*(X) Z$ for all $Y,Z \in \cN((\Phi_t^*)_{t\ge 0})$ and $X \in \cB(\cH)$.
In the case of a non-necessarily faithful quantum channel $\Phi:\cB(\cH)\to \cB(\cH)$, the phase subspace is similarly known to possess the following structure (Theorem 6.16 of \cite{wolf2012quantum}, Theorem 8 of \cite{wolf2010inverse}): there exists a decomposition of $\cH$ as $\cH=\bigoplus_{j=1}^K\cH_j\otimes \cK_j\oplus\cK_0$ such that
\begin{align} \label{eq:decohere-decomp}
    \tilde{\cN}(\Phi):=\bigoplus_{i=1}^K\,\cB(\cH_i)\otimes \tau_i\oplus 0_{\cK_0}\,,~~~P(\rho\oplus 0_{\cK_0})=\sum_{i=1}^K\tr_{\cK_i}(p_i\rho p_i)\otimes \tau_i\,,
\end{align}
where $p_i$ is the orthogonal projector onto the $i$-th subspace, for some fixed states $\tau_i\in\cD_+(\cK_i)$. Moreover, there exist unitaries $U_i \in \cH_i$, and a permutation $\pi \in S_K$ which permutes within subsets of $\{1,\dotsc,K\}$ for which the corresponding $\cH_i$'s have equal dimension such that for any element $X\in \tilde \cN(\Phi)$, decomposed as $X=\bigoplus_{i=1}^K\,X_i \otimes \tau_i\oplus 0_{\cK_0}$ according to \eqref{eq:decohere-decomp}, we have
\begin{equation} \label{eq:Phi-on-decohere-decomp}
    \Phi(X) = \bigoplus_{i=1}^K\, U_i X_{\pi(i)} U_i^\dagger \otimes \tau_i\oplus 0_{\cK_0}.
\end{equation}
Note that the space $\cK_0\not=\{0\}$ if and only if the quantum channel is not faithful.
\paragraph{Entanglement loss}

A completely positive map $\Phi:\cB(\cH)\to\cB(\cH)$ is called \textit{entanglement breaking} if $\Phi\otimes\id_{\cH}(\rho)$ is a separable state for all input states $\rho\in\cD\p\cH\otimes \cH)$. This is  equivalent to the Choi matrix of the channel being separable~(see \cite{Horodecki2003} for the proof in the trace-preserving case; standard arguments can extend the result to the general case), where the Choi matrix of $\Phi$ is defined as
\begin{align}\label{choijamil}
    J(\Phi):=   d_\cH\,(\Phi\otimes\id_{\cH})(\ketbra{\Omega}{\Omega}).
\end{align}
The class of entanglement breaking channels on $\cH$ is denoted by $\operatorname{EB}(\cH)$. The map $\Phi\mapsto J(\Phi)$ is a linear isometry between the Hilbert space of linear operators on $\cB(\cH)$ equipped with the Schatten 2-norm and the Hilbert space of linear operators on $\cH\otimes \cH$, equipped also with the Schatten 2-norm:
\begin{equation}\label{eq:J-preserves-2norm}
    \|\Phi\|_2 := \sqrt{\tr \Phi^*\Phi}  = \| J(\Phi)\|_2 := \sqrt{\tr[ J(\Phi)^*J(\Phi)]},
\end{equation}
as the $2$-norm is invariant under coordinate permutations.
The map $\Phi:\cB(\cH) \to \cB(\cH)$ is of \textit{positive partial transpose} (PPT) if $(\cT\circ \Phi)\otimes\id_{\cH}$ is a positive operator, where $\cT$ is the partial transpose w.r.t. to some basis. The class of PPT channels on $\cH$ is called $\operatorname{PPT}(\cH)$. As the set of separable bipartite states is a subset of the set of bipartite states with a positive partial transpose, it follows that $\operatorname{EB}(\cH)\subset\operatorname{PPT}(\cH)$.

More generally, we also consider quantum Markovian evolutions on a bipartite Hilbert space $\cH_A\otimes \cH_B$. We call a bipartite quantum channel $\cT : \cB(\cH_{A}\otimes \cH_{B}) \to \cB(\cH_{A}\otimes \cH_{B})$
\emph{entanglement annihilating} if its output is separable (on any density matrix input).
Similarly, a quantum channel on a single-partite Hilbert space $\Phi: \cB(\cH)\to \cB(\cH)$ is called 2-\textit{locally entanglement annihilating} if $\Phi\otimes \Phi$ is entanglement annihilating. The class of bipartite, entanglement annihilating channels on $\cH_A\otimes\cH_B$ is denoted by $\operatorname{EA}(\cH_A,\cH_B)$, whereas the class of  $2$-locally entanglement annihilating channels on $\cH$ is denoted by $\operatorname{LEA}_2(\cH)$. Note that any quantum channel that is entanglement breaking is also $2$-locally entanglement annihilating: $\operatorname{EB}(\cH)\subset\operatorname{LEA}_2(\cH)$.

In this paper, we study the entanglement properties of quantum Markovian evolutions in discrete and continuous time. A discrete time QMS $\{\Phi^n\}_{n\in\NN}$, resp. a continuous time QMS $(\Phi_t)_{t\ge 0}$, is said to be \textit{eventually entanglement breaking} (EEB) if there exists $n_0\in\NN$, resp. $t_0\ge0$, such that for any $n\ge n_0$, resp. $t\ge t_0$, $\Phi^n$, resp. $\Phi_t$, is entanglement breaking. The class of eventually entanglement breaking Markovian evolutions is denoted by $\operatorname{EEB}(\cH)$, leaving the choice of framework of discrete or continuous time implicit. We also say a quantum channel $\Phi$ is eventually entanglement breaking if the discrete QMS $\{\Phi^n\}_{n=1}^\infty$ is eventually entanglement breaking. On the other hand, Markovian evolutions which are not entanglement breaking at any finite time are called \emph{entanglement saving}, using language introduced by Lami and Giovannetti \cite{lami_entanglement-saving_2016}; the class of entanglement saving Markovian evolutions is denoted by $\operatorname{ES}(\cH)$. Thus, the set of all Markovian evolutions (in either discrete time or in continuous time) decomposes into two disjoint classes,
\begin{equation} \label{eq:EEB-ES-decomp}
    \EEB(\cH) \sqcup \ES(\cH).
\end{equation}
Lami and Giovannetti also introduce the notion of \emph{asymptotically entanglement saving} evolutions in the discrete-time case. They showed that every discrete time QMS has at least one limit point, and either all of the limit points of a discrete time QMS $\{\Phi^n\}_{n=1}^\infty$ are entanglement breaking, or none of them are. They term the latter case as asymptotically entanglement saving, and we denote the set of asymptotically entanglement saving evolutions on $\cH$ as $\AES(\cH)$. In analogy, we call the former case by \emph{asymptotically entanglement breaking}, denoted $\AEB(\cH)$. Thus, the set of discrete time QMS on $\cH$ decomposes into the disjoint classes
\begin{equation}\label{eq:AES-AEB-decomp}
    \AES(\cH)\sqcup \AEB(\cH).
\end{equation}
It is interesting to compare \eqref{eq:EEB-ES-decomp} and \eqref{eq:AES-AEB-decomp}. A discrete time QMS $\{\Phi^n\}_{n=1}^\infty$ is AES if the limit points of the sequence $\{J(\Phi^n)\}_{n=1}^\infty$ are all entangled. Since $J(\Phi^{n+1}) = \Phi\otimes \id (J(\Phi^n))$, if $J(\Phi^{n+1})$ is entangled, $J(\Phi^n)$ must be as well. In particular, if $\{\Phi^n\}_{n=1}^\infty\in \AES(\cH)$, then $J(\Phi^n)$ is entangled for every $n$, and the discrete time QMS is entanglement saving. So we see $\AES(\cH)\subset \ES(\cH)$. However, a priori, an entanglement saving QMS could be asymptotically entanglement breaking: at any finite $n$, $J(\Phi^n)$ could be entangled, but in the limit, $J(\Phi^n)$ could be in the set of separable states (though necessarily on the boundary). We therefore define $\EB_\infty(\cH) = \AEB(\cH) \cap \ES(\cH)$, the set of discrete time QMS which are asymptotically entanglement breaking, but not entanglement breaking for any finite $n$. With this notion, we may relate \eqref{eq:EEB-ES-decomp} and \eqref{eq:AES-AEB-decomp}. We have the disjoint decomposition of the set of all discrete time QMS, denoted $\operatorname{dQMS}(\cH)$, satisfies
\begin{equation} \label{eq:QMC-full-decomp}
    \operatorname{dQMS}(\cH) = \UOLoverbrace{\EEB(\cH) \operatorname{\sqcup}}[\EB_\infty(\cH)]^{\AEB(\cH)}  \UOLunderbrace{ \sqcup \AES(\cH)}_{\ES(\cH)}.
\end{equation}
\begin{remark}
    The isomorphism between bipartite states and CP maps given by the Choi matrix and the equivalence between EB maps and separable states discussed before allows us to directly translate results on the entanglement loss of CP maps to statements about the separability of a bipartite state. We will mostly state our results in the picture of CPTP maps (see also \Cref{sec:CPnonTP} for a method to remove the TP assumption), but it should be straightforward to derive the corresponding statements for bipartite states.
\end{remark}

Some of the definitions introduced in this chapter, along with a preview of some of the results, are depicted diagrammatically in \Cref{fig:diagram}.

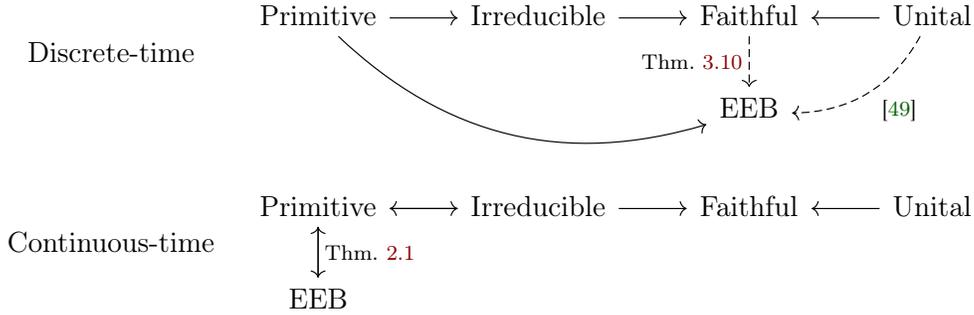
\begin{figure}
    \centering
    \begin{tabular}{cc}
        Discrete-time   & \begin{tikzcd}
            \text{Primitive} \rar \arrow[drr, bend right] & \text{Irreducible} \rar & \text{Faithful} \arrow[d, dashed, "\text{ \Cref{cor:pptfulleeb}}"'] & \lar \text{Unital} \arrow[ld, bend left, dashed, "\text{ \cite{rahaman2018eventually}}"]\\
            & & \text{EEB}
        \end{tikzcd} \\[1.5cm]
        Continuous-time &
        \begin{tikzcd}
            \text{Primitive} \rar \dar & \lar \text{Irreducible} \rar & \text{Faithful} & \lar \text{Unital} \\
            \text{EEB} \arrow[u, "\text{\Cref{propprimitivecontinuoustime}}"']
        \end{tikzcd}
    \end{tabular}
    \caption{Relations between classes of quantum Markov semigroups, in which arrows represent subsets; e.g., primitive discrete-time quantum Markov semigroups are a subset of irreducible discrete-time quantum Markov semigroups. The dashed arrows indicate relations which only hold for quantum Markov semigroups $\{\Phi^n\}_{n=1}^\infty$ associated to a PPT channel $\Phi$. The arrows without annotations are standard and are discussed in the text.}
    \label{fig:diagram}
\end{figure}

Let us now consider an example to illustrate these definitions and show that all sets arise naturally in physical systems.

\begin{example} \label{ex:RWA}
    Consider a discrete time quantum Markov semigroup $\{\Phi^n\}_{n\in\NN}$ associated to a \emph{repeated interaction system}, in which a system $\sys$ interacts with a chain of identical \emph{probes} $\env_k$, one at a time, for a duration $\tau$. During the interaction, the dynamics of the system are modeled by a Hamiltonian evolution, and at the times $(k \tau)_{k\geq 1}$, the evolution forms a semigroup. In this example, the system and each probe are 2-level systems, with associated Hilbert spaces $\cH_\sys = \cH_\env = \bC^2$. We define Hamiltonians $h_\sys = E a^*a$ and $h_\env = E_0 b^*b$, where $a/a^*$, resp. $b/b^*$, are the annihilation/creation operators for $\sys$, resp. $\env$, and $E\in \R_{>0}$ (resp. $E_0\in \R_{>0}$) corresponds to the energy of the excited state of $\sys$ (resp. $\env$). We can express these operators in the (ground state, excited state) basis of each system by
    \[
        a = b= \begin{pmatrix}
            0 & 1 \\ 0 & 0
        \end{pmatrix}, \qquad a^* = b^* = \begin{pmatrix}
            0 & 0 \\ 1 & 0
        \end{pmatrix}, \qquad a^*a = b^*b = \begin{pmatrix}
            0 & 0 \\ 0 & 1
        \end{pmatrix}_.
    \]
    We consider the initial state of each probe to be a \emph{thermal state},
    \[
        \xi_\beta  = \frac{\exp(-\beta h_\env)}{\tr[\exp(-\beta h_\env)]} = \begin{pmatrix}
            \frac{1}{1 + \e^{-\beta E_0}} & 0 \\ 0 & \frac{\e^{-\beta E_0}}{1+\e^{-\beta E_0}}
        \end{pmatrix}
    \]
    where $\beta \in [0,\infty]$ represents the inverse temperature (setting Boltzmann's constant to one). In the case of zero-temperature ($\beta=\infty$), we take
    \[
        \xi_\infty = \lim_{\beta\to\infty} \xi_\beta = \begin{pmatrix}
            1 & 0 \\ 0 & 0
        \end{pmatrix}.
    \]
    We consider an interaction modeling the two systems coupling through their dipoles, in the \emph{rotating wave approximation}. In this setting, the system and each probe interact via the potential $\lambda v_\text{RW} \in \cB(\cH_\sys \otimes \cH_\env)$, where $\lambda \geq 0$ is a coupling constant, and
    \[
        v_\text{RW} = \frac{u_1}{2}(a^* \otimes b + a \otimes b^*)
    \]
    where $u_1$ is a constant, which we take to be equal to $1$ with units of energy.
    This is a common approximation in the regime $|E - E_0 | \ll \min\{E,E_0\}$ and $\lambda \ll |E_0|$.

    The system begins in some state $\rho\init$, couples with the first probe (in thermal state $\xi_\beta$), and evolves for a time $\tau >0$ according to the unitary operator
    \[
        U := \exp(-i\tau (h_\sys \otimes \id + \id \otimes h_\env + \lambda v_\text{RW})).
    \]
    That is, $\rho\init\otimes \xi_\beta$ evolves to $U\,( \rho\init \otimes \xi_\beta )\,U^*$. Then we trace out the probe to obtain
    \[
        \rho_1 := \tr_\env( U\,( \rho\init \otimes \xi_\beta )\,U^*).
    \]
    This process is repeated, and at the end of the $k$th step, the system is in the state
    \[
        \rho_k =  \tr_\env( U \,(\rho_{k-1} \otimes \xi_\beta )\,U^*) = \Phi^k(\rho\init)
    \]
    where $\Phi$ is the quantum channel given by
    \[
        \Phi(\eta) := \tr_\env( U(\, \eta \otimes \xi_\beta\,) U^*) = \Phi^k(\rho\init).
    \]
    What class in the decomposition \eqref{eq:QMC-full-decomp} does the discrete time QMS $\{\Phi^n\}_{n\in\NN}$ lie in? To answer that question, we first compute the eigenvectors and eigenvalues of $\Phi$, yielding
    \[
        \Phi(\rho_{\beta^*}) = \rho_{\beta^*}, \qquad \Phi(a) = \gamma a, \qquad \Phi(a^*) = \bar \gamma a^*, \qquad \Phi(\sigma_z) = |\gamma|^2 \sigma_z
    \]
    where $\sigma_z = \begin{pmatrix}
            1 & 0 \\ 0 & -1
        \end{pmatrix}$ is the Pauli-$z$ matrix,
    \[
        \gamma := \e^{-\frac{1}{2}i \tau (E_0 + E)} \left(\cos \frac{\tau \nu}{2} + i \frac{(E_0 - E)}{\nu} \sin \frac{\tau\nu}{2}\right)
    \]
    for $\nu := \sqrt{ (E_0 -E)^2 + \lambda^2}$,
    and, defining the rescaled inverse temperature $\beta^* := \frac{E_0}{E}\beta$,
    \[
        \rho_{\beta^*} := \begin{pmatrix}
            \frac{1}{1 + \e^{-\beta^* E}} & 0 \\ 0 & \frac{\e^{-\beta^* E}}{1+\e^{-\beta^* E}}
        \end{pmatrix}=  \xi_\beta
    \]
    is a thermal state on $\cH_\sys$ represented by the same matrix as $\xi_\beta$. In the case $\beta=\infty$, we set $\rho_{\beta^*} := \begin{pmatrix}
            1 & 0 \\ 0 & 0
        \end{pmatrix}= \xi_\infty$. Note
    \[
        |\gamma| = \sqrt{ 1 - \frac{\lambda^2}{\nu^2}\sin^2 \left( \frac{\nu \tau}{2} \right) } \leq 1
    \]
    as required, since $\Phi$ is a CPTP map.
    In \cite[Prop. 23]{lami_entanglement-saving_2016} the authors show that a qubit channel is AES if and only if it is a unitary channel, which in turn is equivalent to it having determinant $1$.
    Thus, we have $\{\Phi^n\}_{n\in\NN} \in \AES(\cH)$ iff $|\gamma|=1$.

    To analyze whether or not $\Phi^n$ is entanglement-breaking, it suffices to check if $J(\Phi^n)$ is PPT, as $\Phi$ is a qubit channel. To that end, define the Gibbs factor $g = \exp(-\beta E_0)$ and  partition function $Z = 1 + g$. Then we have
    \[
        \rho_{\beta^*} +g Z\inv \sigma_z= \begin{pmatrix}
            1 & 0 \\ 0 & 0
        \end{pmatrix}, \qquad \rho_{\beta^*}  -  Z\inv \sigma_z = \begin{pmatrix}
            0 & 0 \\ 0 & 1
        \end{pmatrix}_.
    \]
    Thus, taking $(\ket{0},\ket{1})$ to be the (ground state, excited state) basis for each system,
    \begin{align*}
        (\id \otimes \mathcal{T}) J(\Phi^n) & = \sum_{i,j=0}^1 \Phi^n(\ket{i}\bra{j})\otimes \ket{j}\bra{i} \\
                                            & = \begin{pmatrix}
            \Phi^n(\rho_{\beta^*} + Z\inv \sigma_z) & \Phi^n(a^*)                                \\
            \Phi^n(a)                               & \Phi^n(\rho_{\beta^*}  - g Z\inv \sigma_z)
        \end{pmatrix}                                  \\
                                            & = \begin{pmatrix}
            \rho_{\beta^*}+|\gamma|^{2n} Z\inv\sigma_z) & \bar \gamma^n a^*                              \\
            \gamma^n a                                  & \rho_{\beta^*}- |\gamma|^{2n}g Z\inv \sigma_z)
        \end{pmatrix}                                  \\
                                            & = \begin{pmatrix}
            Z\inv( g+ |\gamma|^{2n}) & 0                          & 0                          & 0                        \\
            0                        & Z\inv ( 1 - |\gamma|^{2n}) & \bar \gamma^n              & 0                        \\
            0                        & \gamma^n                   & Z\inv g(1 - |\gamma|^{2n}) & 0                        \\
            0                        & 0                          & 0                          & Z\inv(1+g |\gamma|^{2n})
        \end{pmatrix}
    \end{align*}
    which has eigenvalues
    \[
        Z\inv( g+ |\gamma|^{2n}), \quad Z\inv(1+g |\gamma|^{2n}), \quad \frac{1 - |\gamma|^{2n}}{2} \pm \frac{1}{2}\sqrt{ (1 - |\gamma|^{2n})^2 \left( \frac{g - g\inv}{g + g\inv} \right)^2 + 4 |\gamma|^{2n} }
    \]
    for $g\in (0,1]$, and
    \[
        1, \quad 1, \quad \pm |\gamma|^{2n}
    \]
    for $g=0$.
    In either case, the eigenvalues only depend on the independent parameters $|\gamma| \in [0,1]$ and $g \in [0,1]$. Since $(\id \otimes \mathcal{T}) J(\Phi^n) \geq 0$ is equivalent to $\Phi^n \in \EB(\cH)$, we find
    \begin{itemize}
        \item If $|\gamma| = 1$, $\{\Phi^n\}_{n\in\NN} \in \AES(\cH)$. This occurs when $\nu\tau\in 2\pi\bZ$; in this case, $\Phi$ is a unitary channel.
        \item If $|\gamma| = 0$, then $\Phi\in \EB(\cH)$. This occurs when $E=E_0$, and $\nu\tau\in\pi + 2\pi \bZ$. In this case, $\Phi =\rho_{\beta^*}\tr[\cdot]$.
        \item If $|\gamma| \in (0,1)$ and $g=0$, $\{\Phi^n\}_{n\in\NN} \in \EB_\infty(\cH)$.  In this case, $\beta=\infty$ and $\Phi$ has a unique peripheral eigenvalue, namely $1$, with non-full-rank invariant state, $\ket{1}\bra{1}$.
        \item If $|\gamma|\in (0,1)$ and $g\in (0,1]$,  then $\{\Phi^n\}_{n=1}^\infty \in \EEB(\cH)$, and in particular, the minimal $n$ such that $\Phi^n \in \EB(\cH)$ is given by $n =\max(1, \ceil{ \frac{1}{2}\frac{\log B(g)}{\log|\gamma|}})$ where
              \[
                  B(g) := \frac{1 + 4 g^2 + g^4 - (1 + g^2) \sqrt{1 + 6 g^2 + g^4}}{2 g^2} \in (0, 3-2 \sqrt{2}] \quad \text{for}\quad g\in (0,1].
              \]
              In this case, $\Phi$ is primitive, with faithful invariant state $\rho_{\beta^*}$.
    \end{itemize}
\end{example}

\section{Continuous time quantum Markov semigroups} \label{sec:cts}

In this section, we exclusively investigate the simpler case of continuous time quantum Markov processes. It turns out that in this case, there is a simple characterization of $\EEB(\cH)$:

\begin{theorem}\label{propprimitivecontinuoustime}
    A continuous time quantum Markov semigroup $(\Phi_t)_{t\ge 0}$ is eventually entanglement breaking if and only if it is primitive.
\end{theorem}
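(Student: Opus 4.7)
For ($\Leftarrow$), assume $(\Phi_t)_{t\geq 0}$ is primitive with unique full-rank invariant state $\sigma$. Then $\Phi_t(\rho)\to \sigma$ for every state $\rho$, and by linearity $\Phi_t(X)\to \sigma\cdot \tr(X)$ for every $X\in\cB(\cH)$. Applying this componentwise to the expansion $\ketbra{\Omega}{\Omega} = \frac{1}{d_\cH}\sum_{ij} \ketbra{i}{j}\otimes \ketbra{i}{j}$ gives $\lim_{t\to\infty}(\Phi_t\otimes \id)(\ketbra{\Omega}{\Omega}) = \sigma\otimes \mathbb{I}/d_\cH$, a full-rank product state. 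By the quantitative separability-ball results around faithful product states (Gurvits--Barnum \cite{gurvits2002}, sharpened in \Cref{ballsseparable}), there is $\varepsilon > 0$ such that every state within trace-distance $\varepsilon$ of $\sigma\otimes \mathbb{I}/d_\cH$ is separable; for $t$ large enough the Choi state falls into this ball and $\Phi_t\in \EB(\cH)$, so $(\Phi_t)_{t\geq 0}$ is EEB.

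For ($\Rightarrow$), I argue contrapositively. The key closure property is that $\EB(\cH)$ is stable under composition with CPTP maps on either side, so $\Phi_{t_1}\in \EB$ implies $\Phi_t = \Phi_{t-t_1}\circ \Phi_{t_1}\in \EB$ for all $t\geq t_1$; since $\EB(\cH)$ is topologically closed, every accumulation point of $(\Phi_t)$ at $t\to\infty$ must also lie in $\EB(\cH)$. Thus it suffices to produce, for any non-primitive $(\Phi_t)$, a non-EB channel in the $\omega$-limit set---or, in the subtle cases below, to show $\Phi_t$ fails even to be PPT at every finite $t$. If $(\Phi_t)$ is faithful but not primitive, the DF subalgebra $\cN((\Phi_t^*))$ is non-trivial and, by the structure recalled in \Cref{sec:preliminaries}, the asymptotic dynamics are $\Phi_t \simeq \mathrm{Ad}_{e^{iHt}}\circ E_\cN$. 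Since local-unitary conjugation preserves entanglement, the $\omega$-limit set of $J(\Phi_t)$ is a local-unitary orbit of $J(E_\cN)$; and when $\cN$ contains a block $\cB(\cH_i)\otimes \mathbb{I}_{\cK_i}$ with $\dim \cH_i\geq 2$, $E_\cN$ acts as an identity channel on a $\geq 2$-dimensional subsystem and is therefore not EB---contradiction.

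The hard part will be the remaining cases, where the asymptotic Choi matrix is itself already separable: a commutative but non-trivial DF subalgebra (dephasing prototype), and a non-faithful semigroup with invariant state supported on a proper subspace (amplitude-damping prototype). In both situations the $\omega$-limit sits on the boundary of $\SEP$ and closedness of $\EB$ yields no direct contradiction. I would handle these by choosing an eigenvector $X$ of $\cL$ with purely imaginary eigenvalue---forced to exist by non-primitivity via either an extra peripheral eigenvalue, a degenerate zero, or a proper invariant subspace---and using it to build a partial-transpose witness $W$ for which $\tr[W(\cT\otimes \id)J(\Phi_t)] < 0$ at every $t\geq 0$. The quantitative spectral lower bounds on $t_{\EB}$ established in \Cref{tEB_lower} are presumably tailored precisely to make this finite-$t$ non-PPT obstruction go through in the borderline cases.
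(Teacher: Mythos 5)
Your forward direction is essentially identical to the paper's: primitivity gives $(\Phi_t\otimes\id)(\ketbra{\Omega}{\Omega})\to\sigma\otimes\mathbb{I}/d_\cH$, and \Cref{lemma1} (the Gurvits--Barnum separable ball around a faithful product state) finishes it. No issues there.

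The converse, however, has a genuine gap, and you have located it yourself: the cases where the asymptotic Choi matrix is separable but sits on the boundary of $\SEP$ (commutative nontrivial decoherence-free algebra, e.g.\ dephasing, and the non-faithful case, e.g.\ amplitude damping). For these you only offer a plan --- ``choose an eigenvector of $\cL$ \dots and build a partial-transpose witness valid for all $t$'' --- without constructing the witness or proving it stays negative for every finite $t$. The appeal to \Cref{tEB_lower} cannot close this: that proposition only yields a \emph{finite} lower bound $t_{\EB}>t_0$ with $t_0$ determined by the spectrum of $\cL$, whereas what you need is that $\Phi_t\notin\EB(\cH)$ for \emph{every} $t$, i.e.\ membership in $\ES(\cH)$ (in fact $\EB_\infty(\cH)$ for these examples). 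Note also that in the dephasing case $\cL$ has no nonzero purely imaginary eigenvalue at all --- the obstruction there is a degenerate zero eigenvalue --- so the witness construction would have to be argued separately in each of your subcases.

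The paper closes the converse in one stroke by exploiting a fact you did not use: $\Phi_t=e^{t\cL}$ is \emph{invertible} as a linear map for every $t$. For invertible channels, the spectral characterization of entanglement-saving channels from \cite{lami_entanglement-saving_2016} (Theorem 11 there; the kernel-dimension hypothesis is vacuous for invertible maps) shows that being entanglement breaking at some finite time forces a single non-degenerate peripheral eigenvalue together with a full-rank fixed point, which in continuous time is equivalent to primitivity by Proposition 7.5 of \cite{wolf2012quantum}. This uniformly disposes of exactly the borderline cases where your argument stalls. If you want to avoid citing that characterization, you must supply the missing witness argument; as written, the converse is not proved.
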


In order to prove the above result, we first need the following straightforward extension of Theorem 1 of \cite{gurvits2002}.
\begin{lemma}\label{lemma1}
    Let $\sigma,\,\omega > 0$. Then  $\omega\otimes \sigma + \Delta$ is separable for any Hermitian operator $\Delta$ such that $\|\Delta\|_2\le \lambda_{\min}(\sigma)\,\lambda_{\min}(\omega)$, where $\lambda_{\min}(\sigma)$, resp. $\lambda_{\min}(\omega)$, stands for the smallest eigenvalue of $\sigma$, resp. $\omega$.
\end{lemma}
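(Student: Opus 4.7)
The plan is to reduce the statement to the standard Gurvits--Barnum ball around the maximally mixed state (the form cited as Theorem~1 of \cite{gurvits2002}), which asserts that $\mathbb{I}_{\cH\otimes\cH} + M$ is (proportional to) a separable state whenever $M$ is Hermitian with $\|M\|_2\le 1$. The key observation is that a product state $\omega\otimes\sigma$ with $\omega,\sigma>0$ can be conjugated into the identity by the invertible map
\[
    T(X) := (\omega^{-1/2}\otimes\sigma^{-1/2})\,X\,(\omega^{-1/2}\otimes\sigma^{-1/2}),
\]
and this conjugation preserves separability in both directions, since rank-one product states are sent to (scalar multiples of) rank-one product states.

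First I would verify the structural step: writing $\rho := \omega\otimes\sigma + \Delta$, I compute
\[
    T(\rho) = \mathbb{I}_{\cH\otimes\cH} + \widetilde{\Delta}, \qquad \widetilde{\Delta} := (\omega^{-1/2}\otimes\sigma^{-1/2})\,\Delta\,(\omega^{-1/2}\otimes\sigma^{-1/2}),
\]
with $\widetilde{\Delta}$ Hermitian. If I can show $\mathbb{I}+\widetilde{\Delta}$ is (an unnormalized) separable operator, then applying the inverse map $T^{-1}(Y) = (\omega^{1/2}\otimes\sigma^{1/2})\,Y\,(\omega^{1/2}\otimes\sigma^{1/2})$ recovers $\rho$ as a convex combination of positive operators of the form $\omega^{1/2}|a\rangle\langle a|\omega^{1/2}\otimes\sigma^{1/2}|b\rangle\langle b|\sigma^{1/2}$, hence separable.

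Next I would estimate $\|\widetilde{\Delta}\|_2$ by the multiplicativity of the Schatten norms under tensor products and the submultiplicative inequality $\|AXB\|_2\le \|A\|_\infty\|X\|_2\|B\|_\infty$, yielding
\[
    \|\widetilde{\Delta}\|_2 \;\le\; \|\omega^{-1/2}\|_\infty^{2}\,\|\sigma^{-1/2}\|_\infty^{2}\,\|\Delta\|_2 \;=\; \frac{\|\Delta\|_2}{\lambda_{\min}(\omega)\,\lambda_{\min}(\sigma)}.
\]
Under the hypothesis $\|\Delta\|_2\le \lambda_{\min}(\omega)\lambda_{\min}(\sigma)$ this gives $\|\widetilde{\Delta}\|_2\le 1$, so Gurvits--Barnum applies to $\mathbb{I}+\widetilde{\Delta}$.

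The argument is essentially routine once the right conjugation is identified; the only point requiring mild care is that the Gurvits--Barnum statement is formulated for the maximally mixed state while here we work with an arbitrary faithful product state, and that $T$ does not preserve the trace. Both issues are handled by working with \emph{unnormalized} separable operators throughout and only normalizing at the end (or, equivalently, by noting that separability is invariant under positive rescaling and under conjugation by any operator of product form).
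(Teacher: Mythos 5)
Your proposal is correct and is essentially identical to the paper's own proof: both conjugate by $\omega^{-1/2}\otimes\sigma^{-1/2}$ to reduce to the Gurvits--Barnum ball around the identity, and both use the bound $\|\omega^{-1/2}\otimes \sigma^{-1/2}\,\Delta\,\omega^{-1/2}\otimes\sigma^{-1/2}\|_2\le\|\omega^{-1}\|_\infty\|\sigma^{-1}\|_\infty\|\Delta\|_2$. Your extra care about the conjugation preserving separability in both directions and about working with unnormalized separable operators is a welcome clarification but does not change the argument.
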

\begin{proof}
    Theorem 1 of \cite{gurvits2002} shows that if $\|\tilde \Delta\|_2 \leq 1$ then $I\otimes I + \tilde \Delta \in \SEP$. Since
    \[
    \omega\otimes \sigma + \Delta \in \SEP \iff I\otimes I + \omega^{-1/2}\otimes \sigma^{-1/2} \Delta \omega^{-1/2}\otimes \sigma^{-1/2} \in \SEP
    \]
    the proof is concluded by the fact that $\|\omega^{-1/2}\otimes \sigma^{-1/2} \,\Delta\, \omega^{-1/2}\otimes \sigma^{-1/2}\|_2 \leq \|\omega\inv\|_\infty \|\sigma\inv\|_\infty\|\Delta\|_2$.
\end{proof}

Now, we proceed to prove \Cref{propprimitivecontinuoustime}.

\begin{proof}~[\Cref{propprimitivecontinuoustime}]
    First assume that $(\Phi_t)_{t\ge 0}$ is primitive. Then there exists a full-rank state $\sigma\in\cD(\cH)$ such that, for any $\rho\in\cD(\cH)$, $\Phi_t(\rho)\to \sigma$ as $t\to\infty$. Therefore,
    \begin{align*}
        (\Phi_t\otimes\id)(|\Omega\rangle\langle\Omega|)\to \sigma\otimes d_\cH^{-1}\,\mathbb{I}\,,
    \end{align*}
    since $\Phi_t$ is trace-preserving for each $t$. The result follows by \Cref{lemma1}, which implies the existence of $t_0>0$ such that for any $t\ge t_0$, $(\Phi_t\otimes\id)(|\Omega\rangle\langle\Omega|)$ is in a nonempty ball around $\sigma\otimes d_{\cH}^{-1}\,\mathbb{I}$ consisting of separable states.
    Conversely, assume that $(\Phi_t)_{t\ge 0}\in\operatorname{EEB}(\cH)$, i.e. there exists $n\in\NN$ such that $\Phi_n=\Phi_1^n$ is entanglement breaking.
    Note that for any $t\ge 0$ the map $\Phi_t$ is invertible.
    In \cite[Theorem 11]{lami_entanglement-saving_2016}, the authors show that for invertible maps, being AES is equivalent to having more than one eigenvalue in the peripheral spectrum. As we assume $\Phi_t$ to be EEB, we conclude that $\Phi_t$ has a unique stationary state.
    In Proposition 7.5 of \cite{wolf2012quantum} this is shown to imply that the evolution is primitive, which concludes the proof.
    \qed
\end{proof}

\medskip

We note that the results of~\cite{Yoshida2018} show that imposing the stronger notion that the underlying Markovian semigroup is asymptotically decoupling, i.e. that all outputs are product states in the limit $t\to\infty$ is equivalent to the quantum channel being mixing, i.e. converges asymptotically to a unique, not necessarily faithful, quantum state. Their results also hold in discrete time.

Furthermore, for non-primitive continuous time quantum Markov semigroups, one can provide a characterization of $\AES(\cH)$ in a similar fashion as what is done in Theorem 24 of \cite{lami_entanglement-saving_2016} in the discrete time case.
We can then show:

\begin{proposition} \label{prop:cts-AES}
    Let $(\Phi_t)_{t\ge 0}$ be a faithful continuous time quantum Markov semigroup. Then $(\Phi_t)_{t\ge 0}\in \operatorname{AES}(\cH)$ if and only if its decoherence-free subalgebra $    \cN((\Phi_t^*)_{t\ge 0})$ is noncommutative, which means that there exists $i\in\cJ$ such that $d_{\cH_i}>1$.
\end{proposition}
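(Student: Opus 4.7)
The plan is to reduce the characterization of $\AES(\cH)$ to the entanglement-breaking status of the single map $E_\cN$ onto the decoherence-free subalgebra, using the asymptotic decomposition $\Phi_t(\rho) - \Phi_t\circ E_\cN(\rho) \to 0$ together with $\Phi_t\circ E_\cN(\rho) = e^{iHt} E_\cN(\rho) e^{-iHt}$ that is recalled in the preliminaries. First, I would argue that every limit point $\Psi$ of $(\Phi_t)_{t\ge 0}$ must be of the form $\Psi(\cdot) = V\,E_\cN(\cdot)\,V^\dagger$ for some unitary $V$ in the closure of $\{e^{iHt}:t\ge 0\}$. Since conjugation of the output by $V$ amounts to the local unitary $V\otimes\mathbb{I}$ acting on the Choi matrix, and entanglement breaking is both preserved and reflected under local unitaries, we get $\Psi\in\EB(\cH)$ iff $E_\cN\in\EB(\cH)$. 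Consequently, $(\Phi_t)_{t\ge 0}\in\AES(\cH)$ iff $E_\cN\notin\EB(\cH)$, so the proposition reduces to showing that $E_\cN$ is entanglement breaking exactly when $\cN((\Phi_t^*)_{t\ge 0})$ is commutative, i.e.\ every $d_{\cH_i}=1$.

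Next I would unpack the structure of $E_\cN$. As the Schr\"odinger-picture conditional expectation associated with $\cN((\Phi_t^*)_{t\ge 0})=\bigoplus_{i\in\cJ}\cB(\cH_i)\otimes\mathbb{I}_{\cK_i}$, it takes the form
\[
E_\cN(\rho) = \bigoplus_{i\in\cJ} \tr_{\cK_i}(p_i\rho p_i)\otimes \tau_i
\]
for certain states $\tau_i\in\cD_+(\cK_i)$ determined by the invariant state, with $p_i$ the orthogonal projection onto the block $\cH_i\otimes\cK_i$ (this is the continuous-time faithful analogue of the phase projector described around \eqref{eq:decohere-decomp}). In the commutative case every $d_{\cH_i}=1$, so $E_\cN(\rho)=\sum_i \tr(p_i\rho)\,\tau_i$ is a measure-and-prepare channel, hence trivially entanglement breaking. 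So the content is the noncommutative direction.

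For the noncommutative case I plan a direct computation of the Choi matrix of $E_\cN$ in the product basis $\{\ket{i,a,b}\}$ adapted to the decomposition $\cH=\bigoplus_i \cH_i\otimes\cK_i$. A short calculation, using that $E_\cN$ annihilates off-block and off-$\cK_i$-diagonal elements, yields
\[
(E_\cN\otimes\id)(\ketbra{\Omega}{\Omega}) = \sum_{i\in\cJ} \frac{d_{\cH_i}}{d_\cH}\,\ketbra{\Omega_i}{\Omega_i}_{\cH_i^{(1)}\otimes\cH_i^{(2)}}\otimes(\tau_i)_{\cK_i^{(1)}}\otimes \mathbb{I}_{\cK_i^{(2)}},
\]
where $\ket{\Omega_i}$ is the maximally entangled state on $\cH_i\otimes \cH_i$. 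Because the summands are supported on mutually orthogonal subspaces of the first tensor factor (via the projections $p_i$), the direct sum is separable across the bipartition iff every summand is. For any $i$ with $d_{\cH_i}\ge 2$, the presence of $\ketbra{\Omega_i}{\Omega_i}$ tensored with strictly positive operators on the remaining factors gives a negative eigenvalue of the partial transpose of that summand, so the summand is not even PPT, let alone separable. Hence $E_\cN\notin\EB(\cH)$ whenever some $d_{\cH_i}\ge 2$, completing the equivalence.

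I expect the main obstacle to be the bookkeeping of tensor factors in the Choi-matrix computation, together with a clean justification that a bipartite state which is block diagonal on one factor is separable iff each block is separable (which follows since a separable decomposition of the whole can be projected onto each block, and conversely a separable decomposition of each block combines to one for the direct sum). The reduction in the first paragraph is essentially immediate from the structural results cited in the preliminaries, and the commutative case is a one-line consequence of the measure-and-prepare form of $E_\cN$.
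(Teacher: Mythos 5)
Your proposal is correct and follows essentially the same route as the paper: reduce $\AES$ to the entanglement-breaking status of $E_\cN$ via the asymptotic unitary conjugation, handle the commutative case by recognizing $E_\cN$ as a measure-and-prepare channel, and detect entanglement in the noncommutative case through a maximally entangled state on $\cH_i\otimes\cH_i$. The only difference is cosmetic: you verify the entanglement by computing the Choi matrix of $E_\cN$ and applying the PPT criterion, whereas the paper feeds the maximally entangled input directly into $\id\otimes E_\cN$ and asserts the output is entangled --- your version is a more explicit justification of the same step.
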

\begin{proof}
    We simply need to show that $E_\cN\in\operatorname{EB}(\cH)$ if and only if $d_{\cH_i}=1$ for all $i$. If $d_{\cH_i}=1$ for all $i$,it follows from the decomposition of the decoherence free subalgebra given in Eq. \eqref{eq:decohere-decomp} that there exist quantum states $\sigma_i,\tau_i$ s.t. for all $\rho\in\cDH{}$ it holds that:
    \begin{align*}
        \lim_{t\to\infty}\Phi_t(\rho)=\sum\limits_{i}\tr\lb\sigma_i \rho\rb\tau_i.
    \end{align*}
    In \cite{HPR03} the authors show that such "measure and prepare" quantum channels are entanglement breaking.
    If now there exists $i$ such that $d_{\cH_i}>1$, choose as input state
    $\rho=|\psi\rangle\langle\psi|\otimes \tau_i\in\mathcal{D}\left(\mathcal{H}_i\otimes\mathcal{H}\right) $, where $|\psi\rangle=(d_{\cH_i})^{-1/2}\sum_{j=1}^{d_{\cH_i}}|j\rangle|j\rangle$ is the maximally entangled state on $\cH_i\otimes \cH_i$, and the result follows from the fact that $\id_{\cH_i}\otimes E_\cN(\rho)$ is entangled.
    \qed
\end{proof}

\begin{corollary}
    Let $(\Phi_t)_{t\geq 0}$ be a faithful continuous time quantum Markov semigroup. Then $(\Phi_t)_{t\geq 0}\in \EB_\infty(\cH)$ if and only if its decoherence-free subalgebra $\cN((\Phi_t^*)_{t\ge 0})$ is commutative and non-trivial, meaning  $d_{\cH_i}=1$ for all $i\in \cJ$, and $|\cJ|>1$.
\end{corollary}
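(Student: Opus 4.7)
The plan is to derive the corollary purely by logical combination of \Cref{propprimitivecontinuoustime} and \Cref{prop:cts-AES}, using the decomposition $\EB_\infty(\cH) = \AEB(\cH) \cap \ES(\cH)$ recalled in \eqref{eq:QMC-full-decomp}. The corollary is essentially a book-keeping exercise once one unpacks what ``primitive'' and ``AES'' mean at the level of the structural parameters $(\cJ, \{d_{\cH_i}\}_{i\in\cJ})$ of the decoherence-free subalgebra.

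First, I would recall the identification of primitivity with the triviality of $\cN((\Phi_t^*)_{t\ge 0})$: a faithful continuous time QMS is primitive exactly when $\cN((\Phi_t^*)_{t\ge 0}) = \bC\mathbb{I}$, which in terms of the decomposition $\cN((\Phi_t^*)_{t\ge 0})=\bigoplus_{i\in\cJ}\cB(\cH_i)\otimes\mathbb{I}_{\cK_i}$ corresponds precisely to $|\cJ|=1$ together with $d_{\cH_1}=1$. By \Cref{propprimitivecontinuoustime}, therefore, a faithful continuous time QMS lies in $\ES(\cH)$ (the complement of $\EEB(\cH)$ by \eqref{eq:EEB-ES-decomp}) iff at least one of the following fails: $|\cJ|=1$ or all $d_{\cH_i}=1$. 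In other words, $\ES(\cH)$ corresponds to $|\cJ|>1$ \emph{or} $d_{\cH_i}>1$ for some $i$.

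Next, by \Cref{prop:cts-AES}, the faithful QMS lies in $\AES(\cH)$ iff some $d_{\cH_i}>1$, so it lies in $\AEB(\cH)$ iff $d_{\cH_i}=1$ for every $i\in\cJ$. Intersecting these two characterizations via $\EB_\infty(\cH)=\ES(\cH)\cap\AEB(\cH)$, the semigroup lies in $\EB_\infty(\cH)$ iff $d_{\cH_i}=1$ for every $i$ \emph{and} ($|\cJ|>1$ or some $d_{\cH_i}>1$). Since the first condition rules out the second disjunct, this simplifies to $d_{\cH_i}=1$ for every $i\in\cJ$ together with $|\cJ|>1$, which is exactly the statement that $\cN((\Phi_t^*)_{t\ge 0})$ is commutative and non-trivial.

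There is essentially no technical obstacle here beyond ensuring that the partition \eqref{eq:QMC-full-decomp} is applied in the continuous-time framework (the paper states it for discrete time, but the same disjoint decomposition $\EEB\sqcup\ES$ and the definition of $\EB_\infty$ as $\AEB\cap\ES$ transfer verbatim, given that \Cref{propprimitivecontinuoustime} and \Cref{prop:cts-AES} supply the continuous-time analogues). The proof is thus a one-line chain of equivalences; I would present it as such, with explicit pointers to \Cref{propprimitivecontinuoustime} for the EEB/primitive equivalence and to \Cref{prop:cts-AES} for the AES/noncommutativity equivalence.
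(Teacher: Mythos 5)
Your proposal is correct and follows essentially the same route as the paper: characterize $\AEB$ via \Cref{prop:cts-AES}, characterize $\EEB$ (equivalently, exclude it to get $\ES$) via \Cref{propprimitivecontinuoustime} and the identification of primitivity with triviality of the decoherence-free subalgebra, and intersect. Your version is if anything slightly more careful in spelling out that primitivity requires both $|\cJ|=1$ and $d_{\cH_1}=1$ before the simplification, but the logical content is identical to the paper's proof.
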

\begin{remark}
    The dephasing semigroup provides a simple example of an element in $\EB_\infty(\cH)$.
\end{remark}
\begin{proof}
    Since \Cref{prop:cts-AES} characterizes when $(\Phi_t)_{t\geq 0} \in \AEB$, it remains to exclude eventually entanglement breaking maps. However, the semigroup is eventually entanglement breaking if and only if it is primitive, by \Cref{propprimitivecontinuoustime}, and is primitive if and only if $|\cJ|=1$, which completes the proof.
\end{proof}

\medskip

Proposition \ref{propprimitivecontinuoustime} justifies the introduction of the following characteristic times in the continuous time primitive case: let $(\Phi_t)_{t\ge 0}$ be a primitive continuous time quantum Markov semigroup with invariant state $\sigma\in\cD_+\p \cH)$. The entanglement breaking time $t_{\EB}(\Phi)$ of $(\Phi_t)_{t\ge 0}$, analogous to the entanglement breaking index in~\cite{Lami2015eb}, is defined as follows:
\begin{equation*}
    t_{\EB}\p (\Phi_t)_{t\ge 0})\overset{\operatorname{def}}{=}\inf\left\{t\ge 0:\Phi_t \in \EB\p\cH)\right\}\,.
\end{equation*}
Similarly, given a continuous time quantum Markov semigroup $(\Gamma_t)_{t\ge 0}$ over a bipartite Hilbert space $\cH_A\otimes\cH_B$, we define the \textit{entanglement annihilation time} $t_{\text{EA}}(\Gamma)$ as follows
\begin{align*}
    t_{\text{EA}} ((\Gamma_t)_{t\ge 0})\overset{\operatorname{def}}{=}\inf\left\{t\ge 0: \Gamma_t\in\operatorname{EA}(\cH_A,\cH_B)\right\}.
\end{align*}
In the case when $\Gamma_t=\Phi_t\otimes\Phi_t$, for $\Phi_t:\,\cB(\cH)\to\cB(\cH)$, this time is called the $2$-\textit{local entanglement annihilation time}, and is denoted by
\begin{align*}
    t_{\text{LEA}_2}((\Phi_t)_{t\ge 0})\overset{\operatorname{def}}{=}\inf\left\{t\ge 0:\Phi_t\in\operatorname{LEA}_2(\cH)\right\}\,.
\end{align*}
The quantity $t_{\text{LEA}_2}$ can be seen as a quantitative version of the notion of asymptotic decoupling for Markovian quantum dynamics~\cite{Yoshida2018}.
Entanglement breaking, entanglement annihilation, and $2$-local entanglement annihilation times of quantum Markov semigroups in discrete time can be similarly defined.
In the next two subsections, we provide bounds on $t_{\text{EB}}$, $t_{\text{EA}}$ and $t_{\text{LEA}_2}$: the upper bounds found in \Cref{upperbounds} use the strong decoherence property of Markovian evolutions together with estimates on the radius of open balls around any full-rank product state. On the other hand, lower bounds found in
\Cref{lowerbounds} mainly use the inclusion $\operatorname{EB}(\cH)\subset\operatorname{PPT}(\cH)$.

\subsection{Upper bounds on entanglement loss via decoherence}\label{upperbounds}

First, we briefly review in \Cref{strongdeco} the notion of strong decoherence of a quantum Markovian evolution which leads to the derivation of bounds on the time it takes for any state evolving according to a continuous time  quantum Markov semigroup to come $\eps$-close to equilibrium. As a second step, in \Cref{ballsseparable}, we get quantitative bounds on the radius of balls surrounding any full-rank separable state on a bipartite Hilbert space $\cH_A\otimes \cH_B$. Upper bounds on entanglement loss times follow by simply choosing $\eps$ as the radius of the separable ball around the adequate state found in \Cref{ballsseparable}. This is done in \Cref{mergethings}.

\subsubsection{Strong decoherence}\label{strongdeco}

At the beginning of this section, we briefly recalled the convergence of quantum Markovian evolutions towards their decoherence-free subalgebra. Moreover, any finite dimensional, faithful continuous time quantum Markov semigroup $(\Phi_t)_{t\ge 0}$ satisfies the so-called \textit{strong decoherence property} (SD): there exist constants $K,\gamma>0$, possibly depending on $d_\cH$, such that for any initial state $\rho$\footnote{In the discrete time case, we define the same property by simply replacing the left hand side by $\| \Phi^n(\rho-P(\rho))\|_1$, where $P$ corresponds to the projection onto the phase subspace, and the right hand side by $K\gamma^{-n}$.}:
\begin{align}\label{SDP}\tag{$\operatorname{SD}$}
    \|\Phi_t(\rho-E_\cN(\rho))\|_1\le \,K\,\e^{-\gamma t}\,.
\end{align}

In the primitive case, good control over the constants $K$ and $\gamma$ can be achieved from so-called \textit{functional inequalities} (see e.g. \cite{OLKIEWICZ1999246,TKRV10,kastotemme}). These techniques were recently extended to the non-primitive case in \cite{bardet2017estimating,bardet2018hypercontractivity}. Some of them have also been adapted to the discrete case (see \cite{MullerHermes2018sandwichedrenyi,MHF16,TKRV10}). In this section, we briefly review these tools that we use in the next section in order to derive upper bounds on the various entanglement loss times previously defined.

\paragraph{Poincar\' e inequality:} Perhaps the simplest functional inequality is the \textit{Poincar\'{e} inequality} (or spectral gap inequality): In the case of a continuous time quantum Markov semigroup $(\Phi_t)_{t\ge 0}$ with associated generator $\cL$, its Poincar\'{e} constant is defined as \cite{bardet2017estimating}:
\begin{align}\label{spectralgap}
    \lambda(\cL^*):=\inf_{X\in\cB_{sa}(\cH)}\frac{-\langle X,\,\cL^*(X)\rangle_{\sigma_{\tr}}}{\|X-E^*_\cN(X)\|^2_{2,\sigma_{\tr}}}\,,
\end{align}
where $\sigma_{\tr}:=d_\cH^{-1}\,E_\cN(\mathbb{I})$, and $\cL^*$ is the generator of the dual semigroup $(\Phi_t^*)_{t\ge 0}$ acting on observables. The Poincar\'{e} constant turns out to be the spectral gap of the operator $\frac{\cL^*+\hat{\cL}}{2}$, where $\hat{\cL}$ is the adjoint of $\cL^*$ with respect to $\langle .,.\rangle_{\sigma_{\tr}}$, namely minus its second largest distinct eigenvalue. Moreover,
\begin{align}\label{poincareDF}
    \|\hat \Phi_t(X- \hat E_\cN(X))\|_{2,\sigma_{\tr}}\le \e^{-\lambda(\cL^*) t}\,  \|X-\hat E_\cN(X)\|_{2,\sigma_{\tr}}\,.
\end{align}
Strong decoherence in the form of \Cref{SDP} with $K={\|\sigma_{\tr}^{-1}\|_\infty}^{1/2}$ and $\gamma=\lambda$ follows from \Cref{poincareDF}, since $\|\Phi_t(\rho-E_\cN(\rho))\|_1\le \|\hat{\Phi}_t(X-\hat E_\cN(X))\|_{2,\sigma_{\tr}} $ \cite{Ruskai94}, where $X=\sigma_{\tr}^{-1/2}\rho\sigma_{\tr}^{-1/2}$, $\|X-\hat E_\cN(X)\|_{2,\sigma_{\tr}}\le {\|\sigma_{\tr}^{-1}\|_\infty}^{1/2}$, $\hat{\Phi}_t$ is the adjoint of $\Phi_t^*$ with respect to the inner product $\langle .,. \rangle_{\sigma_{\tr}}$, and $\hat E_\cN$ is the conditional expectation onto the decoherence-free subalgebra of $\hat \Phi_t$.

\paragraph{Modified logarithmic Sobolev inequality:} The prefactor $K=\|\sigma_{\tr}^{-1}\|_\infty$ obtained from the Poincar\'{e} method is known to be suboptimal in some situations. A stronger inequality that one can hope for is the so-called \textit{modified logarithmic Sobolev inequality} (MLSI): given a continuous time quantum Markov semigroup $(\Phi_t)_{t\ge 0}$, its associated modified logarithmic Sobolev constant $\alpha_1(\cL)$ is defined as follows \cite{bardet2017estimating}:
\begin{align*}
    \alpha_1(\cL):=\inf_{\rho\in\cD(\cH)}\frac{\operatorname{EP}_{\cL}(\rho)}{D(\rho\| E_\cN(\rho))}\,,
\end{align*}
where $\operatorname{EP}_\cL(\rho):=-\tr(\cL(\rho)(\ln\rho-\ln\sigma))$ is the so-called \textit{entropy production} of the semigroup $(\Phi_t)_{t\ge 0}$. Its name is justified by the fact that $\frac{d}{dt}D(\Phi_t(\rho)\|E_\cN(\rho))=-\operatorname{EP}_\cL(\rho)$. This directly leads to the following exponential decay in relative entropy:
\begin{align}
    D(\Phi_t(\rho)\|\Phi_t\circ E_\cN(\rho))\le \e^{-\alpha_1(\cL)t}D(\rho\|E_\cN(\rho))\,.
\end{align}
This together with the bound $D(\rho\|E_\cN(\rho))\le \ln\|\sigma_{\tr}^{-1}\|_\infty$ and the quantum Pinsker inequality implies \reff{SDP} with $K=(2\ln (\|\sigma_{\tr}^{-1}\|_\infty))^{1/2}$ and $\gamma=\alpha_1(\cL)$. In particular, this new prefactor $K$ constitutes a considerable improvement over the one derived from the Poincar\'{e} method. Similar decoherence times were recently obtained via \textit{decoherence free hypercontractivity} in \cite{bardet2018hypercontractivity} and other similar techniques have recently been developed to obtain convergence bounds similar to the ones obtained by MLSI inequalities~\cite{gao2018fisher,Bardet2019}.

\paragraph{Estimates in diamond norm} By Proposition \ref{propprimitivecontinuoustime}, we know that in continuous time, it only makes sense to talk about entanglement-breaking times for primitive  quantum Markov semigroups. As we will see in the next subsection, these times can be derived from the strong decoherence property of a primitive semigroup $(\Phi_t)_{t\ge 0}$ on $\cB(\cH)$, of invariant state $\sigma$, when tensorized with the identity map on a reference system $\cH_R\simeq \cH$. The resulting semigroup $(\Phi_t\otimes \id_{\cH_R})_{t\ge 0}$ is faithful and non-primitive, and its associated conditional expectation $E_\cN$ takes the form $\sigma\otimes \tr_{\cH}(\cdot)$. In this case, for any $\rho\in\cD(\cH\otimes\cH_R)$:
\begin{equation}\label{eq:diamond-norm-bound}
    \|\Phi_t\otimes \id(\rho-\sigma\otimes \tr_{\cH_R}(\rho))\|_1\le \| \Phi_t-\sigma \tr(\cdot)\|_\diamond \le d_\cH^{1/2}\|   \Phi_t-\sigma \tr(\cdot)\|_{1\to 1}\,,
\end{equation}
where $\|\cdot\|_{\diamond}$ denotes the diamond norm. Hence, the strong decoherence bound on $(\Phi_t\otimes \id_{\cH_R})_{t\ge 0}$ can be simply derived from a strong mixing bound for the primitive evolution  $(\Phi_t)_{t\ge 0}$ at the cost of a multiplicative factor $d_\cH^{1/2}$. The same factor would appear from the spectral gap estimate, since in that case $\sigma_{\tr}=\sigma\otimes \frac{\mathbb{I}}{d_{\cH}}$, so that $K=\|\sigma^{-1}\|_\infty^{1/2}d_\cH^{1/2}$.

However, in finite dimensions, it was shown in \cite{bardet2017estimating} that the modified logarithmic Sobolev constant of $(\Phi_t\otimes \id_{\cH_R})_{t\ge 0}$ is positive and in some situations can be of the same order as the one of $(\Phi_t)_{t\ge 0}$ (see also \cite{gao2018fisher}). In this case, $K=(2\ln( d_\cH \|\sigma^{-1}\|_\infty))^{1/2}$ constitutes an improvement over the constant $K'=d_\cH^{1/2}(2\ln( \|\sigma^{-1}\|_\infty))^{1/2}$ that one would get from the diamond norm estimate, after using the strong mixing bound provided by MLSI for $(\Phi_t)_{t\ge 0}$. Other related forms of convergence measures for $(\Phi_t\otimes \id_{\cH_R})_{t\ge 0}$ have also been investigated in~\cite{Gao2019}.

\subsubsection{Separable balls around separable states}\label{ballsseparable}

In this subsection, we consider the problem of finding separable balls around separable states. We restrict our discussion to full rank separable states, as separable states that are not faithful lie on the boundary of the set of separable states \cite{lami_entanglement-saving_2016}.
Note, however, that there are faithful separable states that lie on the boundary of the set of separable states and the boundary
of the set of separable quantum states is still a subject of active research~\cite{2015RSPSA.47150102C,2018arXiv180706223K}.
Thus, one way of quantifying how much the state lies in the interior of the set of separable states is the following measure of ``robustness of separability'', inspired by
the robustness of entanglement introduced in~\cite{vidal1999robustness} (see also Chapter 9 of \cite{aubrun2017}).
\begin{definition}[Robustness of separability]
    Let $\rho_{AB}>0$ be separable on the bipartite Hilbert space $\cH\equiv \cH_{AB}$. We define its robustness of separability w.r.t. the maximally mixed state, $R(\rho_{AB})$, as
    \begin{align*}
        R(\rho_{AB}) = \sup\, \left\{ \lambda \in [0,1] :  \exists\, \rho'_{AB} \text{ separable s.t. }\rho_{AB} = \lambda\, \frac{\mathbb{I}_A\otimes \mathbb{I}_B}{d_\cH} + (1-\lambda)\,\rho'_{AB}\right\}.
    \end{align*}
\end{definition}

\begin{proposition}[Properties of the robustness of separability]\label{robustnesssep}
    Let $\rho_{AB}\in \SEP(A:B)$. Then we have the following properties.
    \begin{enumerate}
        \item We have the bounds
              \begin{equation} \label{eq:R-UB}
                  0 \leq R(\rho_{AB}) \leq d_\cH\,\lambda_{\min}(\rho_{AB})\,,
              \end{equation}
              and equality holds in the second inequality for product states:
              \begin{equation} \label{eq:R-product}
                  R(\rho_A\otimes \rho_B) = d_\cH\,\lambda_{\min}(\rho_A)\,\lambda_{\min}(\rho_B)\,.
              \end{equation}
        \item $\rho_{AB} \in \interior \SEP(A:B)$ if and only if $R(\rho_{AB}) > 0$. Moreover, any state $\sigma_{AB}$ such that
              \begin{equation} \label{eq:robusteness_of_sep_bound}
                  \|\rho_{AB}-\sigma_{AB}\|_2 \leq \frac{R(\rho_{AB})}{d_\cH}
              \end{equation}
              is separable.
        \item  $R$ is quasi-concave on $\SEP(A:B)$: if $\rho_{AB} = \sum_j p_j \sigma^{(j)}_{AB}$ where $p_j>0$, $\sum_j p_j = 1$, then
              \[
                  R(\rho_{AB}) \geq \max_j R(\sigma^{(j)}_{AB}).
              \]
    \end{enumerate}

\end{proposition}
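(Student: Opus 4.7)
The plan is to handle the three items in sequence, with the main work being the exact computation for product states in part~(1) and the application of \Cref{lemma1} in part~(2).

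For part~(1), the nonnegativity $R(\rho_{AB})\geq 0$ is immediate from taking $\lambda=0$ in the definition. The upper bound in \eqref{eq:R-UB} follows since any admissible decomposition forces $\rho_{AB}\geq (\lambda/d_\cH)\,\mathbb{I}$, and hence $\lambda_{\min}(\rho_{AB})\geq \lambda/d_\cH$. For the equality on product states, I would set $a=\lambda_{\min}(\rho_A)$, $b=\lambda_{\min}(\rho_B)$ and decompose each factor as
\begin{align*}
\rho_A = a\,d_A\cdot \frac{\mathbb{I}_A}{d_A}+(1-a\,d_A)\,\tilde\rho_A, \qquad \rho_B=b\,d_B\cdot \frac{\mathbb{I}_B}{d_B}+(1-b\,d_B)\,\tilde\rho_B,
\end{align*}
with $\tilde\rho_A$ and $\tilde\rho_B$ genuine density operators (positivity of the residues uses the minimality of $a$ and $b$, together with $ad_A,bd_B\leq 1$). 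Distributing the tensor product exhibits a weight $ab\,d_\cH$ on $\mathbb{I}/d_\cH$ plus an explicit convex combination of product states, so $R(\rho_A\otimes\rho_B)\geq ab\,d_\cH$; the matching upper bound comes from \eqref{eq:R-UB} via $\lambda_{\min}(\rho_A\otimes\rho_B)=ab$. The subtlety I would flag here is that \emph{a priori} a cleverer, non-product decomposition might extract more weight on $\mathbb{I}/d_\cH$; the argument closes the gap only thanks to the sharpness of \eqref{eq:R-UB} together with multiplicativity of $\lambda_{\min}$ on product states, with the explicit tensor expansion serving only as a lower-bound witness.

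For part~(2), I would prove the ball inclusion \eqref{eq:robusteness_of_sep_bound} and the forward direction of the interior characterization simultaneously. Fix an optimal decomposition $\rho_{AB}=R\,\mathbb{I}/d_\cH+(1-R)\rho'_{AB}$ with $R:=R(\rho_{AB})>0$ and $\rho'_{AB}\in\SEP(A:B)$. For any state $\sigma_{AB}$ with $\|\sigma_{AB}-\rho_{AB}\|_2\leq R/d_\cH$, rewrite
\begin{align*}
\sigma_{AB}= R\,\Bigl(\frac{\mathbb{I}_A}{d_A}\otimes\frac{\mathbb{I}_B}{d_B}+\frac{\sigma_{AB}-\rho_{AB}}{R}\Bigr)+(1-R)\,\rho'_{AB}.
\end{align*}
The bracketed perturbation has Hilbert--Schmidt norm at most $1/d_\cH=\lambda_{\min}(\mathbb{I}_A/d_A)\lambda_{\min}(\mathbb{I}_B/d_B)$, so \Cref{lemma1} makes the bracket a separable state and hence $\sigma_{AB}$ a convex combination of separable states. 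For the converse ``$\rho_{AB}\in\interior\SEP(A:B)\Rightarrow R(\rho_{AB})>0$'', I would consider $\sigma_\epsilon := (\rho_{AB}-\epsilon\,\mathbb{I}/d_\cH)/(1-\epsilon)$; since interior separable states are necessarily full-rank, $\sigma_\epsilon$ is a valid state for all sufficiently small $\epsilon>0$ and converges to $\rho_{AB}$ as $\epsilon\to 0^+$, so the interior assumption gives $\sigma_\epsilon\in\SEP(A:B)$ for small enough $\epsilon$, witnessing $R(\rho_{AB})\geq \epsilon>0$.

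For part~(3), I would establish the stronger \emph{concavity} statement $R\bigl(\sum_j p_j\sigma^{(j)}_{AB}\bigr)\geq \sum_j p_j R(\sigma^{(j)}_{AB})$ by linearly averaging witnessing decompositions. Writing each $\sigma^{(j)}_{AB}=R_j\,\mathbb{I}/d_\cH+(1-R_j)\tau^{(j)}_{AB}$ with $R_j=R(\sigma^{(j)}_{AB})$ and $\tau^{(j)}_{AB}$ separable, the convex combination gives
\begin{align*}
\rho_{AB} = \Bigl(\sum_j p_j R_j\Bigr)\frac{\mathbb{I}}{d_\cH} + \Bigl(1-\sum_j p_j R_j\Bigr)\tau_{AB},
\end{align*}
with $\tau_{AB}$ a convex combination of the separable $\tau^{(j)}_{AB}$, hence itself separable. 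This concavity immediately implies the stated quasi-concavity of $R$ on $\SEP(A:B)$.
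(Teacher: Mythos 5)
Your proposal is correct in substance and, for the upper bound in \eqref{eq:R-UB} and for part (2), follows essentially the same route as the paper: the ball inclusion is proved by fixing an optimal decomposition, rewriting $\sigma_{AB}$, and invoking the separable ball around $\mathbb{I}\otimes\mathbb{I}$ (\Cref{lemma1} / Theorem 1 of Gurvits--Barnum), and your converse via $\sigma_\epsilon=(\rho_{AB}-\epsilon\,\mathbb{I}/d_\cH)/(1-\epsilon)$ is just the contrapositive of the paper's argument that $R=0$ excludes $\rho_{AB}$ from the relative interior. Two steps genuinely differ. For the product-state equality \eqref{eq:R-product} the paper evaluates $R(\rho_A\otimes\rho_B)$ directly by expanding everything in a common product eigenbasis, whereas you produce an explicit decomposition as a lower-bound witness and close the gap with the general upper bound \eqref{eq:R-UB}; both work, and your version makes it transparent why no cleverer decomposition can do better. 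For part (3) you prove genuine concavity, $R\big(\sum_j p_j\sigma^{(j)}_{AB}\big)\geq\sum_j p_j R(\sigma^{(j)}_{AB})$, by averaging witnessing decompositions, which is strictly stronger than the paper's pairwise argument, which only yields $R(t\rho_{AB}+(1-t)\sigma_{AB})\geq\min(R(\rho_{AB}),R(\sigma_{AB}))$.

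One caveat on your last sentence: concavity does \emph{not} imply the inequality as displayed in the statement, namely $R(\rho_{AB})\geq\max_j R(\sigma^{(j)}_{AB})$; it only gives $R(\rho_{AB})\geq\sum_j p_j R(\sigma^{(j)}_{AB})\geq\min_j R(\sigma^{(j)}_{AB})$. The $\max$ version is in fact false: mixing $\mathbb{I}/d_\cH$ (which has $R=1$) with a pure product state produces a state whose smallest eigenvalue is strictly below $1/d_\cH$, so by \eqref{eq:R-UB} its robustness is strictly less than $1$. The displayed $\max_j$ should read $\min_j$ --- the paper's own proof likewise only establishes the $\min$ bound --- so your claim that concavity ``immediately implies the stated quasi-concavity'' is accurate for quasi-concavity in the standard sense but not for the literal displayed inequality; you inherit, rather than introduce, this discrepancy.
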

\begin{proof}
    The upper bound in \eqref{eq:R-UB} follows from the fact that $\rho_{AB} - \lambda \frac{\one_A\,\otimes\,\one_B}{d_\cH} \not \geq 0$ for $\lambda > d_Ad_B\lambda_{\min}(\rho_{AB})$. For product states, we may explicitly evaluate $R(\rho_A\otimes \rho_B)$ using the reformulation
    \[
        R(\rho_{AB}) = \sup \left\{ \lambda\in [0,1] :  \frac{1}{1-\lambda} \left(\rho_{AB} - \lambda \frac{\one_A\otimes\one_B}{d_Ad_B}\right)  \in \SEP(A:B)\right\}
    \]
    and by expanding $\rho_A$ in its eigenbasis, $\one_A$ in the same basis, $\rho_B$ in its eigenbasis, and $\one_B$ in the same basis.

    For the second point, we first note that if $R(\rho_{AB}) = 0$, then for any  $\lambda\in(0,1)$,
    \[
        \frac{1}{1-\lambda} \left(\rho_{AB} - \lambda \frac{\one_A\otimes\one_B}{d_Ad_B}\right) \not \in \SEP(A:B).
    \]
    This quantity is in the affine hull of $\SEP$, and can be made arbitrarily close to $\rho_{AB}$ by taking $\lambda$ small, which proves $\rho_{AB}$ is not in the relative interior of $\SEP$. The other implication follows from the bound \eqref{eq:robusteness_of_sep_bound}, which is proven as follows. By the definition of $R$ and the closedness of $\SEP(A:B)$, we may write
    \begin{equation}
        \rho_{AB}=R(\rho_{AB})\, \frac{\one_A\otimes\one_B}{d_Ad_B}+(1-R(\rho_{AB}))\rho_{AB}'
    \end{equation}
    for some \(\rho_{AB}'\) separable.
    Now consider another state \(\sigma_{AB}\) st. \(\|\rho_{AB}-\sigma_{AB}\|_2\leq R(\rho_{AB})d_\cH^{-1}\).
    Then
    \begin{align*}
        \sigma_{AB}=\rho_{AB}+(\sigma_{AB}-\rho_{AB})=R(\rho_{AB})\lb\frac{\one_A\otimes\one_B}{d_Ad_B}+\frac{1}{R(\rho_{AB})}\lb\rho_{AB}-\sigma_{AB}\rb\rb+(1-R(\rho_{AB}))\rho_{AB}'.
    \end{align*}
    As $\|\rho_{AB}-\sigma_{AB}\|_2 \leq \frac{R(\rho_{AB})}{d_{\cH}}$,
    \begin{align*}
        \frac{\one_A\otimes\one_B}{d_Ad_B}+\frac{1}{R(\rho_{AB})}\lb\rho_{AB}-\sigma_{AB}\rb
    \end{align*}
    is a separable state (cf. Theorem 1 of \cite{gurvits2002}), from which it follows that $\sigma_{AB}$ is separable as well, as a convex combination of separable states.

    For the last point, it suffices to prove the statement in the case of two states. If $\lambda = \min(R(\rho_{AB}), R(\sigma_{AB}))$, then
    \[
        \frac{1}{1-\lambda} \left(\rho_{AB} - \lambda \frac{\one_A\otimes\one_B}{d_Ad_B}\right) \in \SEP(A:B), \quad \text{and}\quad \frac{1}{1-\lambda} \left(\sigma_{AB} - \lambda \frac{\one_A\otimes\one_B}{d_Ad_B}\right) \in \SEP(A:B)
    \]
    and so for any $t\in [0,1]$,
    \[
        \frac{1}{1-\lambda} \left(t \rho_{AB} + (1-t)\sigma_{AB} - \lambda \frac{\one_A\otimes\one_B}{d_Ad_B}\right) \in \SEP(A:B)
    \]
    too, so $R(t \rho_{AB} + (1-t)\sigma_{AB}) \geq \lambda = \min(R(\rho_{AB}),R(\sigma_{AB}))$ as desired. \qed
\end{proof}
\begin{remark}
Together, \eqref{eq:robusteness_of_sep_bound} and \eqref{eq:R-product} recover \Cref{lemma1} in the case of density matrices. Note Proposition 3 of \cite{lami_entanglement-saving_2016} shows that $\rho_A\otimes\rho_B \in \interior \SEP(A:B)$ when $\rho_A$ and $\rho_B$ are full rank. \Cref{lemma1} strengthens this result by giving a quantitative bound:
    \[
        B_2(\lambda_{\min}(\rho_A)\,\lambda_{\min}(\rho_B), \rho_A\otimes\rho_B) \subset \SEP(A:B)
    \]
    where $B_p(r,\rho_{AB})$ is the closed ball in $p$-norm of radius $r$ around $\rho_{AB}$.

    Admittedly, it is not a priori clear how to obtain good lower bounds on the robustness of separability for general separable states $\rho_{AB}$ and we leave this for future work.
\end{remark}

\begin{remark}[Separable balls in relative entropy]
    In the previous section we commented on how to obtain better convergence estimates based on showing the exponential decay of other distance measures, like the relative entropy. It is natural to wonder if
    obtaining separable balls in the relative entropy and applying the convergence bound directly might lead to better results. Unfortunately, to the
    best of our knowledge, the radius of separable balls in the relative entropy is not known. However, such results would only
    lead to a constant improvement on the bounds for $t_{\text{EB}}$ we obtain with our methods. To see why this is the case, note for $\rho \in \cD$ and $\sigma \in \cD_+$,
    \begin{align*}
        \frac{1}{2}\|\rho-\sigma\|_1^2 \leq D\lb \rho\|\sigma\rb\leq \|\rho-\sigma\|_1\|\sigma^{-1}\|_\infty
    \end{align*}
    where the first inequality is Pinsker's inequality and second inequality  follows simply from~\cite[Theorem 1]{Audenaert_2011}.
    Thus, the radius of the largest ball in trace distance and
    relative entropy can only differ by square-roots and a factor which is polynomial in $\|\sigma^{-1}\|_\infty$.
    As we have a logarithmic dependence on the radius and on $\|\sigma^{-1}\|_\infty$ in our bounds (see e.g. \Cref{tEB_upper}), we could only hope to improve our bounds by a constant factor if we were able to derive optimal balls in relative entropy.

\end{remark}

\subsubsection{Upper bounds}\label{mergethings}

Here, we combine the tools gathered in the last two subsections, namely estimates on the radius of balls surrounding tensor product states, as well as the strong decoherence property, in order to estimate from above the entanglement loss in the different situations defined at the beginning of this section.

\begin{proposition}\label{tEB_upper}
    Let $\p\Phi_t)_{t\ge 0}$ be primitive with  full-rank invariant state $\sigma$ and generator $\cL$. Assuming that $\operatorname{(}$\ref{SDP}$\operatorname{)}$ holds for $(\Phi_t \otimes \id)_{t\geq 0}$: $$t_{\EB}\p (\Phi_t)_{t\ge 0})\le\frac{\log\,\p \,K\,d_\cH\|\sigma^{-1}\|_{\infty})}{\gamma}\,.$$

\end{proposition}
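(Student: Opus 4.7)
The plan is to apply the strong decoherence estimate for the tensorized semigroup $(\Phi_t\otimes \id)_{t\geq 0}$ to the maximally entangled state, and then invoke \Cref{lemma1} to argue that once the output has traveled close enough to its asymptotic value, it lies in a separable ball around a full-rank product state.

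First I would identify the ``asymptotic'' target. Since $(\Phi_t)_{t\ge 0}$ is primitive with invariant state $\sigma$, the tensor product semigroup $(\Phi_t\otimes \id)_{t\ge 0}$ is faithful and its conditional expectation onto the decoherence-free subalgebra is $E_\cN(\cdot) = \sigma\otimes \tr_\cH(\cdot)$, as noted in \Cref{strongdeco}. Applied to $\ketbra{\Omega}{\Omega}$ this gives $\sigma\otimes d_\cH^{-1}\mathbb{I}$, and invariance of $\sigma$ ensures this point is itself fixed by $\Phi_t\otimes \id$. Setting $\Delta_t := (\Phi_t\otimes\id)(\ketbra{\Omega}{\Omega}) - \sigma\otimes d_\cH^{-1}\mathbb{I}$, the hypothesis $(\mathrm{SD})$ applied to $\ketbra{\Omega}{\Omega}$ gives $\|\Delta_t\|_1\le K\e^{-\gamma t}$.

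Next I would convert this $1$-norm estimate into a $2$-norm estimate in order to apply \Cref{lemma1}. Since $\Delta_t$ is Hermitian, one has $\|\Delta_t\|_2\le \|\Delta_t\|_1\le K\e^{-\gamma t}$. Now \Cref{lemma1} with $\omega = \sigma$ and with the second tensor factor equal to $d_\cH^{-1}\mathbb{I}$ tells us that $\sigma\otimes d_\cH^{-1}\mathbb{I} + \Delta_t$ is separable whenever
\[
\|\Delta_t\|_2 \;\le\; \lambda_{\min}(\sigma)\,\lambda_{\min}(d_\cH^{-1}\mathbb{I}) \;=\; \frac{1}{d_\cH\,\|\sigma^{-1}\|_\infty}.
\]
Combining the two inequalities, it suffices to require $K\e^{-\gamma t}\le (d_\cH\|\sigma^{-1}\|_\infty)^{-1}$, which rearranges to $t\ge \gamma^{-1}\log(Kd_\cH\|\sigma^{-1}\|_\infty)$.

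Finally, I would conclude by recalling that $\Phi_t\in\operatorname{EB}(\cH)$ is equivalent to $(\Phi_t\otimes \id)(\ketbra{\Omega}{\Omega})$ being separable (via the Choi--Jamio{\l}kowski correspondence \eqref{choijamil}), so the above time bound is an upper bound on $t_\EB$. There is no real obstacle here; the only subtlety is being careful that the $(\mathrm{SD})$ constants $K,\gamma$ are those of the tensorized semigroup (as the statement explicitly assumes), since naively exporting the primitive bounds for $(\Phi_t)_{t\ge 0}$ to $(\Phi_t\otimes\id)_{t\ge 0}$ would introduce the extra $d_\cH^{1/2}$ factor discussed at the end of \Cref{strongdeco}.
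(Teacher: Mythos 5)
Your proof is correct and follows essentially the same route as the paper: apply the strong decoherence bound for $(\Phi_t\otimes\id)_{t\ge 0}$ to the maximally entangled state, pass from the $1$-norm to the $2$-norm, and invoke \Cref{lemma1} to place the output in the separable ball around $\sigma\otimes d_\cH^{-1}\mathbb{I}$ once $K\e^{-\gamma t}\le \lambda_{\min}(\sigma)/d_\cH$. Your explicit remark about using the (SD) constants of the tensorized semigroup rather than those of $(\Phi_t)_{t\ge 0}$ is a welcome clarification of a point the paper leaves implicit.
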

\begin{remark}
    Recall that strong decoherence always holds for some $K,\lambda>0$ in the finite-dimensional setting considered here.
\end{remark}
\begin{proof}
    By \Cref{lemma1} and the fact that $1$-norm upper bounds the $2$ norm, we know that the $\|\cdot\|_1$-norm around $\mathbb{I}/d_\cH\otimes\sigma$ of radius $\lambda_{\min}(\sigma)/d_\cH$ is included in $\SEP\p \cH,\cH)$. In the primitive case, $(\Phi_t\otimes \id)\circ E_{\cN}(\rho) =  \sigma \otimes \tr_A(\rho)$. It is therefore clear that for any $t$ such that $K\,\e^{-\lambda(\cL)\,t}\leq\lambda_{\min}(\sigma)/d_\cH$, $\id\otimes \Phi_t\p |\Omega\rangle\langle\Omega|)$ is separable, which implies that the channel $\Phi_t$ itself is entanglement breaking.
    \qed

\end{proof}

The proof of \Cref{tEB_upper2} can be adapted to get upper bounds on the entanglement annihilating time of a tensor product of semigroups. For example:

\begin{theorem}[Upper bound on $t_{\operatorname{EA}}$] \label{tEA_UB_QDS}
    Let $(\Phi_t)_{t\ge 0}$  (resp. $(\Psi_t)_{t\ge 0}$) be primitive and reversible on $\cB(\cH_A)$ with respect to the full-rank state $\sigma$ (resp. in $\cB(\cH_B)$ with respect to $\omega$). Assume that the spectral gaps are both lower bounded by $\lambda>0$. Then,
    \begin{align}\label{equ:eafromspgap}
        t_{\operatorname{EA}}((\Phi_t \otimes \Psi_t)_{t\ge 0})\leq\frac{3}{2}\,\frac{\log\lb\|\sigma^{-1}\|_\infty\|\omega^{-1}\|_\infty\rb}{\lambda}\,.
    \end{align}
\end{theorem}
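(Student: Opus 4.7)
The strategy is to adapt the proof of \Cref{tEB_upper} essentially verbatim, by treating $(\Phi_t\otimes\Psi_t)_{t\ge 0}$ itself as a primitive reversible semigroup on $\cB(\cH_A\otimes\cH_B)$ and then combining its strong decoherence estimate with the separable ball estimate around $\sigma\otimes\omega$ furnished by \Cref{robustnesssep}.

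First I would verify that $(\Phi_t\otimes\Psi_t)_{t\ge 0}$ is primitive and reversible with unique invariant state $\sigma\otimes\omega$, and that its spectral gap is at least $\lambda$. Since each generator is self-adjoint with respect to its weighted inner product, the generator of the product semigroup, $\cL_A\otimes\id+\id\otimes\cL_B$, is self-adjoint with respect to $\langle\cdot,\cdot\rangle_{\sigma\otimes\omega}$ and its nonzero eigenvalues are sums of eigenvalues of the two factors, so the spectral gap tensorizes to $\min(\lambda(\cL_A^*),\lambda(\cL_B^*))\ge\lambda$. In this setting the conditional expectation onto the decoherence-free subalgebra is simply $E_{\cN}(\rho_{AB})=\sigma\otimes\omega$ (independent of the input), so the Poincar\'e-based strong decoherence bound of \Cref{SDP}--\Cref{poincareDF} gives, uniformly in $\rho_{AB}\in\cD(\cH_A\otimes\cH_B)$,
\[
\|(\Phi_t\otimes\Psi_t)(\rho_{AB})-\sigma\otimes\omega\|_1\le \|(\sigma\otimes\omega)^{-1}\|_\infty^{1/2}\,\e^{-\lambda t}=(\|\sigma^{-1}\|_\infty\,\|\omega^{-1}\|_\infty)^{1/2}\,\e^{-\lambda t}.
\]

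Second I would invoke \Cref{robustnesssep}: item 1 yields $R(\sigma\otimes\omega)=d_{\cH}\,\lambda_{\min}(\sigma)\,\lambda_{\min}(\omega)$, and item 2 then shows that any state whose $\|\cdot\|_2$-distance to $\sigma\otimes\omega$ is at most $\lambda_{\min}(\sigma)\,\lambda_{\min}(\omega)$ is separable. Since $\|\cdot\|_2\le \|\cdot\|_1$, the same conclusion holds with the trace norm, so it suffices to ensure
\[
(\|\sigma^{-1}\|_\infty\,\|\omega^{-1}\|_\infty)^{1/2}\,\e^{-\lambda t}\;\le\;\lambda_{\min}(\sigma)\,\lambda_{\min}(\omega)\;=\;(\|\sigma^{-1}\|_\infty\,\|\omega^{-1}\|_\infty)^{-1}.
\]
Solving for $t$ produces precisely the right-hand side of \eqref{equ:eafromspgap}, namely $t\ge \tfrac{3}{2}\log(\|\sigma^{-1}\|_\infty\|\omega^{-1}\|_\infty)/\lambda$, since at such times every output of $\Phi_t\otimes\Psi_t$ lies in the separable ball around $\sigma\otimes\omega$, hence $\Phi_t\otimes\Psi_t\in\operatorname{EA}(\cH_A,\cH_B)$.

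The main (minor) obstacle I anticipate is the spectral gap tensorization step: one must ensure that the Poincar\'e constant used in \Cref{SDP} for $(\Phi_t\otimes\Psi_t)_{t\ge 0}$ is controlled by $\lambda$, which is exactly where reversibility (and the resulting self-adjointness of the symmetrized generator with respect to $\langle\cdot,\cdot\rangle_{\sigma\otimes\omega}$) is essential. Everything else---the separable-ball radius about $\sigma\otimes\omega$, the passage from 2-norm to 1-norm, and the conversion of the decoherence estimate into a time bound---follows the same pattern as \Cref{tEB_upper}, with the key bookkeeping change being the replacement of $d_{\cH}\|\sigma^{-1}\|_\infty$ by $(\|\sigma^{-1}\|_\infty\|\omega^{-1}\|_\infty)^{3/2}$, producing the extra factor of $3/2$ in the bound.
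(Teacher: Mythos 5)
Your proposal is correct and follows essentially the same route as the paper's (much terser) proof: a lower bound $\lambda$ on the spectral gap of the tensorized, reversible semigroup gives the Poincar\'e decoherence estimate with prefactor $\|(\sigma\otimes\omega)^{-1}\|_\infty^{1/2}$, which is then compared to the separable-ball radius $\lambda_{\min}(\sigma)\lambda_{\min}(\omega)$ around $\sigma\otimes\omega$ from \eqref{eq:robusteness_of_sep_bound}. Your write-up usefully makes explicit the gap-tensorization step and the bookkeeping that produces the factor $3/2$, both of which the paper leaves implicit.
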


\begin{proof}
    Since $\lambda$ is a lower bound on the spectral gap of $(\Phi_t\otimes\Psi_t)_{t\ge 0}$, it follows from equation~\eqref{eq:robusteness_of_sep_bound} that choosing $t$ as in the statement is enough to ensure that
    all the outputs of the semigroup are contained in the separable ball around $\sigma\otimes\omega$, which implies that $\Phi_t\otimes\Psi_t$
    is in $\operatorname{EA}(\cH_A,\cH_B)$.
    \qed
\end{proof}

\subsection{Lower bounds via the PPT criterion}\label{lowerbounds}

Here, we derive a lower bound on the time it takes a Markov semigroup $(\Phi_t)_{t\ge 0}$ to become entanglement breaking based on spectral data.
The idea is simply to use the useful fact that the set of PPT states includes the set of separable states.
We recall that a state $\rho\in\cD(\cH_A\otimes \cH_B)$ is said to have a
\textit{positive partial transpose} (PPT) if the operator $\id\otimes \mathcal{T}(\rho)$ is positive,
where the superoperator $\mathcal{T}$ denotes the transposition with respect to any basis (see Proposition 2.11 of \cite{aubrun2017}).
We will prove lower bounds for the semigroup to become $2$-locally entanglement annihilating, but note that also implies that it is not entanglement breaking.
\subsubsection*{Sufficient conditions for entanglement loss}

In the next lemma, given a channel $\Phi$ we find necessary conditions on $k$ for $\Phi^k$ to be $2$-locally entanglement annihilating.

\begin{lemma} \label{lem:1-shot-LEA2}
    Let $\Phi:\cB(\cH) \to \cB(\cH)$ be a quantum channel. If
    \begin{equation} \label{eq:2-norm-LEA}
        \|\Phi\|_2 > \sqrt{d_\cH}
    \end{equation}
    then $(\Phi\circ\cT)\otimes \Phi$ is not a positive map. In particular, $\Phi\otimes \Phi\notin \operatorname{LEA}_2(\cH)$, and hence $\Phi\not \in \EB(\cH)$.
\end{lemma}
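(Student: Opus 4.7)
The strategy is to evaluate $(\Phi\circ\cT)\otimes\Phi$ on the maximally entangled state $|\Omega\rangle\langle\Omega|$ and reduce non-positivity there to a trace comparison. The two key trace identities are $\tr[(\Phi\otimes\Phi)(F)]=d_\cH$ (from trace-preservation and $\tr F = d_\cH$) and $\tr[F\cdot(\Phi\otimes\Phi)(F)]=\|\Phi\|_2^2$, which by hypothesis exceeds $d_\cH$; here $F=\sum_{i,j}|i\rangle\langle j|\otimes|j\rangle\langle i|$ is the swap operator on $\cH\otimes\cH$.

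First I would verify
\[
((\Phi\circ\cT)\otimes\Phi)(|\Omega\rangle\langle\Omega|)=\tfrac{1}{d_\cH}(\Phi\otimes\Phi)(F)
\]
by expanding $|\Omega\rangle\langle\Omega|=d_\cH^{-1}\sum_{i,j}|i\rangle\langle j|\otimes |i\rangle\langle j|$ and using $\cT(|i\rangle\langle j|)=|j\rangle\langle i|$. The identity $\tr[F(A\otimes B)]=\tr(AB)$ then gives
\[
\tr[F\cdot(\Phi\otimes\Phi)(F)]=\sum_{i,j}\tr[\Phi(|i\rangle\langle j|)\Phi(|j\rangle\langle i|)]=\sum_{i,j}\|\Phi(|i\rangle\langle j|)\|_2^2=\|\Phi\|_2^2,
\]
using Hermiticity preservation $\Phi(|j\rangle\langle i|)=\Phi(|i\rangle\langle j|)^\dagger$ and HS-orthonormality of $\{|i\rangle\langle j|\}_{i,j}$ as a basis of $\cB(\cH)$. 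Suppose for contradiction $(\Phi\otimes\Phi)(F)\geq 0$. Since $F$ has eigenvalues $\pm 1$, $\|F\|_\infty=1$, and H\"older gives $|\tr[F\cdot (\Phi\otimes\Phi)(F)]|\leq \|(\Phi\otimes\Phi)(F)\|_1=\tr[(\Phi\otimes\Phi)(F)]=d_\cH$, contradicting $\|\Phi\|_2^2>d_\cH$. Thus $(\Phi\otimes\Phi)(F)\not\geq 0$, and since $|\Omega\rangle\langle\Omega|\geq 0$, the map $(\Phi\circ\cT)\otimes\Phi$ is not positive.

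For the ``in particular'' clause I would argue the contrapositive at the level of $|\Omega\rangle\langle\Omega|$: if $\Phi\otimes\Phi\in\operatorname{LEA}_2(\cH)$, then $((\Phi\circ\cT)\otimes\Phi)(|\Omega\rangle\langle\Omega|)\geq 0$, contradicting the previous step. The chain is: LEA$_2$ forces $(\Phi\otimes\Phi)(|\Omega\rangle\langle\Omega|)$ to be separable, hence PPT, so $(\Phi\otimes(\cT\circ\Phi))(|\Omega\rangle\langle\Omega|)=(\id\otimes\cT)(\Phi\otimes\Phi)(|\Omega\rangle\langle\Omega|)\geq 0$. A direct index computation using Hermiticity preservation and $Y^\dagger=\overline{Y^T}$ shows $(\Phi\otimes(\Phi\circ\cT))(|\Omega\rangle\langle\Omega|)=(\id\otimes\overline{(\cdot)}\,)((\Phi\otimes(\cT\circ\Phi))(|\Omega\rangle\langle\Omega|))$; since entrywise complex conjugation preserves positivity, the left-hand side is $\geq 0$. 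Finally the swap symmetry $F|\Omega\rangle=|\Omega\rangle$ yields $((\Phi\circ\cT)\otimes\Phi)(|\Omega\rangle\langle\Omega|)=F\cdot(\Phi\otimes(\Phi\circ\cT))(|\Omega\rangle\langle\Omega|)\cdot F\geq 0$. Hence $\Phi\otimes\Phi\notin\operatorname{LEA}_2(\cH)$, and $\Phi\notin\EB(\cH)$ follows from $\EB(\cH)\subset \operatorname{LEA}_2(\cH)$.

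The most delicate step is the ``in particular'' clause: carefully chaining three equivalences on $|\Omega\rangle\langle\Omega|$---partial transpose on the second factor, partial complex conjugation, and swap of tensor factors---while cleanly distinguishing $\Phi\circ\cT$ from $\cT\circ\Phi$. The remaining computations (the key identity and the two trace evaluations) are direct linear-algebra manipulations.
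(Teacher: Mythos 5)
Your first paragraph is correct and is essentially the paper's own argument: the identity $((\Phi\circ\cT)\otimes \Phi)(\ketbra{\Omega}{\Omega})=\tfrac{1}{d_\cH}(\Phi\otimes\Phi)(F)$, the two trace evaluations $\tr[(\Phi\otimes\Phi)(F)]=d_\cH$ and $\tr[F(\Phi\otimes\Phi)(F)]=\|\Phi\|_2^2$, and the conclusion that $(\Phi\otimes\Phi)(F)\not\geq 0$ are exactly what the paper does; the paper phrases your H\"older step as testing $(\Phi\otimes\Phi)(F)$ against the positive witness $\Pas=(\one\otimes\one-F)/2$, which is the same computation. This part correctly establishes that $(\Phi\circ\cT)\otimes\Phi$ is not a positive map.

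The gap is in your justification of the ``in particular'' clause, specifically the step that applies $\id\otimes\overline{(\cdot)}$ and invokes ``entrywise complex conjugation preserves positivity.'' Conjugating the \emph{whole} matrix preserves positivity, but conjugating \emph{one tensor factor} does not: for Hermitian $B$ one has $\overline{B}=B^{T}$, so on Hermitian decompositions partial conjugation is exactly partial transposition, the canonical example of an operation that destroys positivity. Worse, defined through decompositions it is not even a well-defined map ($A\otimes(iB)=(iA)\otimes B$, but the two ``partial conjugates'' differ by a sign); your identity holds for the particular non-Hermitian decomposition $\tfrac1d\sum_{ij}\Phi(\ketbra{i}{j})\otimes(\cT\circ\Phi)(\ketbra{i}{j})$, but no positivity statement can be transported along it. Concretely, the operator you know to be positive is $(\id\otimes\cT)\big((\Phi\otimes\Phi)(\ketbra{\Omega}{\Omega})\big)$ (partial transpose of the separable \emph{output}), while the one you need is $(\Phi\otimes\Phi)\big((\id\otimes\cT)(\ketbra{\Omega}{\Omega})\big)=\tfrac{1}{d_\cH}(\Phi\otimes\Phi)(F)$ (output on the partial transpose of the \emph{input}); commuting $\cT$ past $\Phi$ is precisely the difficulty, and the two operators genuinely differ. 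The paper's proof instead runs the certificate on the dual side: since $\tr[F\sigma]\geq 0$ for every separable $\sigma$, membership in $\LEA_2(\cH)$ forces the pulled-back witness $(\Phi^*\otimes\Phi^*)(F)$ to be positive semidefinite, and the inequality is then obtained by pairing with $F$ and using unitality of $\Phi^*$. If you want a complete argument you should reconstruct it along those lines (and note that even there the resulting upper bound on $\|\Phi\|_2^2$ is $\tr[\Phi(\one)^2]$, which equals $d_\cH$ for unital channels, so some additional care is needed in the non-unital case) rather than through a partial conjugation.
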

\begin{remark}
    This is similar to the prior result that for a quantum channel $\Phi$,
    \begin{equation} \label{eq:1norm_not_EB}
        \|\Phi\|_1 > d_{\cH}\implies \Phi\not\in \EB(\cH)
    \end{equation}
    which is due to the reshuffling criterion \cite{CW03} (see, e.g. \cite[eq.~(47)]{lami_entanglement-saving_2016}).
\end{remark}
\begin{proof}
    We have
    \[
        (   \Phi \otimes \Phi)\circ(\id \otimes \cT) (|\Omega\rangle\langle \Omega|) = \Phi\otimes \Phi(F),
    \]
    where $F=\sum\limits_{i,j=1}^d\ketbra{i}{j}\otimes\ketbra{j}{i}$ is the flip operator.
    Therefore, any witness $X_{AB}\geq 0$ with
    \begin{equation} \label{eq:ANNcert}
        \tr[ X_{AB} (\Phi\otimes \Phi)(F)] < 0
    \end{equation}
    certifies that $\Phi\notin \operatorname{LEA}_2(\cH)$. We rewrite \eqref{eq:ANNcert} as
    \[
        \tr[ ( \Phi^*\otimes  \Phi^* )(X_{AB}) \,F] < 0.
    \]
    Taking $X_{AB} = \Pas := \frac{\mathbb{I}\otimes \mathbb{I} - F}{2}$, the condition becomes
    \begin{equation} \label{eq:ann-ieq-1}
        d_\cH = \tr[ \Phi^*(\mathbb{I})\otimes  \Phi^*(\mathbb{I}) F] < \tr[(  \Phi^*\otimes \Phi^* )(F) F].
    \end{equation}
    using that $\Phi^*$ is unital and $\tr[F]=d_\cH$. The right-hand side can be rewritten as
    \begin{align*}
        \tr[ \Phi\otimes \Phi (F) F] & = \sum_{ij} \tr[ \Phi(\ket{i}\bra{j}) \otimes \Phi (\ket{j}\bra{i})  F] \\
                                     & = \sum_{ij}\tr[ \Phi(\ket{i}\bra{j}) \Phi (\ket{j}\bra{i})]             \\
                                     & =\sum_{ij}\|\Phi (\ket{j}\bra{i})\|_2^2= \|\Phi\|_2^2
    \end{align*}
    using $\tr[(A\otimes B) F] = \tr[AB]$ and that the squared 2-norm of a block matrix is the sum of the squared 2-norms of each submatrix.
    Thus, if $\|\Phi\|_2^2 > d_\cH$,  then $\Phi\notin \operatorname{LEA}_2(\cH)$. \qed
\end{proof}

\begin{corollary}\label{cor:n-shot-LEA2}
    Let $\Phi:\cB(\cH) \to \cB(\cH)$ with $\det(\Phi)\not=0$\footnote{Here, $\det(\Phi)$ simply denotes the product of the eigenvalues of $\Phi$.}. If $\|\Phi^{-k}\|_{2\to 2}\leq d_\cH$, then $\Phi^k\notin \operatorname{LEA}_2(\cH)$.
\end{corollary}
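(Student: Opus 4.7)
The plan is to reduce the statement directly to the one-shot criterion of Lemma~\ref{lem:1-shot-LEA2}, applied with the quantum channel $\Phi^k$ in place of $\Phi$. Since compositions of quantum channels are quantum channels, $\Phi^k$ satisfies the hypothesis of that lemma, so it is enough to prove the single norm estimate $\|\Phi^k\|_2 > \sqrt{d_\cH}$ in order to conclude that $\Phi^k\notin \operatorname{LEA}_2(\cH)$; here $\|\cdot\|_2$ is the Hilbert--Schmidt norm of a superoperator in the sense of \eqref{eq:J-preserves-2norm}.

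The remaining task is to lower bound $\|\Phi^k\|_2$ in terms of the assumed control on the inverse. Since $\det(\Phi)\neq 0$, the inverse $\Phi^{-k}$ is well defined and $\Phi^k\circ \Phi^{-k} = \id_{\cB(\cH)}$. The identity superoperator on $\cB(\cH)$, which is a Hilbert space of dimension $d_\cH^2$, has all singular values equal to $1$, so $\|\id\|_2 = d_\cH$. Combining this with the standard submultiplicativity $\|A\circ B\|_2 \leq \|A\|_2\, \|B\|_{2\to 2}$ (which is just the defining property of $\|\cdot\|_{2\to 2}$ as the operator norm induced by the Hilbert--Schmidt inner product on $\cB(\cH)$) gives
\[
    d_\cH \;=\; \|\id\|_2 \;=\; \|\Phi^k\circ \Phi^{-k}\|_2 \;\leq\; \|\Phi^k\|_2 \cdot \|\Phi^{-k}\|_{2\to 2},
\]
i.e.\ $\|\Phi^k\|_2 \geq d_\cH/\|\Phi^{-k}\|_{2\to 2}$. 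The hypothesis on $\|\Phi^{-k}\|_{2\to 2}$ then delivers the bound required to invoke Lemma~\ref{lem:1-shot-LEA2}, completing the proof.

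I do not anticipate any substantive obstacle: the argument is essentially a two-line combination of the reshuffling-type Hilbert--Schmidt criterion of Lemma~\ref{lem:1-shot-LEA2} with an elementary inverse bound. The only point that requires care is the dimensional factor $\|\id\|_2 = d_\cH$, which arises from viewing $\id$ as a superoperator on a space of dimension $d_\cH^2$ rather than $d_\cH$, and which matches the $\sqrt{d_\cH}$ threshold appearing in the hypothesis of the lemma and explains the particular form of the bound on $\|\Phi^{-k}\|_{2\to 2}$.
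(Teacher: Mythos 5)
Your reduction to \Cref{lem:1-shot-LEA2} and your derivation of the key estimate are sound, and in substance they coincide with what the paper does: the paper bounds $\|\Phi^k\|_2^2=\sum_{ij}\|\Phi^k(\ketbra{i}{j})\|_2^2$ term by term using $\|\Phi^k(A)\|_2\ge \|A\|_2/\|\Phi^{-k}\|_{2\to 2}$, which after summing over the $d_\cH^2$ unit-norm matrix units gives exactly your inequality $\|\Phi^k\|_2\ge d_\cH/\|\Phi^{-k}\|_{2\to 2}$; your route through $\|\id\|_2=d_\cH$ and the submultiplicative bound $\|A\circ B\|_2\le \|A\|_2\,\|B\|_{2\to 2}$ is a clean equivalent of this.

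However, the final sentence of your argument does not close. From $\|\Phi^k\|_2\ge d_\cH/\|\Phi^{-k}\|_{2\to 2}$ and the stated hypothesis $\|\Phi^{-k}\|_{2\to 2}\le d_\cH$ you only obtain $\|\Phi^k\|_2\ge 1$, which falls far short of the threshold $\|\Phi^k\|_2>\sqrt{d_\cH}$ that \Cref{lem:1-shot-LEA2} requires; to conclude along this route you need the stronger hypothesis $\|\Phi^{-k}\|_{2\to 2}<\sqrt{d_\cH}$. Your closing remark that the factor $\|\id\|_2=d_\cH$ ``matches the $\sqrt{d_\cH}$ threshold'' suggests the last arithmetic step was not actually carried out. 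To be fair, the paper's own proof contains essentially the same defect: its displayed inequality reads $\|\Phi^k\|_2^2\ge d_\cH^2/\|\Phi^{-k}\|_{2\to 2}$ where the correct power in the denominator is $2$ (and even taking that display at face value, the hypothesis only yields $\|\Phi^k\|_2^2\ge d_\cH$ rather than the strict inequality the lemma needs). So the real issue lies in the constant appearing in the statement of \Cref{cor:n-shot-LEA2}; but as written, your proof asserts a conclusion that its own estimate does not support, and you should either strengthen the hypothesis to $\|\Phi^{-k}\|_{2\to 2}<\sqrt{d_\cH}$ or produce a genuinely sharper lower bound on $\|\Phi^k\|_2$.
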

\begin{proof}
    By \Cref{lem:1-shot-LEA2}, $\Phi^k\not\in \LEA_2(\cH)$ if
    \[
        \|\Phi^k\|_2 > \sqrt{d_\cH}.
    \]
    We write
    \[
        \|\Phi^k\|_2^2 = \sum_{ij}\braket{\Phi^k(\ket{i}\bra{j}),\Phi^k(\ket{i}\bra{j})}_\text{HS}
    \]
    Now note that
    \begin{align*}
        \inf\limits_{A\not=0}\frac{\|\lb \Phi^k\rb(A)\|_2}{\|A\|_2}=\inf_{B\not=0}\frac{\|B\|_2}{\|\lb \Phi^{-k}\rb(B)\|_2}=\frac{1}{\|\Phi^{-k}\|_{2\to 2}},
    \end{align*}
    which can be seen by taking $A=\Phi^{-k}(B)$. Thus, it follows that
    \begin{align*}
        \sum_{ij} \langle (\Phi^k)(\ket{j}\bra{i}), ( \Phi^k) (\ket{j}\bra{i}) \rangle_\text{HS}\geq\frac{1}{\|\Phi^{-k}\|_{2\to 2}}d_\cH^2
    \end{align*}
    and we obtain the claim.
    \qed
\end{proof}
\begin{remark}\label{EBinLEA2}
    Since $\operatorname{EB}(\cH)\subset \operatorname{LEA}_2(\cH)$, these also constitute conditions for $\Phi^k$ to be entanglement breaking.
\end{remark}

\subsubsection*{Lower bounds}
In the next proposition, we derive a lower bound on $t_{\operatorname{EB}}$ for a continuous time quantum Markov semigroup using \Cref{lem:1-shot-LEA2}.

\begin{proposition}[Lower bound for $t_{\operatorname{EB}}$ and $t_{\operatorname{LEA_2}}$] \label{tEB_lower}
    For any continuous time quantum Markov semigroup $(\Phi_t)_{t\ge 0}$ on $\cB(\cH)$,
    \begin{align*}
        t_{\operatorname{EB}}((\Phi_{t})_{t\ge 0})    & >t_0,            &
        t_{\operatorname{LEA}_2}((\Phi_{t})_{t\ge 0}) & >\frac{1}{2}t_0, & \text{for} &  & t_0 & :=\frac{\log (d_{\cH}+1)}{\max_{j} |\re (\lambda_j(\cL))|}
    \end{align*}
    where $\{\lambda_j\}_{j=1}^{d_\cH^2}$ are the eigenvalues of $\cL$, the generator of the $\operatorname{QMS}$. In the case that $(\Phi_t)_{t\geq 0}$ is reversible with respect to a faithful state $\sigma$, $\cL$ is self-adjoint with respect to $\braket{\,\cdot\,,\,\cdot\,}_\sigma$, and $\max_{j} |\re (\lambda_j(\cL))|= \|\cL\|_{2,\sigma \to 2,\sigma}$ is the largest eigenvalue (in modulus) of $\cL$.
\end{proposition}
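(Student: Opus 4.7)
The plan is to use Lemma \ref{lem:1-shot-LEA2} for the $t_{\LEA_2}$ bound and the reshuffling criterion \eqref{eq:1norm_not_EB} for the $t_{\EB}$ bound, translating the conditions $\|\Phi_t\|_2 > \sqrt{d_\cH}$ and $\|\Phi_t\|_1 > d_\cH$ into spectral conditions on $\cL$. The key observation is that $\Phi_t = \e^{t\cL}$ has eigenvalues $\{\e^{t\lambda_j(\cL)}\}_j$, and by Weyl's majorization (applied to the singular values and eigenvalues of $\Phi_t$ viewed as a matrix on $\cB(\cH)$), we have
\begin{equation*}
\|\Phi_t\|_p^p \;\ge\; \sum_{j=1}^{d_\cH^2} |\lambda_j(\Phi_t)|^p \;=\; \sum_{j=1}^{d_\cH^2} \e^{p t\, \re \lambda_j(\cL)}
\end{equation*}
for $p \in \{1,2\}$.

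The next step is to isolate the eigenvalue $0 \in \spec(\cL)$ (which always exists since $\Phi_t(\sigma)=\sigma$ for some invariant state, or more generally $1 \in \spec(\Phi_t)$), and bound all remaining $d_\cH^2 - 1$ eigenvalues by $\re \lambda_j(\cL) \ge -\mu$, where $\mu := \max_j |\re \lambda_j(\cL)|$. This yields $\|\Phi_t\|_p^p \ge 1 + (d_\cH^2 - 1)\,\e^{-p t \mu}$. Requiring $\|\Phi_t\|_1 > d_\cH$ and factoring $d_\cH^2 - 1 = (d_\cH-1)(d_\cH+1)$ gives exactly $t < \log(d_\cH+1)/\mu = t_0$, and similarly requiring $\|\Phi_t\|_2^2 > d_\cH$ gives $t < t_0/2$. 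For all such $t$, Lemma \ref{lem:1-shot-LEA2} (respectively the reshuffling criterion) shows that $\Phi_t$ cannot be in $\LEA_2(\cH)$ (respectively $\EB(\cH)$), giving the claimed lower bounds on the entanglement times.

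For the reversible case, since $\cL$ is self-adjoint with respect to $\braket{\cdot,\cdot}_\sigma$, all eigenvalues are real, so $\re \lambda_j(\cL) = \lambda_j(\cL) \le 0$, and the spectral radius of $\cL$ on the $\sigma$-weighted Hilbert space agrees with the operator norm $\|\cL\|_{2,\sigma \to 2,\sigma}$; since $0$ is the top eigenvalue, $\max_j |\re \lambda_j(\cL)| = \max_j |\lambda_j(\cL)| = \|\cL\|_{2,\sigma \to 2,\sigma}$, as claimed.

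The main technical point to be careful about is the application of Weyl majorization: eigenvalues are not in general bounded by singular values entrywise, but the inequality $\sum_j |\lambda_j(A)|^p \le \sum_j s_j(A)^p = \|A\|_p^p$ for $p\ge 1$ does hold via Weyl's log-majorization combined with Schur's lemma, and this is the only nontrivial step. Everything else is elementary algebraic manipulation.
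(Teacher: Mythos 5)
Your proposal is correct and follows essentially the same route as the paper: bound $\|\Phi_t\|_p$ from below by the $p$-th power sum of the eigenvalues $\e^{t\lambda_j(\cL)}$ (Weyl/Schur majorization), isolate the zero eigenvalue of $\cL$ coming from trace preservation, bound the remaining $d_\cH^2-1$ terms by $\e^{-pt\max_j|\re\lambda_j(\cL)|}$, and feed the resulting condition into \Cref{lem:1-shot-LEA2} (for $p=2$) and the reshuffling criterion \eqref{eq:1norm_not_EB} (for $p=1$). The only cosmetic difference is that you name Weyl's majorization explicitly for both norms, where the paper states it only for $p=2$ and treats $p=1$ as analogous.
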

\begin{proof}
    Since $\|\Phi_t\|_2 \geq \|\lambda(\Phi_t)\|_2$ where $\lambda(\Phi_t)$ is the vector of eigenvalues (with multiplicity), we have
    \begin{align*}
        \|\Phi_t\|_2^2 & \geq \sum_{i=1}^{d_\cH^2} |\e^{t \lambda_j(\cL)}|^2 =  \sum_{i=1}^{d_\cH^2} \e^{2t \re (\lambda_j(\cL))} =  \sum_{i=1}^{d_\cH^2} \e^{-2t |\re (\lambda_j(\cL))|} \\
                       & \geq 1 +  (d_\cH^2-1) \e^{-2t \max_{j} |\re (\lambda_j(\cL))|}
    \end{align*}
    using that $\Phi_t$ is the exponential of $-t\cL$ and that $(\Phi_t)_{t\geq 0}$ is trace-preserving, so $\cL$ must have a zero eigenvalue.
    By \Cref{lem:1-shot-LEA2} it follows that
    \begin{align*}
        1 +  (d_\cH^2-1) \e^{-2t \max_{j } |\re (\lambda_j(\cL))|} \geq d_{\cH}
    \end{align*}
    is a sufficient condition for the semigroup not to be entanglement annihilating at time $t$ and the bound $t_{\operatorname{LEA_2}} > \frac{1}{2}t_0$ follows after
    rearranging the terms. The bound $t_\text{EB} > t_0$ follows in the same way from \eqref{eq:1norm_not_EB}.
    \qed
\end{proof}
\medskip

\begin{example}[The depolarizing channel]
    The quantum depolarizing channel $\tilde \Phi_p$ of parameter $p$ is a unital quantum channel modeling isotropic noise,
    \begin{equation*}
        \tilde{\Phi}_p\p \rho):\,=\, \p1-p) \frac{\mathbb{I}}{d_\cH}\tr[\rho]+p\,\rho.
    \end{equation*}
    Its output can be interpreted as the time-evolved state under the so-called quantum depolarizing semigroup:
    \begin{equation*}
        \Phi_t\p \rho)=\tilde{\Phi}_{\e^{-t}}(\rho)=\p1-\e^{-t})~\frac{\mathbb{I}}{d_\cH}\tr[\rho]+\e^{-t}\,\rho.
    \end{equation*}
    This continuous time QMS is primitive, with invariant state $\frac{\mathbb{I}}{d_\cH}$, and has generator $\cL(\rho) =  \frac{\mathbb{I}}{d_\cH}\tr[\rho]-\rho$, which is the difference of a rank-1 projection and the identity map, and therefore has eigenvalues $0$ and $-1$. The spectral gap of $1$ provides the following strong decoherence via \eqref{poincareDF}:
    \begin{equation*}
        \|\id\otimes \Phi_t\p \rho)- \tr_2\p\rho)\otimes {\mathbb{I}}/{d_\cH}\|_1\le \sqrt{d_\cH}\,\e^{-t}.
    \end{equation*}
    \Cref{tEB_lower} and \Cref{tEB_upper} therefore yield lower and upper bounds on the entanglement-breaking time, namely
    \[
        \log(d_\cH + 1) \leq t_{\EB}\p (\Phi_t)_{t\ge 0})\le \tfrac{5}{2}\log(d_\cH).
    \]
    These bounds match up to constant factors. In fact, the lower bound is tight:
    \[
        t_{\EB}\p (\Phi_t)_{t\ge 0}) = \log(d_\cH+1)
    \]
    as calculated in, e.g. \cite{lami2016}. Likewise, considering time evolution under $\Phi_t\otimes \Phi_t$, \Cref{tEB_lower}  and \Cref{equ:eafromspgap} yield the bounds
    \[
        \frac{1}{2}\log(d_\cH + 1) \leq t_{\operatorname{LEA_2}}((\Phi_t)_{t\ge 0})\leq 3\,\log(d_\cH)
    \]
    which can be compared to the exact result,
    \[
        t_{\operatorname{LEA_2}}((\Phi_t)_{t\ge 0} )= \log \left( \frac{d_\cH + 1 + \sqrt{3}}{1 + \sqrt{3}} \right)
    \]
    from \cite{lami2016}.
\end{example}

\section{Structural properties of discrete time evolutions} \label{sec3}
In the previous section, we showed that the only continuous time quantum Markov semigroups which become entanglement breaking in finite time were primitive quantum Markov semigroups.  Now, we investigate the discrete time case, where the situation becomes much more subtle. First, while every primitive discrete time QMS is eventually entanglement breaking, not all eventually entanglement breaking discrete time QMS are primitive. For example, the quantum channel
\[
    \Phi(\rho) = \tr(\rho)\,\ket{0}\bra{0}
\]
has a rank-1 invariant state (and thus is not primitive), but the associated discrete time QMS $\{\Phi^n\}_{n=1}^\infty$ is  entanglement-breaking.

\begin{proposition}[Density of classes of quantum channels] \label{thm:EEB_dense} The sets of faithful quantum channels, primitive quantum channels, and eventually entanglement breaking quantum channels are dense in the set of all quantum channels. Likewise, the sets of faithful PPT quantum channels, primitive PPT
    quantum channels, and eventually entanglement breaking PPT quantum channels are dense in the set of all PPT quantum channels.
\end{proposition}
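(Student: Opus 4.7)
The plan is to establish the stronger statement that primitive channels are dense in the set of all quantum channels (and primitive PPT channels are dense in PPT channels); this suffices, since every primitive channel is faithful, and every primitive channel in discrete time is eventually entanglement breaking. The latter holds because if $\Phi$ is primitive with full-rank invariant state $\sigma$, then $(\Phi^n\otimes\id)(|\Omega\rangle\langle\Omega|)\to \sigma\otimes d_\cH^{-1}\mathbb{I}_\cH$ as $n\to\infty$, and \Cref{lemma1} places a neighbourhood of $\sigma\otimes d_\cH^{-1}\mathbb{I}_\cH$ inside $\SEP(\cH:\cH)$, so $\Phi^n\in\EB(\cH)$ for $n$ large enough.

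Given any quantum channel $\Phi$, I would consider the perturbation
\[
\Phi_\epsilon := (1-\epsilon)\Phi + \epsilon D, \qquad D(\rho) := d_\cH^{-1}\,\tr(\rho)\,\mathbb{I}_\cH,
\]
for $\epsilon\in(0,1]$. Each $\Phi_\epsilon$ is a convex combination of quantum channels, hence a quantum channel, and clearly $\Phi_\epsilon\to\Phi$ as $\epsilon\to 0$ in any norm. The key observation is that the subspace $V\subset\cB(\cH)$ of traceless operators is invariant under $\Phi$ and is annihilated by $D$, so $\Phi_\epsilon|_V = (1-\epsilon)\Phi|_V$; consequently every eigenvalue of $\Phi_\epsilon$ on $V$ has modulus at most $1-\epsilon<1$. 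On the one-dimensional quotient $\cB(\cH)/V$, trace preservation forces $\Phi_\epsilon$ to act as the identity, contributing a single additional eigenvalue equal to $1$. Thus $1$ is an algebraically simple eigenvalue of $\Phi_\epsilon$ and the only peripheral one. Moreover, any fixed point $\sigma$ of $\Phi_\epsilon$ satisfies $\sigma = (1-\epsilon)\Phi(\sigma) + \epsilon d_\cH^{-1}\mathbb{I}_\cH \geq \epsilon d_\cH^{-1}\mathbb{I}_\cH > 0$, so $\sigma$ is full-rank. Hence $\Phi_\epsilon$ is primitive for every $\epsilon\in(0,1]$.

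For the PPT analogue, $D$ has rank-one Kraus operators $d_\cH^{-1/2}\ketbra{i}{j}$ and is therefore entanglement breaking, hence PPT. Since the Choi map and the partial transpose are linear and the positive cone is convex, $\Phi_\epsilon$ is PPT whenever $\Phi$ is. Combined with the primitivity argument this yields density of the primitive PPT channels (hence of the faithful PPT and eventually entanglement breaking PPT channels) inside the set of PPT channels. The only substantive step is the brief spectral analysis of $\Phi_\epsilon$ outlined above; I do not foresee any significant obstacle.
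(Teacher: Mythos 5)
Your proof is correct, but it takes a genuinely different route from the paper's. The paper perturbs in two stages: first $\Phi' = (1-\epsilon)\Phi + \epsilon\,\Psi_{\mathbb{I}/d}$, shown to be irreducible (hence faithful) because it maps states to strictly positive operators; then $\Phi'' = (1-\epsilon)\Phi' + \epsilon\,\Psi_\sigma$ with $\sigma$ the invariant state of $\Phi'$, where the identity $\Phi'\Psi_\sigma = \Psi_\sigma\Phi' = \Psi_\sigma$ gives the closed form $(\Phi'')^n = (1-\epsilon)^n(\Phi')^n + (1-(1-\epsilon)^n)\Psi_\sigma$, whose Choi matrix visibly converges into the interior of the separable set. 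You instead use a single perturbation and a spectral argument: the traceless subspace is $\Phi$-invariant and annihilated by the depolarizing channel, so every eigenvalue of $\Phi_\epsilon$ other than the simple eigenvalue $1$ is damped to modulus at most $1-\epsilon$, and the fixed point is forced to be full rank by $\sigma \geq \epsilon\,\mathbb{I}/d_\cH$. This is arguably cleaner: it shows that the paper's intermediate channel $\Phi'$ is already primitive, so the second perturbation is unnecessary, at the cost of invoking the spectral characterization of primitivity (a nondegenerate, unique peripheral eigenvalue $1$ with faithful eigenvector) rather than only the elementary convergence of $(\Phi'')^n$. Both arguments conclude primitive $\Rightarrow$ EEB via \Cref{lemma1} applied to the limit $\sigma\otimes\mathbb{I}/d_\cH$ of the normalized Choi matrices, and both handle the PPT case identically (the perturbing channels are entanglement breaking, and PPT is preserved under convex combination). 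One small point to tidy: ``any fixed point $\sigma$'' should read ``the, up to scaling unique, positive semidefinite fixed point,'' since the inequality $\Phi(\sigma)\geq 0$ presupposes $\sigma\geq 0$; this is harmless because the simple eigenvalue $1$ always admits a state as eigenvector.
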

\begin{remark}
    C.f. Theorem V.2 of \cite{rahaman2018eventually} which proves that EEB unital channels are dense in the set of unital channels and the remarks after~ \cite[Theorem 11]{lami_entanglement-saving_2016}, where they show that EEB quantum channels are of full measure in the set of quantum channels.
\end{remark}
\begin{proof}
    First note that the set of faithful quantum channels is dense in the set of quantum channels.
    To see this, let $\Phi:\cB(\cH) \to \cB(\cH)$ be a quantum channel and for a state $\tau\in\cD\p \cH)$ define
    $\Psi_{\tau}:\cB(\cH) \to \cB(\cH)$ to be the quantum channel that acts as
    \begin{align*}
        \Psi_{\tau}(X)=\tr(X)\tau\,.
    \end{align*}
    Note that $\Psi_\tau^2=\Psi_\tau$.
    For $\epsilon\in (0,1)$, we define the channel $\Phi' :=(1-\epsilon)\Phi+\epsilon \,\Psi_{\frac{\mathbb{I}}{d}}$ which maps
    quantum states to strictly positive operators. This is known to be equivalent to the irreducibility of $\Phi'$~\cite[Theorem 6.2]{wolf2012quantum}, which in particular implies the existence of a stationary state $\sigma\in\cD_+\p \cH)$
        of full rank. As $\eps\to 0$, we have $\Phi'\to \Phi$, which shows the density of faithful channels.
    Next, we reuse the same $\Phi'$ and consider the quantum channel $\Phi'' :=(1-\epsilon)\Phi'+\epsilon \Psi_\sigma$. As $\eps\to 0$, $\Phi''\to \Phi$, and moreover for each $\eps\in(0,1)$, $\Phi''$ is eventually entanglement breaking. To see this, note that as $\Phi'$ has $\sigma$ as a stationary state, it follows that $\Phi'\Psi_\sigma=\Psi_\sigma \Phi'=\Psi_\sigma$ and, thus,
    \begin{align*}
        \lb \Phi''\rb^n=(1-\epsilon)^n\lb \Phi'\rb^n+(1-(1-\epsilon)^n)\Psi_\sigma\,.
    \end{align*}
    Clearly, $\lim\limits_{n \to\infty}\lb \Phi''\rb^n=\Psi_\sigma$, which is primitive.
    We have
    \begin{align*}
        \id\otimes \Psi_\sigma\p |\Omega\rangle\langle \Omega|)=\frac{\mathbb{I}}{d_\cH}\otimes\sigma.
    \end{align*}
    As we saw in \Cref{robustnesssep}, separable states of the form
    $\sigma_1\otimes\sigma_2$ with $\sigma_1,\sigma_2\in\cD_+\p \cH)$ full rank are in the interior of the set of separable states. Thus,
    the Choi matrix of $\lb \Phi''\rb^n$ converges to a separable state in the interior of the set of separable states and will be separable
    for some finite $n$. This also allows us to conclude the density of primitive channels, since they are in particular EEB.

    The corresponding results for PPT channels are obtained by noting that if $\Phi$ is PPT, then $\Phi'$ and $\Phi''$ are PPT as well.
    \qed
\end{proof}

One potentially useful observation is that to prove the PPT$^2$ conjecture, one may restrict to a dense set. More generally, define for $k\in \mathbb{N}$ the PPT$^k$ conjecture as the claim that any PPT quantum channel $\Phi$ has $\Phi^k\in \EB(\cH)$.
\begin{corollary} \label{cor:PPT2-reduction-prim}
    The $\operatorname{PPT}^k$ conjecture holds if and only if every $\operatorname{PPT}$ primitive quantum channel $\Phi$ has $\Phi^k\in \EB(\cH)$.
\end{corollary}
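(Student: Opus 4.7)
The plan is a soft argument combining two facts already in the paper: the density of primitive PPT channels among all PPT channels (\Cref{thm:EEB_dense}) and the closedness of $\EB(\cH)$. The forward implication is trivial, since every primitive PPT channel is a PPT channel, so the PPT$^k$ conjecture immediately implies the restricted statement. Only the reverse implication requires any work.

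For the reverse implication, I would proceed as follows. Let $\Phi$ be a PPT quantum channel. By \Cref{thm:EEB_dense}, there exists a sequence $(\Phi_n)_{n\in\NN}$ of primitive PPT channels with $\Phi_n \to \Phi$ (in any norm topology on the finite-dimensional space of superoperators; all such topologies coincide). By the assumed hypothesis, $\Phi_n^k \in \EB(\cH)$ for every $n$. Because composition of linear maps is jointly continuous, $\Phi_n^k \to \Phi^k$, so it suffices to show that $\EB(\cH)$ is closed; then $\Phi^k \in \EB(\cH)$ follows automatically.

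Closedness of $\EB(\cH)$ is standard, but for completeness I would justify it as follows. Recall the Choi isomorphism $J: \Psi \mapsto d_\cH(\Psi\otimes \id)(\ketbra{\Omega}{\Omega})$, which is a continuous (in fact, linear) bijection between linear maps on $\cB(\cH)$ and operators on $\cH\otimes \cH$, with $\EB(\cH) = J\inv(d_\cH \cdot \SEP(\cH:\cH))$. Since $\SEP(\cH:\cH)$ is the convex hull of the compact set of normalized product states and hence closed, its preimage $\EB(\cH)$ under $J$ is closed as well.

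I do not expect any real obstacle here; the only care needed is to check that the approximating sequence furnished by \Cref{thm:EEB_dense} genuinely consists of primitive channels lying in $\PPT(\cH)$, which is explicit in the proof of that proposition, so that the hypothesis of the corollary actually applies to each $\Phi_n$.
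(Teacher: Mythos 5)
Your proposal is correct and follows essentially the same route as the paper: approximate a given PPT channel by primitive PPT channels via \Cref{thm:EEB_dense}, use continuity of $\Phi\mapsto\Phi^k$, and conclude by closedness of $\EB(\cH)$. The extra justification you give for closedness of $\EB(\cH)$ (via the Choi isomorphism and closedness of $\SEP$) is standard and the paper simply takes it for granted.
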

\begin{proof}
    Let $\Phi$ be a PPT quantum channel. By \Cref{thm:EEB_dense}, there is a sequence $\Phi_n\to \Phi$ with each $\Phi_n$ being PPT and primitive. By continuity of $x\mapsto x^k$, we have $\Phi^k = \lim_{n\to\infty} \Phi_n^k$. Since the set of entanglement-breaking channels is closed, $\Phi^k\in \EB(\cH)$.
\end{proof}

\medskip
As mentioned before, our goal in this paper is to estimate the time after which a quantum system undergoing a quantum Markovian evolution has lost all its entanglement. In order to better characterize discrete-time evolutions for which asking this question makes sense, we first need to leave aside those evolutions for which the phenomenon does not occur, that is, evolutions that either destroy entanglement after an infinite amount of time (EB$_\infty$), or even those of never-vanishing output entanglement (AES).

A big part of this question was already answered in the discrete time case by \cite{lami_entanglement-saving_2016}. In this paper (Theorem 21), the authors showed that, given a quantum channel $\Phi:\cB(\cH)\to\cB(\cH)$ with $\dim(\ker\Phi)<2(d_\cH-1)$, $\{\Phi^n\}_{n\in\NN}\in \operatorname{ES}(\cH)$ if and only if either it has a non-full rank positive fixed point, or the number of peripheral eigenvalues is strictly greater than $1$, which itself is equivalent to the existence of $1\le n\le d_\cH$ such that $\Phi^n$ has a non-full rank positive fixed point. In the same paper, the authors showed that if $\Phi$ has more than $d_\cH$ peripheral eigenvalues, then $\{\Phi^n\}_{n\in\NN}$ is asymptotically entanglement saving. These interesting results clearly show the link between the spectral properties of the quantum channel $\Phi$ and the entanglement properties of the corresponding discrete time quantum Markov semigroup $\{\Phi^n\}_{n\in\NN}$. In the next subsections, we further develop this intuition. First, we prove the following simple consequence of a result from \cite{lami_entanglement-saving_2016}.

\begin{lemma} \label{d-periph-ES}
    Let $\Phi$ be a quantum channel on $\cH$ with $d_\cH$ peripheral eigenvalues counted with multiplicity and at least one non-zero non-peripheral eigenvalue. Then $\{\Phi^n\}_{n\in\NN}\in\operatorname{ES}(\cH)$.
\end{lemma}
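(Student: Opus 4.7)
The plan is to exploit \Cref{lem:1-shot-LEA2}: it suffices to show that $\|\Phi^n\|_2 > \sqrt{d_\cH}$ for every $n\in\NN$, since this forces $\Phi^n \notin \LEA_2(\cH)$, and, recalling $\EB(\cH)\subset \LEA_2(\cH)$ from \Cref{EBinLEA2}, implies $\Phi^n\notin \EB(\cH)$. Therefore $\{\Phi^n\}_{n\in\NN}\in \ES(\cH)$.

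To bound the $2$-norm, I would use Schur's inequality: for any linear operator $T$ on a finite-dimensional inner product space, the Hilbert--Schmidt (i.e.\ $\|\cdot\|_2$) norm dominates the sum of squared moduli of the eigenvalues,
\begin{equation*}
    \|T\|_2^2 \;\ge\; \sum_{j}|\lambda_j(T)|^2,
\end{equation*}
which follows by putting $T$ into upper-triangular form via a unitary conjugation (Schur decomposition) and noting that the diagonal of the triangular form carries the eigenvalues while the off-diagonal entries only add non-negative contributions to $\|T\|_2^2$.

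Now I apply this to $T=\Phi^n$, whose spectrum is $\{\lambda_j(\Phi)^n\}_j$ (counted with multiplicity) by the spectral mapping theorem. By hypothesis, $\Phi$ has $d_\cH$ peripheral eigenvalues $\{\lambda_j\}_{j=1}^{d_\cH}$ of modulus $1$, and at least one additional non-zero eigenvalue $\mu$ with $|\mu|<1$. Hence
\begin{equation*}
    \|\Phi^n\|_2^2 \;\ge\; \sum_{j=1}^{d_\cH}|\lambda_j|^{2n} + |\mu|^{2n} \;=\; d_\cH + |\mu|^{2n} \;>\; d_\cH
\end{equation*}
for every $n\ge 1$, which is precisely the sufficient condition from \Cref{lem:1-shot-LEA2}, completing the argument.

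There is essentially no obstacle in this plan; the only subtlety is being careful that the peripheral part contributes exactly $d_\cH$ (not less) to the lower bound, which requires counting peripheral eigenvalues \emph{with} algebraic multiplicity, matching the statement of the lemma. Note also that the hypothesis ``at least one non-zero non-peripheral eigenvalue'' is what prevents the bound from collapsing to $\|\Phi^n\|_2^2\ge d_\cH$ with equality (which would leave room for $\Phi^n \in \EB(\cH)$ via this criterion); without it, $\Phi$ would be conjugate to a map whose non-peripheral part is nilpotent, and the $2$-norm argument alone would not suffice.
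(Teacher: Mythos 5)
Your proof is correct, and it reaches the conclusion by a slightly different key lemma than the paper. The paper's proof uses the reshuffling criterion in the form \eqref{eq:1norm_not_EB}: it lower-bounds the superoperator trace norm by the sum of eigenvalue moduli (Weyl's majorant inequality), getting $\|\Phi^N\|_1 \geq d_\cH + \sum_{|\lambda_k|<1}|\lambda_k|^N > d_\cH$, and concludes $\Phi^N\notin\EB(\cH)$ by citing \cite{lami_entanglement-saving_2016}. You instead invoke \Cref{lem:1-shot-LEA2}, lower-bounding the Hilbert--Schmidt norm by the $\ell_2$-norm of the eigenvalue vector via Schur's inequality --- the same inequality the paper itself uses in the proof of \Cref{tEB_lower} --- to get $\|\Phi^n\|_2^2 \geq d_\cH + |\mu|^{2n} > d_\cH$. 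Both arguments rest on the identical counting: the $d_\cH$ peripheral eigenvalues (with multiplicity, semisimple, hence persisting under powers) contribute exactly $d_\cH$ to the relevant power sum, and the extra non-zero eigenvalue pushes the norm strictly over the threshold for every $n$. Your route has two small advantages: it is self-contained within the paper (\Cref{lem:1-shot-LEA2} is proved earlier, whereas the $1$-norm criterion is imported), and it yields the marginally stronger conclusion $\Phi^n\notin\LEA_2(\cH)$ for all $n$, from which $\Phi^n\notin\EB(\cH)$ follows via \Cref{EBinLEA2}. Your closing remark about the role of the non-zero non-peripheral eigenvalue is also accurate: without it the bound degenerates to $\|\Phi^n\|_2^2\ge d_\cH$ with possible equality, and indeed the paper's \Cref{prop:decomp-irred} shows $\Phi_P^k$ alone is entanglement breaking.
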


\begin{proof}
    For any $N$, $\Phi^N$ has $d_\cH$ peripheral eigenvalues and at least one non-peripheral eigenvalue non-zero. Thus, if $\{\lambda_k\}_{k=1}^{d^2}$ are the eigenvalues of $\Phi$ counted with multiplicity, we have
    \[
        \|\Phi^N\|_1 \geq \sum_{k=1}^{d^2} |\lambda_k^N| = d_\cH + \sum_{\lambda_k : |\lambda_k|<1} |\lambda_k^N| > d_\cH\,.
    \]
    The result follows from the fact that a quantum channel $\Psi:\cB(\cH)\to \cB(\cH)$ such that $\|\Psi\|_1>d_\cH$ is not entanglement breaking (see \cite{lami_entanglement-saving_2016}).  \qed
\end{proof}

\subsection{Irreducible evolutions} \label{sec:irred}

The previous result suggest looking at channels with less than $d_\cH$ peripheral eigenvalues counted with multiplicity in order to characterize EEB. Since primitive channels are eventually entanglement-breaking, a natural next step is to consider the wider class of irreducible channels introduced in \Cref{sec:preliminaries}.

    The study of the peripheral spectral properties of irreducible maps is the subject of the non-commutative Perron-Frobenius theory for irreducible completely positive maps; see \cite{evans_spectral_1978}, or \cite[Section 6.2]{wolf2012quantum}. See also \cite[Appendix A]{HJPR2} for a summary of this theory and extensions to deformations of irreducible CPTP maps. Together with the Jordan decomposition (see, e.g. \cite[Section 1.5.4]{Kato}), this theory provides a useful decomposition of irreducible quantum channels. In the next proposition, we recall this decomposition and provide a minimal set of quantities needed to construct such a map. This structural result will be used later to construct irreducible quantum channels that are not EEB. Moreover, it will allow us to show that irreducibility combined with the PPT property is enough to ensure that a quantum channel is EEB.

\begin{proposition} \label{prop:decomp-irred}
    Consider
    \begin{enumerate}
        \item A number $z\in \{1,\dotsc,d_\cH\}$,
        \item An orthogonal resolution of the identity $\{ p_n\}_{n=0}^{z-1}$, i.e., $\sum_{n=0}^{z-1}p_n = \one$ and $p_n^\dagger = p_n^2 = p_n$ for each $n$,
        \item A faithful state $\sigma$ such that $[\sigma,p_n]=0$ and $\tr[\sigma p_n] = \frac{1}{z}$, for each $n=0,\dotsc,z-1$,
        \item A linear map $ \Phi_Q$ such that:
              \begin{enumerate}
                  \item \label{it:Phi_Q-spr} $\spr ( \Phi_Q) < 1$
                  \item \label{it:Phi_Q-J} $J( \Phi_Q) \geq - z\left(\sigma \otimes \one\right)L_{1}$, where for $k=0,\dotsc,z-1$, we define $L_k := \sum_{n=0}^{z-1}  p_{n-k} \otimes  p_{n}$ where the subscripts are taken modulo $z$.
                  \item \label{it:Phi_Q-Pj} We have
                        \begin{equation} \label{eq:Phi_Q_kills_pn}
                            \Phi_Q( \sigma p_n) = \Phi_Q^*(p_n) = 0, \qquad \forall n = 0,1,\dotsc,z-1.
                        \end{equation}
              \end{enumerate}
    \end{enumerate}
    Let
    \begin{equation} \label{eq:Phi-irred}
        \Phi:= \sum_{n=0}^{z-1}  \theta^n P_n + \Phi_Q
    \end{equation}
    where $P_n(\cdot) = \tr[ u^{-n} \, \cdot\, ] u^n \sigma$ for $u := \sum_{k=0}^{z-1}  \theta^k  p_k$ and $ \theta := \exp(2\iu \pi / z)$.
    Then  $\Phi$ is an irreducible quantum channel. On the other hand, any irreducible quantum channel $\Phi$ can be decomposed as \eqref{eq:Phi-irred} for some choices of $z, \{p_n\}_{n=0}^{z-1}$, $\sigma$, and $\Phi_Q$ as in (1)--(4).
    Moreover, in either case, $\sigma$ is the unique fix point\footnote{up to a multiplicative constant} of $\Phi$; $P_n(\cdot)$ are its peripheral eigenprojections, associated to eigenvalues $\theta^n$ and eigenvectors $u^n \sigma$; and, for any $j,k =0,\dotsc,z-1$, we have the intertwining relations
    \begin{equation} \label{eq:irred_on_p_blocks}
        \Phi(p_j X p_k) = p_{j-1}\Phi(X) p_{k-1}, \quad \text{and}\quad \Phi^*(p_j X p_k) = p_{j+1}\Phi^*(X) p_{k+1}, \quad \forall X\in \cB(\cH)
    \end{equation}
    where the subscripts are interpreted modulo $z$. Additionally, for $\Phi_P:= \sum_{n=0}^{z-1}  \theta^n P_n$,
    \begin{equation} \label{eq:irred-choi-P}
        J(\Phi_P^k) = \hat J_k :=  z\left(\sigma \otimes \one\right) L_{k}= \sum_{m=0}^{z-1} \tr[p_m] \frac{p_{m-k}\sigma p_{m-k}}{\tr[ p_{m-k} \sigma]} \otimes \frac{p_m}{\tr[p_m]}.
    \end{equation}

\end{proposition}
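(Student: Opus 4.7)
The plan is to split the proof into a forward implication (data $(z,\{p_n\},\sigma,\Phi_Q)$ satisfying (1)--(4) yields an irreducible quantum channel $\Phi$ via \eqref{eq:Phi-irred}) and a backward implication (every irreducible quantum channel decomposes in this form), and then to derive the remaining assertions about uniqueness of $\sigma$ as fixed point, the peripheral eigenprojections, the intertwining relations \eqref{eq:irred_on_p_blocks}, and the Choi formula \eqref{eq:irred-choi-P}.

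For the forward direction, I would first observe that the superoperators $P_n$ are mutually orthogonal rank-one projectors: a direct computation gives $P_n P_m(X) = \tr[u^{-m}X]\,\tr[u^{m-n}\sigma]\,u^n\sigma$, and $\tr[u^{m-n}\sigma] = \tfrac{1}{z}\sum_k \theta^{k(m-n)} = \delta_{m,n}$ using $[\sigma,p_k]=0$ and $\tr[\sigma p_k]=1/z$ from hypothesis~(3). Next I would verify that $\Phi$ is CPTP. For complete positivity, a character-sum computation (expand $u^m = \sum_j \theta^{jm}p_j$ in the basis where $u$ is diagonal, use $[\sigma,p_j]=0$, and sum the geometric series) gives $J(\Phi_P) = z(\sigma\otimes\one)L_1$, so the positivity condition~\ref{it:Phi_Q-J} on $J(\Phi_Q)$ delivers $J(\Phi) = J(\Phi_P) + J(\Phi_Q) \geq 0$. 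For trace preservation, the annihilation condition~\ref{it:Phi_Q-Pj} gives $\Phi_Q^*(\one) = \sum_n \Phi_Q^*(p_n) = 0$; and $P_n^*(\one) = \tr[\sigma u^{-n}]u^n$ equals $\one$ when $n=0$ and vanishes otherwise (again by $\tr[\sigma p_k]=1/z$), so $\Phi^*(\one) = \one$.

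To establish irreducibility, I would use condition~\ref{it:Phi_Q-Pj} once more to show $\Phi_P\Phi_Q = \Phi_Q\Phi_P = 0$: indeed $P_n\Phi_Q(X) = \tr[\Phi_Q^*(u^{-n})X]u^n\sigma = 0$ because $\Phi_Q^*(u^{-n}) = \sum_k \theta^{-nk}\Phi_Q^*(p_k) = 0$, and dually $\Phi_Q(u^n\sigma) = \sum_k \theta^{nk}\Phi_Q(p_k\sigma) = 0$. Combined with the spectral radius bound~\ref{it:Phi_Q-spr}, this forces the peripheral spectrum of $\Phi$ to be exactly $\{\theta^n\}_{n=0}^{z-1}$, each eigenvalue simple with eigenvector $u^n\sigma$; in particular $1$ is simple with faithful eigenvector $\sigma$, and hence $\Phi$ is irreducible by the characterization recalled in \Cref{sec:preliminaries}. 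The Choi formula $J(\Phi_P^k) = z(\sigma\otimes\one)L_k$ then follows from $\Phi_P^k = \sum_n \theta^{nk}P_n$ (using orthogonality of the $P_n$) and the same character-sum identity already used for $k=1$; the intertwining relations \eqref{eq:irred_on_p_blocks} follow from non-commutative Perron-Frobenius applied to the now-established irreducible $\Phi$, together with the verification $\Phi^*(p_n) = p_{n+1}$ (one more character sum using $\tr[\sigma p_n]=1/z$) which confirms that the cyclic peripheral projections produced by Perron-Frobenius coincide with our $p_n$.

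For the backward direction I would invoke the non-commutative Perron-Frobenius theorem for irreducible CP maps~\cite{evans_spectral_1978} (see also~\cite[Section~6.2]{wolf2012quantum}): for an irreducible $\Phi$ this produces the integer $z$, the orthogonal resolution $\{p_n\}$, the faithful invariant state $\sigma$ with $[\sigma,p_n]=0$ and $\tr[\sigma p_n]=1/z$, and the cyclic relations \eqref{eq:irred_on_p_blocks}. Setting $\Phi_Q := \Phi - \Phi_P$ with $\Phi_P$ the peripheral Jordan projection from \eqref{Jordandecomp}, condition \ref{it:Phi_Q-spr} is immediate from the Jordan decomposition; \ref{it:Phi_Q-J} follows from $J(\Phi)\geq 0$ combined with the (already established) formula $J(\Phi_P) = z(\sigma\otimes\one)L_1$; and \ref{it:Phi_Q-Pj} follows from Fourier inversion $\sigma p_n = \tfrac{1}{z}\sum_m \theta^{-mn}u^m\sigma$, which exhibits $\sigma p_n$ as a linear combination of peripheral eigenvectors of $\Phi$ on which $\Phi_Q$ vanishes, and dually for $\Phi_Q^*(p_n)$. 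The main obstacle I anticipate is this dual verification for $\Phi^*$: it requires the peripheral eigenvectors of $\Phi^*$ to be identified explicitly, so one must develop Perron-Frobenius simultaneously for $\Phi$ and $\Phi^*$ and match the two pictures via the duality formula $P_n^*(X) = \tr[\sigma u^{-n}X]\,u^n$ compatibly with the normalization $\tr[\sigma p_n] = 1/z$. Once that pairing is pinned down, the remaining verifications all reduce to routine character-sum algebra.
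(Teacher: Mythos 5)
Your proposal is correct and, for most of the argument, follows the same route as the paper: trace preservation from condition (4c), complete positivity from the explicit formula $J(\Phi_P)=z(\sigma\otimes\one)L_1$ combined with (4b), the orthogonality $P_nP_m=\delta_{nm}P_n$ and the Choi computation via roots-of-unity sums, the discrete-Fourier equivalence between (4c) and $P_j\circ\Phi_Q=\Phi_Q\circ P_j=0$, and an appeal to the noncommutative Perron--Frobenius theory of \cite{evans_spectral_1978} and \cite[Section 6.2]{wolf2012quantum} for the converse decomposition (which the paper likewise only cites rather than reproves).

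The one step where you take a genuinely different path is the proof that $\Phi$ is irreducible. The paper computes the Ces\`aro mean $\frac{1}{M}\sum_{n=0}^{M-1}\Phi^n$, controls the peripheral part by a geometric series and the $\Phi_Q$ part by Gelfand's formula, shows the mean converges to $P_0=\tr[\cdot]\,\sigma$ with $\sigma$ faithful, and concludes via the ergodic-average characterization \eqref{eq:irred-erogdic-average}. You instead use $\Phi_P\Phi_Q=\Phi_Q\Phi_P=0$ and $\spr(\Phi_Q)<1$ to block-diagonalize $\Phi$ along $\ran(\sum_nP_n)\oplus\ker(\sum_nP_n)$, read off that each $\theta^n$ (in particular $1$) is a simple eigenvalue with faithful right eigenvector $\sigma$ and left eigenvector $\one$, and invoke the spectral characterization of irreducibility from \cite[Theorem 6.4]{wolf2012quantum}. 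Both characterizations are quoted in the paper's preliminaries and both arguments are sound; yours is slightly more economical in that it avoids the asymptotic estimate, while the paper's makes the limiting ergodic projection explicit. Two minor simplifications you could adopt: for \eqref{eq:irred_on_p_blocks} there is no need to re-enter Perron--Frobenius theory --- once $\Phi^*(p_n)=p_{n+1}$ is verified, each $p_n$ lies in the multiplicative domain of the unital CP map $\Phi^*$ (its image being again a projection), which gives $\Phi^*(p_jXp_k)=p_{j+1}\Phi^*(X)p_{k+1}$ directly; and the dual verification you flag as the main obstacle in the converse direction is immediate from $p_n=\frac{1}{z}\sum_m\theta^{-mn}u^m$ together with $\Phi_Q^*\circ P_m^*=(P_m\circ\Phi_Q)^*=0$, i.e.\ the pairing of the two pictures is already supplied by taking adjoints in the Jordan decomposition.
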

\begin{proof}
    Let us note that \eqref{it:Phi_Q-Pj} is equivalent to the property that
    \begin{equation} \label{eq:Phi_Q-kills-Pj}
        \Phi_Q\circ P_j =  P_j \circ \Phi_Q = 0, \qquad \forall\, j=0,\dotsc,z-1.
    \end{equation}
    To see this, note that the generalized discrete Fourier transform
    \[
        \cF : \bigoplus_{j=0}^{z-1} \cB(\cH) \to \bigoplus_{j=0}^{z-1} \cB(\cH)
    \]
    given by $\cF( (X_0,\dotsc, X_{z-1} ) ) = (Y_0,\dotsc, Y_{z-1} )$ for $Y_n = \sum_{j=0}^{z-1} \theta^{nj}X_j$ is an invertible linear transformation, with inverse $\cF\inv( (Y_n)_{n=0}^{z-1} ) = \frac{1}{z}\cF( (Y_{z-\ell})_{\ell=0}^{z-1} )$. All indices are taken mod $z$. Next, using the definition of the $P_n$, \eqref{eq:Phi_Q-kills-Pj} is equivalent to
    \[
        \Phi_Q( u^{n} \sigma) = 0, \qquad  \Phi_Q^*(u^n)=0, \qquad \forall\, j=0,\dotsc,z-1.
    \]
    Since $\vec 0 = (\Phi_Q(u^n \sigma))_{n=0}^{z-1} = \cF( (\Phi_Q(p_j \sigma) )_{j=0}^{z-1} )$ and $\vec 0 = (\Phi_Q(u^{n}))_{n=0}^{z-1} = \cF( (\Phi_Q(p_j))_{j=0}^{z-1} )$, and $\cF$ has trivial kernel, \eqref{eq:Phi_Q-kills-Pj} implies \eqref{it:Phi_Q-Pj}. The converse follows similarly.
    \Cref{eq:irred_on_p_blocks} follows from a simple computation. The fact that an irreducible map can be decomposed as (\ref{eq:Phi-irred}) for some choices of $z$, $\{p_n\}_{n=0}^{z-1}$, $\sigma$ and $\Phi_Q$ as in (1)-(4b) and (\ref{eq:Phi_Q-kills-Pj}) is not new, and we refer to \cite[Section 6.2]{wolf2012quantum} for more details. We believe however that the forward implication is, and postpone its proof to \Cref{proof:prop:decomp-irred} for sake of clarity.

    \qed
\end{proof}

\medskip
\begin{remarks*}
    ~\begin{itemize}
        \item The matrix $\hat J_k$ is separable, and thus $\Phi_P^k$ is entanglement-breaking, for any $k\geq 1$.
        \item If $z >1$, then $\frac{1}{d_\cH}\hat J_k$ does not have full support. Thus, $\frac{1}{d_\cH}\hat J_k$ is on the boundary of the set of density matrices, and thus on the boundary of $\SEP$ and $\PPT$ as well. In fact, we can say something stronger than this: whenever $z>1$, there exist entangled density matrices arbitrarily close to each $\frac{1}{d_\cH} \hat J_k$, $k=0,1,\dotsc,z-1$. To see this, note that $\frac{1}{d_\cH}\hat J_k \in L_k \cD(\cH \otimes \cH ) L_k$. However, we can construct entangled states in $L_j \cD(\cH \otimes \cH ) L_j$ for any $j\neq k$. For instance, let $\ket{0} \in p_0 \cH$, $\ket{1} \in p_1 \cH$, $\ket{ j_0 } \in p_{-j} \cH$, and $\ket{j_1} \in p_{1-j} \cH$ be normalized vectors. Then
              \[
                  \ket{\Omega_j} :=\frac{1}{\sqrt{2}}\big( \ket{j_0}\otimes \ket{0} + \ket{j_1}\otimes \ket{1}  \big)
              \]
              is (local-unitarily equivalent to) a Bell state, and has
              \[
                  L_j \ket{\Omega_j} = \sum_{n=0}^{z-1} (p_{n-j} \otimes p_n )\ket{\Omega_j} = \ket{\Omega_j},
              \]
              and thus $L_j \ket{\Omega_j}\bra{\Omega_j} L_j \in L_j \cD(\cH \otimes \cH ) L_j$. Thus, for any $t\geq 0$,
              \[
                  (1-t)\frac{1}{d_\cH} \hat J_k + t \ket{\Omega_j}\bra{\Omega_j} = (1-t) \frac{1}{d_\cH} \hat J_k \Big|_{L_k} \,\oplus \,t\ket{\Omega_j}\bra{\Omega_j}\Big|_{L_j}
              \]
              is an entangled density matrix, and can be made arbitrarily close to  $\frac{1}{d_\cH} \hat J_k $ by sending $t\to 0$.

              Moreover, the limit points of $\{\frac{1}{d_\cH}J(\Phi^n)\}_{n=0}^\infty$ are exactly $\{\frac{1}{d_\cH}\hat J_k\}_{k=0}^{z-1}$, which follows from the mixing time results in the sequel (e.g. \Cref{prop:discrete-Poincare}). This shows the  analysis of \Cref{upperbounds} and \Cref{lowerbounds} in the primitive case does not carry over to the irreducible discrete-time case, because the aforementioned limit points are neither in the interior of the set of separable states nor the interior of the set of PPT states.
        \item The above proposition shows that the peripheral eigenvectors of irreducible channels $\Phi$ commute. In Theorem 32 of \cite{lami_entanglement-saving_2016}, the authors show that asymptotically entanglement saving channels are characterized by the fact that they possess at least two noncommuting phase points. This implies $\{\Phi^n\}_{n\in\NN} \in \AEB(\cH)$, which generalizes Corollary 6.1 of \cite{rahaman2018eventually} to the non-unital case. There, the authors show that a unital irreducible quantum channel is AEB if and only if its phase space is commutative.
        \item The intertwining property \eqref{eq:irred_on_p_blocks} holds for $\Phi$ and $\Phi_P$ (which itself is an irreducible map), and therefore for $\Phi_Q$. This implies $J(\Phi_Q) = L_1 J(\Phi_Q) L_1$, i.e., the Choi matrix of $\Phi_Q$ is supported on the same subspace as that of $\Phi_P$.
        \item Given a map $\Phi_Q$ which intertwines with $\{p_n\}_{n=0}^{z-1}$, a sufficient condition for $\Phi_Q\ge - z\left(\sigma \otimes \one\right)L_{1}\equiv -J(\Phi_P)$ is given by
              \begin{equation}\label{eq:Phi_Q-suff-cond-for-CPB}
                  \|\Phi_Q\|_2 =\|J(\Phi_Q)\|_2 \leq z \lambda_\text{min}(\sigma)\,,
              \end{equation}
              since in that case
              \[
                  J(\Phi_Q) \geq - \spr(\Phi_Q) L_1 \geq -\|J(\Phi_Q)\|_2 L_1 \geq  -z \lambda_\text{min}(\sigma) L_1 \geq -z L_1(\sigma\otimes\id)L_1 =- J(\Phi_P).
              \]
    \end{itemize}
\end{remarks*}

The intertwining property \eqref{eq:irred_on_p_blocks} of irreducible maps is very useful for understanding their entanglement breaking properties. In fact, a slight generalization of this property will prove useful.

\begin{definition}[$\big(\{p_i\}_{i=0}^{z-1}, \{\tilde p_i\}_{i=0}^{z-1}\big)$-block preserving]
    Given two orthogonal resolutions of the identity, $\{p_i\}_{i=0}^{z-1}$ and $\{\tilde p_i\}_{i=0}^{z-1}$, we say that a quantum channel $\Phi$ is $\big(\{p_i\}_{i=0}^{z-1}, \{\tilde p_i\}_{i=0}^{z-1}\big)$-block preserving if for all $i,j\in \{0,\dotsc,z-1\}$,
    \[
        \Phi( p_i \cB(\cH) p_j) \subset \tilde p_i \cB(\cH)\tilde p_j.
    \]
\end{definition}
From \eqref{eq:irred_on_p_blocks}, if $\Phi$ is irreducible, then $\Phi^k$ is $\big(\{p_i\}_{i=0}^{z-1}, \{ p_{i-k}\}_{i=0}^{z-1}\big)$-block preserving. Using this notion, the following result shows that PPT channels which are block preserving in the above sense must annihilate off-diagonal blocks. We note that results of a similar flavor were shown in~\cite{Cariello2016,Cariello2018}.
\begin{lemma} \label{lem:block preserving-PPT}
    If $\Phi$ is a $\operatorname{PPT}$ quantum channel and  is $\big(\{p_i\}_{i=0}^{z-1}, \{ \tilde p_{i}\}_{i=0}^{z-1}\big)$-block preserving, then
    \begin{equation}\label{eq:PPT-kills-offdiag}
        \Phi( p_i \cB(\cH) p_j) = \{0\}
    \end{equation}
    for all $i\neq j$.
\end{lemma}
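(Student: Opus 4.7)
The plan is to work with the Choi matrix and exploit the interplay between the block-preserving structure of $\Phi$ and the positivity of the partially transposed Choi matrix.

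First, I would fix orthonormal bases of $\cH$ adapted to the two resolutions: let $S_i$ denote the index set of basis vectors spanning $p_i\cH$, and similarly let $\tilde S_i$ correspond to $\tilde p_i\cH$. Writing $J(\Phi) = \sum_{i,j} \Phi(\ket{i}\bra{j}) \otimes \ket{i}\bra{j}$, the block-preserving hypothesis tells me that for $i \in S_b$ and $j \in S_d$ one has $\Phi(\ket{i}\bra{j}) \in \tilde p_b \cB(\cH) \tilde p_d$. Hence
\[
(\tilde p_a \otimes p_b)\, J(\Phi)\, (\tilde p_c \otimes p_d) \;\neq\; 0 \quad \Longrightarrow\quad a = b \text{ and } c = d,
\]
so $J(\Phi)$ is supported on the "diagonal" blocks indexed by the joint projectors $\tilde p_a \otimes p_a$.

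Next, I would take the partial transpose on the second system, since $\Phi$ is PPT means $J^{T_2}(\Phi) := (\id\otimes \cT) J(\Phi) \geq 0$. Using $J^{T_2}(\Phi) = \sum_{i,j} \Phi(\ket{i}\bra{j}) \otimes \ket{j}\bra{i}$, a direct computation gives
\[
(\tilde p_a \otimes p_b)\, J^{T_2}(\Phi)\, (\tilde p_c \otimes p_d) \;\neq\; 0 \quad \Longrightarrow\quad a = d \text{ and } c = b.
\]
In particular, for $a \neq b$, the \emph{diagonal} block $(\tilde p_a \otimes p_b) J^{T_2}(\Phi) (\tilde p_a \otimes p_b)$ vanishes (this would require $a=b$).

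The final step uses the elementary fact that a positive operator $M \geq 0$ with a zero diagonal block $PMP = 0$ (for an orthogonal projection $P$) automatically has $PM = MP = 0$, since $PMP = 0$ implies $M^{1/2}P = 0$. Applying this with $P = \tilde p_a \otimes p_b$ and $M = J^{T_2}(\Phi)$ for $a \neq b$, I conclude that the off-diagonal block $(\tilde p_a \otimes p_b) J^{T_2}(\Phi) (\tilde p_b \otimes p_a)$ also vanishes. But this block equals
\[
\sum_{i \in S_a,\, j \in S_b} \Phi(\ket{i}\bra{j}) \otimes \ket{j}\bra{i},
\]
and since the operators $\ket{j}\bra{i}$ with $i\in S_a$, $j\in S_b$ are linearly independent, each $\Phi(\ket{i}\bra{j})$ must vanish. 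By linearity this gives $\Phi(p_a \cB(\cH) p_b) = \{0\}$ as required.

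The only real subtlety is the bookkeeping of how partial transposition permutes the block indices on the second tensor factor — once the structure of $J^{T_2}(\Phi)$ is identified, the rest is a one-line consequence of positivity. No step presents a substantial obstacle.
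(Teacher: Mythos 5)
Your proof is correct. The mechanism is ultimately the same as the paper's — partial transposition relocates the blocks $\Phi(\ket{i}\bra{j})$ (with $i,j$ in different $p$-blocks) to positions where the corresponding diagonal blocks of the partially transposed Choi matrix vanish, and positivity then forces them to be zero — but your execution differs in a way worth noting. The paper argues by contrapositive with an explicit rank-two Bell-type witness $\Omega_{01}$, writes out the resulting $2\times 2$ block matrix, and extracts a negative eigenvalue from the nonzero traceless self-adjoint block $\bigl(\begin{smallmatrix}0 & \Phi(\ket{1}\bra{0})\\ \Phi(\ket{0}\bra{1}) & 0\end{smallmatrix}\bigr)$. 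You instead work globally with $J^{T_2}(\Phi)$, identify its block support from the block-preserving hypothesis, and invoke the standard fact that $M\ge 0$ and $PMP=0$ imply $PM=MP=0$; that fact is precisely the abstract form of the paper's eigenvalue computation. Your route avoids the WLOG reduction and the explicit witness at the cost of heavier index bookkeeping; both are complete. One small remark: your opening observation about the block support of $J(\Phi)$ itself is never used — only the structure of $J^{T_2}(\Phi)$ enters the argument — and you implicitly use that transposing the first versus the second factor gives equivalent PPT conditions, which is standard.
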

\begin{proof}
    Let us prove the contrapositive. Assume for some $i\neq j$, $\Phi(p_i \cB(\cH) p_j) \neq \{0\}$; without loss of generality, take $i=0$ and $j=1$. Then let $\ket{0}\bra{1}\in p_0 \cB(\cH) p_1$ such that $\Phi(\ket{0}\bra{1})\neq 0$. Then also $\Phi(\ket{1}\bra{0})\neq 0$.
    Consider $\Omega_{01}$, the Bell-type state associated to $\ket{00}+\ket{11}$. The block-preserving assumption yields
    \[
        \Phi(\ket{i}\bra{j}) =\Phi(p_i\ket{i}\bra{j}p_j) = \tilde p_i \,\Phi(\ket{i}\bra{j}) \tilde p_j
    \]
    for each $i,j\in \{0,1\}$.
    Then, neglecting rows and columns of all zeros, $(\Phi\otimes\id)(\Omega_{01})$ can be written as
    \[
        (\Phi\otimes \id)(\Omega_{01}) = \sum_{i,j=0}^1 \Phi(\ket{i}\bra{j})\otimes\ket{i}\bra{j}= \left(\begin{array}{@{}c|c@{}}
                \begin{matrix}
                    \Phi(\ket{0}\bra{0}) & 0 \\
                    0                    & 0
                \end{matrix}
                                           & \begin{matrix}
                    0 & \Phi(\ket{1}\bra{0}) \\
                    0 & 0
                \end{matrix} \\
                \hline
                \begin{matrix}
                    0                    & 0 \\
                    \Phi(\ket{0}\bra{1}) & 0
                \end{matrix} &
                \begin{matrix}
                    0 & 0                    \\
                    0 & \Phi(\ket{1}\bra{1})
                \end{matrix}
            \end{array}\right)
    \]
    as $\{\tilde p_0,\tilde p_1\}$ blocks in the  $\{\ket{0},\ket{1}\}$ basis.
    Now, taking the partial transpose on the first system,
    \[
        (\mathcal{T}\otimes \id )\circ (\Phi\otimes \id)(\Omega_{01}) = \left(\begin{array}{@{}c|c@{}}
                \begin{matrix}
                    \Phi(\ket{0}\bra{0}) & 0 \\
                    0                    & 0
                \end{matrix}
                                           & \begin{matrix}
                    0                    & 0 \\
                    \Phi(\ket{1}\bra{0}) & 0
                \end{matrix} \\
                \hline
                \begin{matrix}
                    0 & \Phi(\ket{0}\bra{1}) \\
                    0 & 0
                \end{matrix} &
                \begin{matrix}
                    0 & 0                    \\
                    0 & \Phi(\ket{1}\bra{1})
                \end{matrix}
            \end{array}\right).
    \]
    The eigenvalues of this matrix are the eigenvalues of $  \Phi(\ket{0}\bra{0})$, together with the eigenvalues of $\Phi(\ket{1}\bra{1})$, and the eigenvalues of the block matrix
    \[
        X =\begin{pmatrix}
            0                    & \Phi(\ket{1}\bra{0}) \\
            \Phi(\ket{0}\bra{1}) & 0
        \end{pmatrix}_.
    \]
    Since $X$ is non-zero, self-adjoint, and traceless, it must have both strictly positive and strictly negative eigenvalues. Thus, $(\mathcal{T}\otimes \id )\circ(\Phi\otimes \id)(\Omega_{01}) $ has negative eigenvalues, so $(\Phi\otimes \id)(\Omega_{01})$ is not PPT.
\end{proof}

\begin{theorem}\label{theoremEB}
    Let $\Phi$ be an irreducible $\operatorname{CPTP}$ map, $k\geq 1$, and let us adopt the notation of \Cref{prop:decomp-irred}. Assume $\Phi$ is not primitive (i.e $z\geq 2$). Then
    \begin{equation} \label{eq:irred-nec_for_EB}
        \Phi^k\in \PPT(\cH) \implies \Phi^k(p_i \cB(\cH) p_j) = \{0\} \quad \forall\, i\neq j.
    \end{equation}
    On the other hand, if
    \begin{equation}\label{eq:assume_PhiQ_kills_offdiag}
        \Phi^k(p_i \cB(\cH) p_j) = \{0\} \quad \forall\, i\neq j
    \end{equation}
    and additionally, for each $j$ such that $\rank p_j \geq 2$,
    \begin{equation} \label{eq:irred-JQ-block-EB}
        \left\|J(\Phi_Q^k|_{p_j \cB(\cH)p_j})\right\|_2\leq z\lambda_{\min}(\sigma|_{p_j\cH})
    \end{equation}
    then $\Phi^k\in \EB(\cH)$.

    In the case $z=d_\cH$, we may write $p_j=\ket{j}\bra{j}$ for $j=0,\dotsc,z-1$. In this case,  $\Phi_Q^k(\ket{i}\bra{j})=0$ for all $i\neq j$, if and only if $\Phi^k\in \EB(\cH)$.
\end{theorem}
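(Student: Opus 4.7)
The plan is to prove each implication by exploiting the intertwining relation~\eqref{eq:irred_on_p_blocks} of Proposition~\ref{prop:decomp-irred}, using Lemma~\ref{lem:block preserving-PPT} for the PPT direction and Lemma~\ref{lemma1} (separable balls around product operators) for the EB direction. The Jordan decomposition $\Phi^k=\Phi_P^k+\Phi_Q^k$ and the identities $\Phi_P\Phi_Q=\Phi_Q\Phi_P=0$ coming from~\eqref{eq:Phi_Q-kills-Pj} will play a central role throughout.

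For the first (PPT) implication, the intertwining relation~\eqref{eq:irred_on_p_blocks} directly implies that $\Phi^k$ is $(\{p_i\}_{i=0}^{z-1},\{p_{i-k}\}_{i=0}^{z-1})$-block preserving, and $z\ge 2$ ensures that the resolutions of the identity $\{p_i\}$ are non-trivial. Lemma~\ref{lem:block preserving-PPT} then yields $\Phi^k(p_i\cB(\cH)p_j)=\{0\}$ for every $i\ne j$.

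For the second (EB) implication, the hypothesis that $\Phi^k$ kills all off-diagonal blocks, together with the block-preserving property, forces $J(\Phi^k)$ to split as a direct sum
\[ J(\Phi^k) \;=\; \bigoplus_{m=0}^{z-1} J_m^{(k)},\qquad J_m^{(k)} := \sum_{i,j\in I_m}\Phi^k(\ket{i}\bra{j})\otimes\ket{i}\bra{j}\in \cB(p_{m-k}\cH\otimes p_m\cH), \]
where $I_m$ indexes an orthonormal basis of $p_m\cH$; separability of $J(\Phi^k)$ is then equivalent to separability of each summand $J_m^{(k)}$. A direct computation using $P_n(p_m X p_m)=\theta^{-nm}\tr(p_m X)\,u^n\sigma$ together with discrete Fourier orthogonality shows that $\Phi_P^k|_{p_m\cB(\cH)p_m}$ is the measure-and-prepare map $Y\mapsto z\tr(Y)\,p_{m-k}\sigma p_{m-k}$, whose Choi matrix is the product operator $z(p_{m-k}\sigma p_{m-k})\otimes p_m$. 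Hence $J_m^{(k)}= z(p_{m-k}\sigma p_{m-k})\otimes p_m + J(\Phi_Q^k|_{p_m\cB(\cH)p_m})$, and Lemma~\ref{lemma1} on the bipartite space $p_{m-k}\cH\otimes p_m\cH$ yields separability of $J_m^{(k)}$ as soon as the 2-norm of the perturbation is dominated by the product of minimum eigenvalues of the two factors; the blocks with $\rank p_m=1$ or $\rank p_{m-k}=1$ are trivially separable (one factor of the bipartition is one-dimensional), so the only non-trivial constraint is exactly~\eqref{eq:irred-JQ-block-EB}. The third statement is a short corollary: when $z=d_\cH$ all $p_j$ have rank one, so \eqref{eq:irred-JQ-block-EB} is vacuous, and since $P_n(p_iXp_j)=0$ for $i\ne j$ (because $u$ commutes with each $p_j$) one has $\Phi_P^k(\ket i\bra j)=0$, whence $\Phi^k(\ket i\bra j)=\Phi_Q^k(\ket i\bra j)$; the equivalence then follows by combining the direction just proved with $\EB(\cH)\subset \PPT(\cH)$ and Part~1.

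The main technical subtlety lies in a mild indexing mismatch in the EB direction: Lemma~\ref{lemma1} applied to the $m$-th block naturally yields the bound $\|J(\Phi_Q^k|_{p_m\cB(\cH)p_m})\|_2\le z\lambda_{\min}(\sigma|_{p_{m-k}\cH})$ with the codomain index $m-k$, whereas the theorem phrases the hypothesis using the domain index $j$ in $\lambda_{\min}(\sigma|_{p_j\cH})$. Because the hypothesis is imposed uniformly over all $j$ with $\rank p_j\ge 2$, a simple re-indexing handles this, or alternatively one can exploit the cyclic relation $\Phi(p_j\sigma p_j)=p_{j-1}\sigma p_{j-1}$ to relate the spectra of the restrictions of $\sigma$ on different blocks.
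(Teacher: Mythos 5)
Your proof follows essentially the same route as the paper's: the PPT direction via the $\big(\{p_i\},\{p_{i-k}\}\big)$-block-preserving property and \Cref{lem:block preserving-PPT}, and the EB direction by splitting $J(\Phi^k)$ into the blocks $z\,p_{n-k}\sigma\otimes p_n + J(\Phi_Q^k|_{p_n\cB(\cH)p_n})$ supported on $p_{n-k}\cH\otimes p_n\cH$ and applying the separable-ball bound (\Cref{lemma1} / \eqref{eq:robusteness_of_sep_bound}) to each, with the $z=d_\cH$ case handled as a corollary exactly as in the paper. The index mismatch you flag ($\lambda_{\min}(\sigma|_{p_{n-k}\cH})$ being what the ball argument actually requires, versus $\lambda_{\min}(\sigma|_{p_j\cH})$ in hypothesis \eqref{eq:irred-JQ-block-EB}) is a genuine observation that the paper's own proof silently elides; it is a blemish in the stated hypothesis rather than a gap in your argument, and your treatment of it is if anything more careful than the paper's.
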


\begin{remark}
    Under the assumption \eqref{eq:assume_PhiQ_kills_offdiag},
    \begin{align}\label{EB}
        \|J(\Phi_Q^k)\|_2  \leq z\lambda_\text{min}(\sigma)
    \end{align}
    implies \eqref{eq:irred-JQ-block-EB}, as
    \[
        \|J(\Phi_Q^k|_{p_j \cB(\cH)p_j})\|_2 \leq \sum_{n=0}^{z-1}\|J(\Phi_Q^k|_{p_n \cB(\cH)p_n})\|_2 = \|J(\Phi_Q^k)\|_2  \leq z\lambda_\text{min}(\sigma) \leq  z\lambda_\text{min}(\sigma|_{p_j}).
    \]
\end{remark}

\begin{proof}
    \eqref{eq:irred-nec_for_EB} follows immediately from the fact that if $\Phi$ is irreducible, then $\Phi^k$ is $\big(\{p_i\}_{i=0}^{z-1}, \{ p_{i-k}\}_{i=0}^{z-1}\big)$-block preserving, and \Cref{lem:block preserving-PPT}.

    Next, assume \eqref{eq:assume_PhiQ_kills_offdiag} holds. Let $\{\ket{i}\}_{i=0}^{d_\cH-1}$ be an orthonormal basis of $\cH$ such that there are disjoint index sets $\{I_n\}_{n=0}^{z-1}$ such that for each $n$, $\ket{i}\in p_n \cH$ for $i\in I_n$. Taking the Choi matrix in the $\ket{i}$ basis,
    \begin{align*}
        J(\Phi^k) & = \sum_{i,j}\Phi^k(\ket{i}\bra{j})\otimes \ket{i}\bra{j}                        \\
                  & = \sum_{n=0}^{z-1}\sum_{i,j\in I_n}\Phi^k(\ket{i}\bra{j})\otimes \ket{i}\bra{j}
    \end{align*}
    using that $\Phi^k(\ket{i}\bra{j})=0$ whenever $i$ and $j$ do not share an index set $I_n$, which follows from \eqref{eq:assume_PhiQ_kills_offdiag}. Then (see e.g. \eqref{eqJPhiP}):
    \begin{align*}
        J(\Phi^k) & = \sum_{n=0}^{z-1}\sum_{i,j\in I_n}\left(\delta_{i,j}\,z\, p_{n-k}\,\sigma + \Phi^k_Q(\ket{i}\bra{j})\right)\otimes \ket{i}\bra{j}  \\
                  & = \sum_{n=0}^{z-1}\left(z \,p_{n-k}\,\sigma \otimes p_n + \sum_{i,j\in I_n}\Phi^k_Q(\ket{i}\bra{j})\otimes \ket{i}\bra{j}\right)\,.
    \end{align*}
    We note that $\sum_{i,j\in I_n}\Phi^k_Q(\ket{i}\bra{j})\otimes \ket{i}\bra{j}\in p_{n-k}\otimes p_{n}\cB(\cH\otimes \cH)p_{n-k}\otimes p_{n}$, since both $\Phi$ and $\Phi_P$ map $p_n\cB(\cH)p_n$ to $p_{n-1}\cB(\cH)p_{n-1}$ (see \cref{eq:irred_on_p_blocks,eq:PhikP_formula}).
    Since by assumption
    \[
        \|J(\Phi^k_Q|_{p_n\cB(\cH)p_n})\|_2\leq z\lambda_\text{min}(\sigma|_{p_n\cH})
    \]
    then \eqref{eq:robusteness_of_sep_bound} applied to the Hilbert space $p_{n-k} \cH\,\otimes\, p_n\cH$ gives that $z \sigma \otimes \one |_{p_{n-k}\cH\,\otimes\, p_n\cH} + J(\Phi^k_Q|_{p_n\cB(\cH)p_n})$ is separable on that space. We may embed this state in $\cB(\cH\otimes \cH)$ (without changing the tensor product structure) yielding that
    \[
        p_{n-k}\otimes p_n \left(z \sigma \otimes \one  + J(\Phi^k_Q)\right)p_{n-k}\otimes p_n
    \]
    is a non-full-rank separable state on $\cB(\cH\otimes \cH)$. Summing over $n$ then yields the fact that $J(\Phi^k)$ is separable, so $\Phi^k\in \EB(\cH)$.

    For the case $z=d$, we simply note that $\rank p_j = 1$ for all $j$, and hence the statement follows from the above two results.\qed
\end{proof}

Since the limit points of $\{J(\Phi^n)\}_{n=1}^\infty$ are separable but arbitrarily close to entangled states (as shown in the remarks following \Cref{prop:decomp-irred}) the question arises of whether or not there are quantum channels that are both irreducible and in $\operatorname{EB}_\infty(\cH)$. In the case that $\Phi$ has maximal period $z=d_\cH$, \Cref{d-periph-ES} resolves this affirmatively as long as $\Phi_Q$ is not nilpotent. In the case when the period is much less than the dimension; say $z=2 <d_\cH$, then the underlying argument (relying on the reshuffling criterion via \eqref{eq:1norm_not_EB}) provides little help: $\|\Phi^n\|_1 = z + o(n) < d_\cH$ for large $n$. However, using \Cref{theoremEB}, we can design $\EB_\infty(\cH)$ irreducible channels rather easily, as shown in the following example.
\begin{example}
    Let us construct an irreducible quantum channel $\Phi$ via \Cref{prop:decomp-irred}. with period $z=2$. We choose any full rank state $\sigma$ as the invariant state, and any pair of orthogonal projections $\{p_1,p_2\}$ which commute with $\sigma$ as the Perron-Frobenius projections. Let $d_j=\rank p_j$ for $j=0,1$, and let $\{\ket{e_i}\}_{i=0}^{d_0-1}$ be an eigenbasis of $\sigma p_0$ and likewise $\{\ket{f_i}\}_{i=0}^{d_1-1}$ be an eigenbasis of $\sigma p_1$. For some $\lambda\in \R$ with $|\lambda|<1$, define $\Phi_Q$ by
    \begin{gather*}
        \Phi_Q(\ket{e_i}\bra{e_j}) = \Phi_Q(\ket{f_i}\bra{f_j}) = 0, \\
        \Phi_Q(\ket{e_i}\bra{f_j})^* =
        \Phi_Q(\ket{f_i}\bra{e_j}) = \delta_{i,0}\delta_{j,0}\bar\lambda \ket{e_0}\bra{f_0}
    \end{gather*}
    for each $i,j$. Then $\Phi_Q(\sigma p_j)=\Phi_Q^*(p_j)=0$, $\spr(\Phi_Q)=|\lambda|<1$. Since
    \[
        J(\Phi_Q) = \lambda \ket{f_0}\bra{e_0} \otimes \ket{f_0}\bra{e_0}  +\bar\lambda \ket{e_0}\bra{f_0}\otimes  \ket{f_0}\bra{e_0} \,,
    \]
    we have $\|J(\Phi_Q)\|_2 = \sqrt{2}\,|\lambda|$. Thus, choosing $\lambda$ to satisfy $0<|\lambda| < \min(1,  \sqrt{2} \,\lambda_\text{min}(\sigma))$, we have  $J(\Phi_Q)\geq -J(\Phi_P)$ by \eqref{eq:Phi_Q-suff-cond-for-CPB}, and  $\Phi$ is an irreducible CPTP map of period 2. Moreover,
    \[
        \Phi_Q^n(\ket{e_0}\bra{f_0}) = \begin{cases}
            \lambda^n \ket{f_0}\bra{e_0} & n\text{ odd}  \\
            \lambda^n \ket{e_0}\bra{f_0} & n\text{ even}
        \end{cases}
    \]
    which is non-zero for any $n$. Thus, $\Phi\in \EB_\infty(\cH)$ by \Cref{theoremEB}. Additionally, it was proved in Theorem 21 of \cite{lami_entanglement-saving_2016} that if the number of zero eigenvalues of $\Phi$ is strictly less than $2(d_\cH-1)$, the fact that $\Phi$ has at least two peripheral eigenvalues implies that it is entanglement saving. However, in the present example,
        $\Phi$ has four non-zero eigenvalues (two peripheral, and $\pm |\lambda|$), and therefore $d_\cH^2-4$ zero eigenvalues. Thus, for $d_\cH\geq 3$, Theorem~11 of \cite{lami_entanglement-saving_2016} does not apply to $\Phi$.
\end{example}
\medskip
\Cref{theoremEB} implies the following corollary, which is extended to the non-irreducible case in the next section. Estimates on the entanglement-breaking index, which we recall is the first $n\in\NN$ such that $\Phi^n$ is entanglement-breaking, are provided in \Cref{EBTdiscrete}.
\begin{corollary} \label{cor:irred-PPT-EEB}
    Any irreducible $\operatorname{PPT}$ channel $\Phi:\cB(\cH)\to \cB(\cH)$ is eventually entanglement breaking.
\end{corollary}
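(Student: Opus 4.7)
My plan is to reduce the claim to a direct application of \Cref{theoremEB} once the $\PPT$ assumption has been propagated to all iterates of $\Phi$ and the spectral decay of $\Phi_Q$ has been quantified.

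First I would dispatch the primitive case ($z=1$) separately, since it does not even require the $\PPT$ hypothesis. In that case $\Phi^k\to \sigma\,\tr(\cdot)$ as $k\to\infty$, so $\tfrac{1}{d_\cH}J(\Phi^k)\to \sigma\otimes \tfrac{\mathbb{I}}{d_\cH}$, and \Cref{lemma1} places the limit strictly inside the separable cone; hence $\Phi^k\in\EB(\cH)$ for all $k$ large enough.

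For the generic case $z\geq 2$, the first step is to observe that $\PPT$ channels are closed under composition. Indeed, if $\Phi_1,\Phi_2$ are $\PPT$ then $\mathcal{T}\circ\Phi_1$ is CP by definition, and $\Phi_2$ is CP, so $\mathcal{T}\circ(\Phi_1\circ\Phi_2) = (\mathcal{T}\circ\Phi_1)\circ\Phi_2$ is CP, i.e., $\Phi_1\circ\Phi_2$ is $\PPT$. Consequently every iterate $\Phi^k$ is $\PPT$, and the necessary condition \eqref{eq:irred-nec_for_EB} of \Cref{theoremEB} yields
\[
  \Phi^k(p_i\,\cB(\cH)\,p_j)=\{0\}\qquad \forall\, i\neq j,\ \forall\, k\geq 1.
\]

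It then remains to verify the sufficient condition \eqref{eq:irred-JQ-block-EB} for $k$ sufficiently large. Item \ref{it:Phi_Q-spr} of \Cref{prop:decomp-irred} gives $\spr(\Phi_Q)<1$, so $\Phi_Q^k\to 0$; combining the isometry \eqref{eq:J-preserves-2norm} with the elementary bound $\|J(\Phi_Q^k|_{p_j\cB(\cH)p_j})\|_2\leq \|J(\Phi_Q^k)\|_2=\|\Phi_Q^k\|_2$ (read off from the block-wise expression $\|\Phi\|_2^2=\sum_{ij}\|\Phi(|i\rangle\langle j|)\|_2^2$) shows that the left-hand side of \eqref{eq:irred-JQ-block-EB} tends to $0$ for every block $j$. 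Since $\lambda_{\min}(\sigma|_{p_j\cH})>0$, the inequality \eqref{eq:irred-JQ-block-EB} is eventually satisfied for all $j$, and \Cref{theoremEB} then delivers $\Phi^k\in\EB(\cH)$. I do not anticipate any substantive obstacle; the only mildly non-trivial ingredient is the compositional closure of $\PPT$, which is a one-line check.
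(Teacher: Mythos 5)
Your proposal is correct and follows essentially the same route as the paper: use the PPT hypothesis to force $\Phi^k(p_i\,\cB(\cH)\,p_j)=\{0\}$ for $i\neq j$, then use $\spr(\Phi_Q)<1$ to make $\|J(\Phi_Q^k)\|_2$ eventually smaller than $z\lambda_{\min}(\sigma)$ so that \Cref{theoremEB} applies. The only cosmetic differences are that the paper obtains the off-diagonal annihilation directly from $\Phi$ being PPT (a single application of \Cref{lem:block preserving-PPT}, rather than via compositional closure of $\PPT$ applied to each iterate) and does not separate out the $z=1$ case.
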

\begin{proof}
    Since the channel is PPT, we have \eqref{eq:assume_PhiQ_kills_offdiag}. Then setting $\ell:=\spr(\Phi_Q)<1$, by Gelfand's formula we have that for any $\eps\in(0,1-\ell)$, there exists $n_0>0$ such that for all $k\geq n_0$,
    \[
        \|J(\Phi_Q^k)\|_2=\|\Phi_Q^k\|_2 \leq (\ell +\eps)^k.
    \]
    Thus, for $k$ large enough, $\|J(\Phi_Q^k)\|_2  \leq z\lambda_\text{min}(\sigma)$ and \eqref{eq:irred-JQ-block-EB} holds. Hence, $\Phi^k \in \EB(\cH)$.
    \qed
\end{proof}

\subsection{Beyond irreducibility}
A non-irreducible channel can be decomposed into \emph{irreducible components}, on which it acts irreducibly. More specifically, we may decompose the identity $\one$ of $\cH$ into maximal subspaces with corresponding orthogonal projections $D, P_1,\dots,P_n$ s.t. $\Phi$ restricted to $P_i\cB(\cH)P_i$ is irreducible, and $D\cH$ is orthogonal to the support of every invariant state of $\Phi$ \cite{irreddecomp}. In particular, $D=0$ is equivalent to $\Phi$ being a faithful quantum channel (that is, possessing a full-rank invariant state). In general, $\Phi$ may act non-trivially on $P_i \cB(\cH)P_j$ for $i\neq j$. The following proposition shows that this is not the case for PPT maps $\Phi$.

\begin{theorem}\label{cor:pptfulleeb}
    Any faithful $\operatorname{PPT}$ channel $\Phi:\cB(\cH)\to \cB(\cH)$ is the direct sum of irreducible $\operatorname{PPT}$ quantum channels, and therefore is eventually entanglement breaking.
\end{theorem}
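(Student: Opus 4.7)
The plan is to reduce the problem to the irreducible case already handled by \Cref{cor:irred-PPT-EEB}. The strategy has two main components. First, using the irreducible decomposition of a positive map, I extract the irreducible components of $\Phi$, obtaining orthogonal projections $\{P_i\}$ that decompose $\Phi$ up to possible ``off-diagonal'' interactions. Second, I leverage the PPT assumption together with \Cref{lem:block preserving-PPT} to show that these interactions must vanish, yielding a true direct-sum decomposition whose summands are irreducible PPT channels. The EEB conclusion is then automatic from \Cref{cor:irred-PPT-EEB} applied to each summand.

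As noted in the paragraph preceding the theorem, the irreducible decomposition \cite{irreddecomp} provides orthogonal projections $P_1,\dots,P_n$ with $\sum_i P_i = \one$ (the decay projection $D$ vanishes since $\Phi$ is faithful) such that each $P_i\cH$ is a reducing subspace and $\Phi|_{P_i \cB(\cH)P_i}$ is irreducible on $\cB(P_i\cH)$. Because the complement $(\one-P_i)\cH = \bigoplus_{j\neq i}P_j\cH$ is a union of reducing subspaces, it is also reducing. Invariance of $P_i\cH$ under $\Phi$ gives $\tr[(\one-P_i)\Phi(P_i\rho P_i)]=0$ for every state $\rho$, whence $P_i\Phi^*(\one-P_i)P_i=0$; together with positivity of $\Phi^*(\one-P_i)$ and unitality of $\Phi^*$, this forces $\Phi^*(\one-P_i)\leq \one-P_i$, i.e.\ $\Phi^*(P_i)\geq P_i$. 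Running the symmetric argument with $\one-P_i$ in place of $P_i$ gives the reverse inequality, so $\Phi^*(P_i)=P_i$. Each $P_i$ therefore satisfies equality in the Kadison--Schwarz inequality and lies in the multiplicative domain of $\Phi^*$, yielding $\Phi^*(P_iYP_j)=P_i\Phi^*(Y)P_j$ for all $Y,i,j$; dualizing this identity yields the block-preservation property $\Phi(P_iXP_j)\subset P_i\cB(\cH)P_j$.

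At this stage \Cref{lem:block preserving-PPT} applies directly: $\Phi$ is PPT and $\big(\{P_i\},\{P_i\}\big)$-block preserving, so $\Phi(P_iXP_j)=0$ whenever $i\neq j$. Hence $\Phi=\bigoplus_i\Phi_i$ with $\Phi_i:=\Phi|_{P_i\cB(\cH)P_i}$ an irreducible quantum channel on $\cB(P_i\cH)$. In a basis adapted to $\cH=\bigoplus_i P_i\cH$, the Choi matrix $J(\Phi)$ splits as a direct sum of blocks $J(\Phi_i)$ supported on the mutually orthogonal subspaces $P_i\cH\otimes P_i\cH$, on which the partial transpose acts within each block. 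Positivity of $(\cT\otimes\id)J(\Phi)$ therefore forces $(\cT\otimes\id)J(\Phi_i)\geq 0$ for every $i$, so each $\Phi_i$ is itself an irreducible PPT channel.

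By \Cref{cor:irred-PPT-EEB}, for each $i$ there is some $k_i\in\NN$ with $\Phi_i^{k_i}\in\EB(P_i\cH)$; setting $k:=\max_i k_i$ makes $\Phi^k=\bigoplus_i\Phi_i^k$ a direct sum of entanglement breaking maps, which is itself entanglement breaking, proving $\Phi\in\EEB(\cH)$. The only step that requires genuine work is the block-preservation argument of the second paragraph, namely verifying that the irreducible-component projections $P_i$ are stabilized by $\Phi^*$ in the faithful setting; once this is in place, \Cref{lem:block preserving-PPT} and \Cref{cor:irred-PPT-EEB} close the argument with no further input.
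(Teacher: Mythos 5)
Your proposal is correct and follows essentially the same route as the paper: irreducible decomposition into blocks $P_i$, block preservation, \Cref{lem:block preserving-PPT} to kill the off-diagonal blocks, and \Cref{cor:irred-PPT-EEB} on each summand. The only difference is that where the paper simply cites \cite[Proposition 5.4]{irreddecomp} for the block-preservation property $\Phi(P_i\cB(\cH)P_j)\subset P_i\cB(\cH)P_j$, you derive it yourself via $\Phi^*(P_i)=P_i$ and the multiplicative domain of $\Phi^*$, and that derivation is sound.
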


\begin{remark}
    This result extends that of \cite{kennedy2017composition} where it was shown that PPT maps are AEB. It also completes Theorem 4.4 of \cite{rahaman2018eventually} where it was shown that any unital PPT channel is EEB.
    Additionally, \Cref{prop:PPT-EB_time} provides a quantitative version of this result.
\end{remark}

\begin{proof}
    As $\Phi$ has an invariant state of full rank, it follows from~\cite{irreddecomp} that we may decompose the identity $\one$ of $\cH$ into maximal subspaces with corresponding orthogonal projections $P_1,\dots,P_n$ s.t. $\Phi$ restricted to $P_i\cB(\cH)P_i$ is irreducible. We will call $P_i\cB(\cH)P_i$ a maximal irreducible component. From \cite[Proposition 5.4]{irreddecomp}, we have that
    \[
        \Phi(P_i \cD(\cH)P_j) \subset P_i \cB(\cH)P_j
    \]
    and hence linearity yields
    \[
        \Phi(P_i \cB(\cH)P_j) \subset P_i \cB(\cH)P_j.
    \]
    Thus, $\Phi$ is $\big(\{P_i\}_{i=1}^n,\{P_i\}_{i=1}^n\big)$-block preserving, and by \Cref{lem:block preserving-PPT},
    \[
        \Phi(P_i \cB(\cH)P_j) = \{0\}
    \]
    for each $i\neq j$. Hence, $\Phi=\bigoplus_{i=1}^n \Phi_i$  for $\Phi_i =\Phi|_{P_i \cB(\cH)P_i}$. Each $\Phi_i$ is irreducible and PPT and hence EEB by \Cref{cor:irred-PPT-EEB}. Thus, $\Phi$ is EEB as well.\qed
\end{proof}

Combining \Cref{cor:pptfulleeb} with that observation that for an irreducible channel $\Phi$ of period $z$, the channel $\Phi^z$ is the direct sum of primitive channels leads to the following structural result.
\begin{theorem} \label{prop:charEEB-discrete}
    Let $\Phi:\cB(\cH)\to \cB(\cH)$ be a faithful quantum channel. The following are equivalent.
    \begin{enumerate}
        \item  $\Phi\in \EEB(\cH)$,
        \item $\Phi^{d_\cH^2 Z}$ is the direct sum of primitive channels,  where $Z$ is the least common multiple of the periods of $\Phi$ restricted to each of its irreducible components,
        \item $\Phi^{n}$ is the direct sum of primitive channels for some $n\leq d_\cH^2 \exp(1.04\,d_\cH)$.
    \end{enumerate}
\end{theorem}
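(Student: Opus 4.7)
The strategy is to close the cycle $(2)\Rightarrow(3)\Rightarrow(1)\Rightarrow(2)$, with $(1)\Rightarrow(2)$ being the main content.

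For $(2)\Rightarrow(3)$ the quantitative claim reduces to bounding $Z$. Each irreducible component of $\Phi$ lives on a subspace $P_i\cH$ of dimension at most $d_\cH$, so by the Perron--Frobenius theory recalled in \Cref{prop:decomp-irred}, $z_i\leq d_\cH$ for every $i$, and thus $Z$ divides $\operatorname{lcm}(1,\ldots,d_\cH)$. The Chebyshev-type estimate $\log\operatorname{lcm}(1,\ldots,n)=\psi(n)\leq 1.04\,n$ (Rosser--Schoenfeld) then gives $Z\leq\exp(1.04\,d_\cH)$, whence $d_\cH^2 Z\leq d_\cH^2\exp(1.04\,d_\cH)$.

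For $(3)\Rightarrow(1)$: any primitive channel is EEB, since its iterates converge to a faithful product state that lies in the interior of $\SEP$ by \Cref{lemma1}; and a direct sum of EB channels is manifestly EB, because it annihilates off-diagonal blocks and yields a separable output on each diagonal block. Hence if $\Phi^n$ is a direct sum of primitive channels then $\Phi^n\in\EEB(\cH)$, so $\Phi^{nk}\in\EB(\cH)$ for all $k$ sufficiently large, and $\Phi\in\EEB(\cH)$.

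The heart of the argument is $(1)\Rightarrow(2)$. Let $\{P_i\}_{i=1}^K$ be the irreducible decomposition of $\Phi$ (which exists since $\Phi$ is faithful), with periods $z_i$ and restrictions $\Phi_i:=\Phi|_{P_i\cB(\cH)P_i}$. Recall from the proof of \Cref{cor:pptfulleeb} that $\Phi$ is block-preserving, $\Phi(P_i\cB(\cH)P_j)\subset P_i\cB(\cH)P_j$, so every power is too. By hypothesis there exists $m_0$ with $\Phi^m\in\EB(\cH)\subset\PPT(\cH)$ for $m\geq m_0$, and \Cref{lem:block preserving-PPT} then forces $\Phi^m(P_i\cB(\cH)P_j)=\{0\}$ for all $i\neq j$ and all $m\geq m_0$. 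Viewing $T_{ij}:=\Phi|_{P_i\cB(\cH)P_j}$ as a linear endomorphism of the finite-dimensional space $V_{ij}:=P_i\cB(\cH)P_j$, the hypothesis becomes $T_{ij}^m=0$ for $m\geq m_0$, so $T_{ij}$ is nilpotent of index at most $\dim V_{ij}=\operatorname{rank}(P_i)\operatorname{rank}(P_j)\leq d_\cH^2$. Hence $T_{ij}^{d_\cH^2}=0$ for every $i\neq j$, and $\Phi^{d_\cH^2}=\bigoplus_i\Phi_i^{d_\cH^2}$ is already block-diagonal with respect to $\{P_i\}$. Since each $\Phi_i$ is irreducible of period $z_i$, \Cref{prop:decomp-irred} gives that $\Phi_i^{z_i}$ is a direct sum of primitive channels; as $z_i\,|\,Z$ and powers of primitives are primitive, $\Phi_i^{d_\cH^2 Z}=(\Phi_i^{z_i})^{d_\cH^2 Z/z_i}$ is itself a direct sum of primitive channels, and summing over $i$ yields $(2)$.

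The only non-routine step in the proof is the passage from the \emph{asymptotic} vanishing of off-diagonal blocks (guaranteed by $\Phi\in\EEB(\cH)$) to \emph{exact} vanishing after the uniformly bounded number $d_\cH^2$ of iterations; this is exactly where combining block preservation with the $\PPT$ inclusion is essential, as it recasts each off-diagonal restriction as a nilpotent endomorphism of a space of dimension at most $d_\cH^2$.
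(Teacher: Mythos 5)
Your directions $(2)\Rightarrow(3)\Rightarrow(1)$ and the first half of $(1)\Rightarrow(2)$ (block-preservation with respect to the irreducible-component projections $\{P_i\}$, the use of \Cref{lem:block preserving-PPT}, and the nilpotency argument giving $\Phi^{d_\cH^2}=\bigoplus_i\Phi_i^{d_\cH^2}$) all match the paper. But the final step of $(1)\Rightarrow(2)$ contains a genuine gap: the claim that ``since each $\Phi_i$ is irreducible of period $z_i$, \Cref{prop:decomp-irred} gives that $\Phi_i^{z_i}$ is a direct sum of primitive channels'' is false. \Cref{prop:decomp-irred} only tells you that $\Phi_i^{z_i}$ \emph{preserves} each diagonal block $p_k^{(i)}\cB(\cH)p_k^{(i)}$ and acts primitively there; it says nothing about the off-diagonal blocks $p_k^{(i)}\cB(\cH)p_\ell^{(i)}$, $k\neq\ell$, which $\Phi_i^{z_i}$ also preserves and on which it can act with a nonzero, non-nilpotent map. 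The paper's own example following \Cref{theoremEB} is exactly a period-$2$ irreducible channel with $\Phi_Q^n(\ket{e_0}\bra{f_0})=\lambda^n(\cdot)\neq 0$ for all $n$, so no power of it is a direct sum of primitive channels. (That channel is in $\EB_\infty(\cH)$, so it does not contradict the theorem, but it does refute your intermediate claim.)

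To close the gap you must use the EEB hypothesis a second time, \emph{inside} each irreducible component. This is what the paper does: it applies \Cref{lem:block preserving-PPT} not to $\{P_i\}$ but to the finer orthogonal resolution $\{p^{(j)}_i P_j\}_{i,j}$ obtained by refining each irreducible component by its Perron--Frobenius projections. Since $\Phi^Z$ preserves each space $p^{(j)}_i P_j\cB(\cH)P_k p^{(k)}_\ell$ and $\Phi^{nZ}$ is PPT for some $n$ (as $\Phi\in\EEB(\cH)$), the lemma forces $\Phi^{nZ}$ to annihilate every such space with $(j,i)\neq(k,\ell)$ --- in particular the off-diagonal blocks \emph{within} a single irreducible component. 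Only then does the nilpotency argument (dimension at most $d_\cH^2$) give exact vanishing after $d_\cH^2$ applications of $\Phi^Z$, yielding $\Phi^{d_\cH^2 Z}=\bigoplus_{i,j}\Phi_{ij}$ with each $\Phi_{ij}$ primitive on $\cB(p^{(i)}_j P_i\cH)$. Your argument, run at the coarser level only, would let the $\EB_\infty$ irreducible example slip through as a purported ``direct sum of primitive channels,'' which it is not.
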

\begin{remark}
    This provides a characterization of $\EEB(\cH)$ in the discrete time case (albeit only for faithful channels), analogous to \Cref{propprimitivecontinuoustime} in the continuous time case.
\end{remark}
\begin{proof}
    Since primitive channels are eventually entanglement breaking\footnote{As argued in the proof of \Cref{thm:EEB_dense} and as shown quantitatively by \Cref{tEB_upperdiscrete}}, the direct sum of primitive channels $\Phi=\bigoplus_i \Phi_i$ is also eventually entanglement breaking, since
    \[
        J(\Phi^n) = \bigoplus_i J(\Phi_i^n)
    \]
    is separable once each $J(\Phi_i^n)$ is separable. Thus, (3) implies (1).

    To prove (2) implies (3), we upper bound $Z$. Since each period is at most $d_\cH$, $Z$ is at most the least common multiple of the natural numbers $1,\dotsc,d_\cH$. By \cite[Theorem 12]{rosser_approximate_1962}, we thus have
    \[
        Z \leq \exp(1.03883\,d_\cH).
    \]

    It remains to show (1) implies (2).
    Let $\Phi \in \EEB(\cH)$ be a faithful quantum channel. Without loss of generality, assume $\Phi$ is not irreducible. Recall from the proof of \Cref{cor:pptfulleeb} that there exist $P_1, \dotsc, P_k\in \cB(\cH)$ distinct orthogonal projections such that $\Phi(P_i\cB(\cH)P_i)\subset\Phi(P_i\cB(\cH)P_i)$ with $\sum_i P_i = \one$, which moreover satisfy
    \[
        \Phi(P_i \cB(\cH)P_j)\subset \Phi(P_i\cB(\cH)P_j)
    \]
    for all $i$ and $j$, and $\Phi_i = \Phi|_{P_i\cB(\cH)P_i}$ is irreducible, with some period $z_i$, and Perron-Frobenius projections $\{p_k^{(i)}\}_{k=0}^{z_i-1}$. Let $Z$ be the least common multiple of the periods $z_i$. Then
    \begin{equation} \label{eq:proof-cov}
        \Phi^Z ( p^{(j)}_i P_j \cB(\cH) P_k p^{(k)}_\ell) \subset  p^{(j)}_i P_j \cB(\cH) P_k p^{(k)}_\ell
    \end{equation}
    for each $i,j,k,\ell$. Moreover, since $\Phi$ is eventually entanglement breaking, $\Phi^{nZ}$ is PPT for some $n\in \mathbb{N}$. Hence, by \Cref{lem:block preserving-PPT} and \eqref{eq:proof-cov},
    \[
        \Phi^{nZ} ( p^{(j)}_i P_j \cB(\cH) P_k p^{(k)}_\ell) = \{0\}
    \]
    unless $j=k$ and $i=\ell$. For $j\neq k$ and $i\neq \ell$, $\Phi^Z|_{ p^{(j)}_i P_j \cB(\cH) P_k p^{(k)}_\ell}$ is therefore a nilpotent linear endomorphism (on the vector space $ p^{(j)}_i P_j \cB(\cH) P_k p^{(k)}_\ell$). Since $\dim( p^{(j)}_i P_j \cB(\cH) P_k p^{(k)}_\ell) \leq d_\cH^2$, we have
    \[
        \Phi^{d^2 Z} ( p^{(j)}_i P_j \cB(\cH) P_k p^{(k)}_\ell) = \{0\}
    \]
    Hence, $\Phi^{d^2 Z} = \bigoplus_{i=1}^n \bigoplus_{j=0}^{z_i-1} \Phi_{ij}$   is the direct sum of primitive quantum channels  $\Phi_{ij} := \left.\Phi^{d^2Z}\right|_{ \cB( p_j P_i \cH)}$.
    \qed
\end{proof}

\subsection{Criteria for AEB channels to be irreducible}\label{convergencedynamics}

In \Cref{sec:irred}, we saw that all irreducible channels are asymptotically entanglement breaking. In this section, we derive a weak converse for this claim and give criteria for when AEB channels are irreducible. Recall the decomposition of the phase space of a quantum channel given in~\eqref{eq:Phi-on-decohere-decomp}.
Below, we show how this result relates to irreducible quantum channels, following the same notation as in~\eqref{eq:Phi-on-decohere-decomp}:
\begin{proposition} \label{prop:char-irred-asymp}
    The following are equivalent:
    \begin{enumerate}
        \item There exists a decomposition \eqref{eq:decohere-decomp} such that $\Phi$ satisfies \eqref{eq:Phi-on-decohere-decomp} with $\cK_0=\{0\}$ and $d_i =1$ for each $i$, and $\pi$ is a $K$-cycle
        \item $\Phi$ is irreducible.
    \end{enumerate}
    Therefore, any asymptotically entanglement breaking channel with corresponding $K$-cyclic permutation and $\cK_0=\{0\}$ is irreducible.
\end{proposition}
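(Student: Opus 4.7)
The plan is to prove the biconditional via direct spectral analysis of $\Phi$ on its phase subspace $\tilde\cN(\Phi)$, using the decomposition \eqref{eq:decohere-decomp}--\eqref{eq:Phi-on-decohere-decomp}.

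For $(1)\Rightarrow(2)$: under the hypotheses each $\cH_i\cong\bC$, so the unitaries $U_i$ in \eqref{eq:Phi-on-decohere-decomp} are scalar phases which drop out of the conjugation, and the phase subspace becomes $\tilde\cN(\Phi)=\operatorname{span}\{\tau_1,\dotsc,\tau_K\}$ with mutually orthogonal full-rank supports $\cK_i$ satisfying $\cH=\bigoplus_{i=1}^K\cK_i$. Thus $\Phi$ permutes $\{\tau_i\}$ along the $K$-cycle $\pi$, which has simple eigenvalues $\{e^{2\pi ik/K}\}_{k=0}^{K-1}$. In particular $1$ is a non-degenerate eigenvalue of $\Phi$ with (unique up to scaling) full-rank eigenvector $\sigma:=\frac{1}{K}\sum_{i=1}^K\tau_i$, and $\Phi$ is therefore irreducible by the criterion recalled in \Cref{sec:preliminaries}.

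For $(2)\Rightarrow(1)$: irreducibility of $\Phi$ produces a full-rank invariant state, which immediately forces $\cK_0=\{0\}$. To pin down the $d_i$ and the structure of $\pi$, I compute $\dim\ker(\Phi-\id|_{\tilde\cN(\Phi)})$. Decomposing $\pi$ into disjoint cycles $c_1,\dotsc,c_r$ with lengths $m_\ell$ and common within-cycle block dimension $d_\ell$ (blocks within a cycle share the same $\dim\cH_i$ by the structure of \eqref{eq:Phi-on-decohere-decomp}), the fixed-point equation on a cycle $c_\ell=(i_1,\dotsc,i_{m_\ell})$ iterates to
\[
X_{i_1}=W_\ell X_{i_1} W_\ell^\dagger,\qquad W_\ell := U_{i_1}U_{i_2}\cdots U_{i_{m_\ell}},
\]
with the remaining $X_{i_j}$ determined by $X_{i_1}$. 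The commutant of a unitary acting on a $d_\ell$-dimensional space has dimension at least $d_\ell$ (it always contains the spectral projections of $W_\ell$). Summing these contributions over cycles yields $\dim\ker(\Phi-\id)\geq\sum_\ell d_\ell$. Irreducibility forces the left-hand side to equal $1$, so $r=1$ (i.e.\ $\pi$ is a single $K$-cycle) and $d_1=\dotsb=d_K=1$.

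For the concluding AEB claim it suffices, by the main equivalence just proved, to verify that AEB implies $d_i=1$ for every $i$ (the remaining conditions $\pi$ being $K$-cyclic and $\cK_0=\{0\}$ are assumed). This is the discrete-time analogue of the implication used in \Cref{prop:cts-AES}: if some $d_j\geq 2$, then on the block $\cB(\cH_j)\otimes\tau_j$ the iterates $\Phi^{nm}$ (with $m$ the $\pi$-cycle length through $j$) act as a unitary conjugation, so feeding $\Phi\otimes\id$ a maximally entangled state supported in $\cH_j\otimes\cH_j$ produces a subsequence of $\{J(\Phi^n)\}_n$ with persistent entanglement, contradicting AEB. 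The main obstacle is the lower bound $\dim\ker(\Phi-\id)\geq\sum_\ell d_\ell$ in the direction $(2)\Rightarrow(1)$: one must carefully use the constraint that $\pi$ only permutes blocks of equal $\cH$-dimension, iterate the fixed-point relation cleanly around the cycle, and exhibit enough independent commuting elements to contradict non-degeneracy of the eigenvalue $1$ when $r\geq 2$ or some $d_\ell\geq 2$.
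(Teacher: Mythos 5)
Your proof is correct, and the two directions split as follows. The direction $(1)\Rightarrow(2)$ and the final AEB claim are essentially the paper's arguments: the paper likewise identifies the unique faithful fixed point $\propto\sum_i\tau_i$ and invokes the non-degeneracy criterion for irreducibility, and your "maximally entangled input on a block with $d_j\geq 2$" argument for AEB $\Rightarrow d_i=1$ is the discrete-time analogue of \Cref{prop:cts-AES} (the paper instead just cites Theorem~32 of \cite{lami_entanglement-saving_2016} on noncommuting phase points — both routes are fine). The direction $(2)\Rightarrow(1)$ is where you genuinely diverge: the paper starts from the Perron--Frobenius decomposition of \Cref{prop:decomp-irred} and computes $\tilde\cN(\Phi)=\bigoplus_n\bC\,\sigma|_{p_n}$ explicitly via a discrete Fourier transform over the peripheral eigenvectors $u^n\sigma$, reading off $d_i=1$ and the cyclic $\pi$ directly; you instead take the general structure theorem \eqref{eq:decohere-decomp}--\eqref{eq:Phi-on-decohere-decomp} as given and run a fixed-point dimension count, $1=\dim\ker(\Phi-\id)\geq\sum_\ell d_\ell\geq r$, which forces a single cycle with one-dimensional blocks. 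Your route is more self-contained (it does not lean on \Cref{prop:decomp-irred}) and cleanly isolates what irreducibility buys, at the cost of needing the two standard facts that every fixed point lies in $\tilde\cN(\Phi)$ and that the eigenvalue $1$ of an irreducible channel is geometrically simple — both recalled in \Cref{sec:preliminaries}. One small repair: your justification that the commutant of $W_\ell$ on $\bC^{d_\ell}$ has dimension at least $d_\ell$ should not rest on it "containing the spectral projections" (their span has dimension equal to the number of \emph{distinct} eigenvalues, which may be smaller); rather, the commutant contains $\bigoplus_j Q_j M_{d_\ell}Q_j$ for the spectral projections $Q_j$ of $W_\ell$ — equivalently, all matrices diagonal in an eigenbasis of $W_\ell$ — which has dimension $\sum_j(\operatorname{rank}Q_j)^2\geq d_\ell$. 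With that fix the counting argument is complete.
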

\begin{proof}
    First, assume $\Phi$ is irreducible, and adopt the notation of \Cref{prop:decomp-irred}. Then
    \[
        \tilde N(\Phi) =  \linspan \{ u^n \sigma : n = 0, \dotsc,z-1\}\,.\,
    \]
    Let $X\in \tilde \cN(\Phi)$ be given by $X =\sum_{j=0}^{z-1} \lambda_j u^j \sigma$ for some $\lambda_j \in \bC$. Since $\sum_{n=0}^{z-1}p_n = \one$, we have
    \begin{equation} \label{eq:proof-X-irred}
        X = \sum_{n,j=0}^{z-1} \lambda_j  p_n u^j \sigma = \sum_{n,j=0}^{z-1} \lambda_j \theta^{nj} p_n  \sigma = \bigoplus_{n=0}^{z-1} \left(\sum_{j=0}^{z-1} \lambda_j \theta^{nj} \right) \sigma|_{p_n}
    \end{equation}
    where $\sigma|_{p_n}$ is $\sigma$ restricted to the subspace $p_n \cH$, and the direct sum decomposition is with respect to the decomposition $\cH = \bigoplus_{n=0}^{z-1} p_n\cH$.  Note moreover, $\sum_{j=0}^{z-1} \lambda_j \theta^{nj}$ is the $n$th coefficient of the discrete Fourier transform $F_z:\bC^z\to \bC^z$ of the vector $\lambda:=(\lambda_j)_{j=0}^{z-1}$. Since the Fourier transform is invertible, as $\lambda$ ranges over $\bC^z$, the vector of Fourier coefficients range over $\bC^z$ as well. We therefore find
    \[
        \tilde \cN(\Phi) = \bigoplus_{n=0}^{z-1} \bC \,  \sigma|_{p_n}
    \]
    which is a decomposition of the form given by \eqref{eq:decohere-decomp} with $\cK_0=\{0\}$ and $d_n =1$ for each $n$. Moreover, for $X\in \tilde N(\Phi)$ decomposed as $X = \bigoplus_{n=0}^{z-1}  \gamma_n   \sigma|_{p_n} = \sum_{n=0}^{z-1} \gamma_n p_n \sigma p_n$, we have
    \[
        \Phi(X) = \sum_{n=0}^{z-1} \gamma_n \Phi(p_n \sigma p_n) = \sum_{n=0}^{z-1} \gamma_n p_{n-1} \Phi( \sigma)p_{n-1}= \sum_{n=0}^{z-1} \gamma_n p_{n-1} \sigma p_{n-1}
    \]
    by \eqref{eq:irred_on_p_blocks}. Thus, \eqref{eq:Phi-on-decohere-decomp} holds, where $\pi$ is the cyclic permutation $k\to k+1$.

    On the other hand, assume we are given such a decomposition,
    \[
        \tilde \cN(\Phi) = \bigoplus_{i=0}^{K-1} \bC \tau_i
    \]
    for some states $\tau_i \in \cD_+(\cK_i)$ where $\cH = \bigoplus_{i=0}^{K-1} \cK_i$, such that\footnote{If $\pi$ is a cycle, we may reorder the index of the direct sum so that $\pi$ maps $i$ to $i+1$.}
    \[
        \Phi(X) = \bigoplus_{i=0}^{K-1} \gamma_{i+1} \tau_i,\quad \text{ for } \quad X =  \bigoplus_{i=0}^{K-1} \gamma_{i} \tau_i.
    \]
    Define $\sigma := \bigoplus_{i=0}^{K-1} \frac{1}{z}\tau_i$. Then by \eqref{eq:Phi-on-decohere-decomp},
    \[
        \Phi(\sigma) = \bigoplus_{i=0}^{K-1}\frac{1}{z}\tau_i = \sigma
    \]
    as suggested by the name. On the other hand, assume $X\in \cB(\cH)$ is also invariant under $\Phi$. Then $X \in \tilde N(\Phi)$, so $X = \bigoplus_{i=0}^K\lambda_i \tau_i$ for some $\lambda_i \in \bC$. But then
    \[
        \Phi(X) = \bigoplus_{i=0}^K\lambda_{i+1} \tau_i = X \implies \lambda_i = \lambda_{i+1} \, \forall i.
    \]
    Thus, $X$ is proportional to $\sigma$, and $\Phi$ must have a unique invariant state. Since $\sigma$ is additionally faithful (by construction), we conclude that $\Phi$ is irreducible (cf. \Cref{sec:preliminaries} or Theorem 6.4 of \cite{wolf2012quantum}, which shows that a positive map is irreducible if and only if its spectral radius is a non-degenerate eigenvalue with positive-definite left and right eigenvectors).
    \qed
\end{proof}
\medskip

\subsection{Reduction to the trace-preserving case} \label{sec:CPnonTP}

When considering quantum channels as a form of time-evolution, trace-preservation is a natural assumption as an analog of conserving total probability. When considering maps resulting from the inverse Choi isomorphism $J\inv$ applied to bipartite states, however, the map $J\inv(\rho_{A \tilde A})$ is trace-preserving if and only if the first marginal is completely mixed: $\rho_{ A} = \frac{\one_A}{d_A}$. Thus, a priori, results about quantum channels only provide results about a restricted class of bipartite states. However, a large class of CP maps are (up to normalization) similar to CPTP maps. Moreover, this similarity transformation preserves many properties.

In the following, $\spr(\Phi)$ denotes the \emph{spectral radius} of a map $\Phi$, defined as $\spr(\Phi) = \max_{\lambda} |\lambda|$, where $\lambda$ ranges over the eigenvalues of $\Phi$. The spectral radius of a quantum channel is 1.

\begin{proposition} \label{prop:similarity_transform}
    Let $\Phi$ be a $\operatorname{CP}$ map such that for some $X>0$, $\Phi^*(X) = \spr(\Phi)X$ and for some $\sigma > 0$, $\Phi(\sigma) = \spr(\Phi)\sigma$. Then
    \begin{equation} \label{eq:def_Petz}
        \cP_X(\Phi) := \frac{1}{\spr(\Phi)} \Gamma_X \circ \Phi \circ \Gamma_X\inv, \qquad \text{where} \quad \Gamma_X(Y) := X^{1/2}Y X^{1/2}
    \end{equation}
    has the following properties:
    \begin{enumerate}
        \item \label{it:faithfulCPTP} $\cP_X(\Phi)$ is a faithful quantum channel
        \item \label{it:powers} $\cP_X (\Phi^n) = \cP_X(\Phi)^n$, where powers denote repeated composition
        \item \label{it:EB} $\cP_X(\Phi)$ is $\operatorname{EB}$ iff $\Phi$ is $\operatorname{EB}$
        \item \label{it:PPT} $\cP_X(\Phi)$ is $\operatorname{PPT}$ iff $\Phi$ is $\operatorname{PPT}$
    \end{enumerate}
\end{proposition}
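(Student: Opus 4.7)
The key idea is that $\cP_X$ is essentially a rescaled similarity transformation $\Phi \mapsto \Gamma_X \Phi \Gamma_X^{-1}$, and both $\Gamma_X$ and its inverse $\Gamma_X^{-1} = \Gamma_{X^{-1}}$ are themselves completely positive maps (conjugation by a positive operator). This observation will do most of the heavy lifting: every structural property we care about is preserved under pre- and post-composition with CP maps, and composition of such similarities just amounts to composing the conjugators. The two hypotheses on $X$ and $\sigma$ are used only once each: the eigenvalue relation $\Phi^*(X) = \spr(\Phi)\,X$ upgrades CP to CPTP (by giving trace preservation after rescaling by $\spr(\Phi)$), and $\Phi(\sigma) = \spr(\Phi)\,\sigma$ with $\sigma>0$ produces the full-rank invariant state.

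For \eqref{it:faithfulCPTP}, I would first note CP is immediate from writing $\cP_X(\Phi) = \frac{1}{\spr(\Phi)}\Gamma_X \circ \Phi \circ \Gamma_{X^{-1}}$ as a composition of three CP maps. Trace preservation is a direct one-line computation:
\[
\tr[\cP_X(\Phi)(Y)] = \tfrac{1}{\spr(\Phi)}\tr[\Phi(\Gamma_{X^{-1}}(Y))\,X] = \tfrac{1}{\spr(\Phi)}\tr[\Gamma_{X^{-1}}(Y)\,\Phi^*(X)] = \tr[Y],
\]
using the eigenvalue relation for $X$ in the last equality. For faithfulness I would verify by hand that $\rho := \Gamma_X(\sigma)/\tr[X\sigma]$ is a full-rank invariant density matrix; the invariance follows from $\Phi(\sigma) = \spr(\Phi)\sigma$ after cancelling the factor of $\spr(\Phi)$. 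Property \eqref{it:powers} is then a telescoping computation: the inner $\Gamma_{X^{-1}} \circ \Gamma_X$ factors collapse to identity, so $\cP_X(\Phi)^n = \spr(\Phi)^{-n} \Gamma_X \circ \Phi^n \circ \Gamma_{X^{-1}}$; this agrees with $\cP_X(\Phi^n)$ once we note that $\Phi^n$ satisfies the hypotheses of the proposition with eigenvalue $\spr(\Phi)^n$ on the same eigenvectors, and $\spr(\Phi^n)=\spr(\Phi)^n$.

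For \eqref{it:EB} and \eqref{it:PPT} I would invoke the general fact that both EB and PPT are two-sided ideals in the CP monoid: if $\Psi$ is EB (resp.\ PPT) and $\Lambda_1,\Lambda_2$ are CP, then $\Lambda_1 \circ \Psi \circ \Lambda_2$ is EB (resp.\ PPT). For EB this follows because separability of the output of $\Psi\otimes\id$ is preserved by a local CP map applied on either side; for PPT it follows because $\cT \circ \Lambda_1 \circ \Psi \circ \Lambda_2 = \Gamma_{X^{T}}\circ (\cT\circ\Psi)\circ \Lambda_2$ when $\Lambda_1=\Gamma_X$ (using $\cT\circ\Gamma_X = \Gamma_{X^T}\circ\cT$), and CP is preserved under composition. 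Applying this to $\Lambda_1=\Gamma_X$, $\Lambda_2=\Gamma_{X^{-1}}$ gives the forward implications. For the reverse directions I would simply run the same argument with $X$ and $X^{-1}$ swapped, noting that $\Phi = \spr(\Phi)\,\Gamma_{X^{-1}} \circ \cP_X(\Phi) \circ \Gamma_X$ recovers $\Phi$ up to a positive scalar from $\cP_X(\Phi)$ via CP conjugation.

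The only step requiring any thought is the trace-preservation computation in \eqref{it:faithfulCPTP}, since it is the unique place where the hypothesis on $X$ is genuinely used; everything else is bookkeeping about composing similarities with CP maps. I do not anticipate a real obstacle — the proposition is best viewed as a clean packaging of the observation that CP maps with positive ``Perron eigenvectors'' on both sides can be gauge-transformed into CPTP maps without changing their entanglement-theoretic type.
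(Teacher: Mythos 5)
Your proposal is correct and follows essentially the same route as the paper's proof: complete positivity from composition of CP maps, trace preservation from the eigenvalue relation $\Phi^*(X)=\spr(\Phi)X$ (the paper phrases this as unitality of the adjoint, which is equivalent to your trace computation), faithfulness via $\Gamma_X(\sigma)$, telescoping for powers, and conjugation by $\Gamma_X$/$\Gamma_{X^{-1}}$ preserving EB and PPT in both directions. If anything, your explicit handling of the PPT case via $\cT\circ\Gamma_X=\Gamma_{X^T}\circ\cT$ is slightly more careful than the paper's ``the same proof holds with $\SEP$ replaced by $\PPT$.''
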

\begin{remarks}
    ~\begin{itemize}
        \item This transformation is not novel; the map $\Gamma_{\sigma} \circ \Phi^* \circ \Gamma\inv_{\Phi(\sigma)}$ is known as the Petz recovery map \cite{Pet03}, and is usually considered in the case of CPTP maps $\Phi$. Replacing $\Phi$ with $\Phi^*$ and $\sigma$ with $X$ yields \eqref{eq:def_Petz} in the case that $\Phi^*(X) = \spr(\Phi)X$. This transformation has also been considered in \cite[Appendix A]{HJPR2} in order to study the spectral properties of deformed CP  maps as a function of the deformation. See \cite[Theorem 3.2]{wolf2012quantum} for a similar but slightly different approach (using $\Phi^*(\one)$ instead of $X$).
        \item By the Perron-Frobenius theory (see \cite[Theorem 2.5]{evans_spectral_1978}), if $\Phi$ is a positive map, then the spectral radius $\spr(\Phi)$ is an eigenvalue of $\Phi$, and $\Phi$ admits a positive semi-definite eigenvector $\sigma$, which can be normalized to have unit trace. Since $\spr(\Phi) = \spr(\Phi^*)$, the same logic applied to $\Phi^*$ yields $X\geq 0$ such that $\Phi^*(X) = \spr(\Phi)X$. Hence, the assumption of \Cref{prop:similarity_transform} is that both of these eigenvectors are full rank.
        \item This transformation cannot be applied in general to a CPTP map $\Phi$ in order to obtain a CPTP unital map $\tilde \Phi$, since trace-preservingness will be lost. An intuitive way to see this is that a similarity transformation corresponds to a choice of (non-orthogonal) basis, and by fixing $\Phi^*(\one)=\one$, we choose a particular basis for $\Phi$ in which the dual eigenvector for $\spr(\Phi)$ is represented by the identity matrix. Thus, in general, we cannot simultaneously choose a basis to fix a representation for the eigenvector of $\spr(\Phi)$.
    \end{itemize}
\end{remarks}
\begin{proof}
    \begin{enumerate}
        \item $\cP_X(\Phi)$ is the positive multiple of the composition of CP maps and hence is CP. Since
              \[
                  \cP_X(\Phi)^* = \frac{1}{\spr(\Phi)} \Gamma_X\inv \circ \Phi^* \circ \Gamma_X
              \]
              we have that $\cP_X(\Phi)^*$ is unital:
              \[
                  \cP_X(\Phi)^*(\one) = \frac{1}{\spr(\Phi)}\Gamma_X\inv \circ \Phi^* (X) = \Gamma_X\inv(X) = \one
              \]
              and hence $\cP_X(\Phi)$ is TP. Lastly, $\rho := \frac{\Gamma_X(\sigma)}{\tr[\ldots]}$ is a full-rank invariant state for $\cP_X(\Phi)$: up to normalization,
              \[
                  \cP_X(\Phi)(\rho) =  \frac{1}{\spr(\Phi)} \Gamma_X \circ \Phi \circ \Gamma_X\inv \circ \Gamma_X(\sigma) = \frac{1}{\spr(\Phi)} \Gamma_X \circ \Phi (\sigma) = \Gamma_X(\sigma) = \rho
              \]
              and hence $\cP_X(\Phi)$ is faithful.

        \item We have
              \begin{align*}
                  \cP_X(\Phi^n) & = \frac{1}{\spr(\Phi^n)} \Gamma_X \circ \Phi^n \circ \Gamma_X\inv                                                                                                     \\
                                & = \frac{1}{\spr(\Phi)^n} \Gamma_X \circ \Phi^{n-1} \circ \Gamma_X\inv \circ \Gamma_X \circ \Phi \circ \Gamma_X\inv                                                    \\
                                & = \ldots                                                                                                                                                              \\
                                & = \frac{1}{\spr(\Phi)^n} \Gamma_X \circ \Phi \circ \Gamma_X\inv \circ \Gamma_X \circ \Phi \circ \Gamma_X\inv \circ \dotsm\circ \Gamma_X \circ \Phi \circ \Gamma_X\inv \\
                                & = \cP_X(\Phi)^n.
              \end{align*}
        \item If $\Phi$ is EB, then $\Phi\otimes \id (\cB(\cH_A \otimes \cH_B)_+ ) \subseteq \SEP(A:B)$. Since $X$ is invertible, $\Gamma_X\inv \otimes \id (\cB(\cH_A \otimes \cH_B)_+) = \cB(\cH_A \otimes \cH_B)_+$. Hence $\Phi\circ \Gamma_X\inv \otimes \id$ is EB. Since $\SEP$ is invariant under local positive operations, $\cP_X(\Phi)$ is EB too. The converse follows the same way.
        \item The same proof holds with $\SEP(A:B)$ replaced by $\PPT(A:B)$ and EB replaced by PPT.
    \end{enumerate}
\end{proof}

This transformation allows us to establish our most general result on the entanglement breakingness of PPT maps.

\begin{theorem} \label{thm:CP_PPT_EB_discrete}
    Let $\Phi$ be a $\operatorname{CP}$ map such that for some $X>0$, $\Phi^*(X) = \spr(\Phi)X$ and for some $\sigma > 0$, $\Phi(\sigma) = \spr(\Phi)\sigma$. Then if $\Phi$ is $\operatorname{PPT}$, it is $\operatorname{EEB}$.
\end{theorem}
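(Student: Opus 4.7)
The plan is to reduce to the already-established case of faithful PPT quantum channels via the Petz-type similarity transformation $\cP_X$ introduced in \Cref{prop:similarity_transform}. The hypothesis is tailor-made for this: the existence of strictly positive eigenvectors $X$ and $\sigma$ for $\Phi^*$ and $\Phi$ respectively at the spectral radius $\spr(\Phi)$ is exactly what is required to define $\cP_X(\Phi) = \frac{1}{\spr(\Phi)} \Gamma_X \circ \Phi \circ \Gamma_X^{-1}$ and have it enjoy the four listed properties.

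First, I would invoke property \ref{it:faithfulCPTP} of \Cref{prop:similarity_transform} to conclude that $\cP_X(\Phi)$ is a faithful quantum channel. Next, since $\Phi$ is assumed PPT, property \ref{it:PPT} gives that $\cP_X(\Phi)$ is PPT as well. Thus $\cP_X(\Phi)$ falls exactly under the hypothesis of \Cref{cor:pptfulleeb}, which asserts that every faithful PPT channel is eventually entanglement breaking. Hence there exists $n_0 \in \mathbb{N}$ such that $\cP_X(\Phi)^n \in \EB(\cH)$ for every $n \geq n_0$.

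To transfer this conclusion back to $\Phi$, I would apply property \ref{it:powers} to rewrite $\cP_X(\Phi)^n = \cP_X(\Phi^n)$, and then property \ref{it:EB} to deduce that $\cP_X(\Phi^n) \in \EB(\cH)$ if and only if $\Phi^n \in \EB(\cH)$. Combining these, $\Phi^n \in \EB(\cH)$ for all $n \geq n_0$, which is exactly the statement that $\Phi \in \EEB(\cH)$.

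No step is really an obstacle here, since all of the work has been done in \Cref{prop:similarity_transform} and \Cref{cor:pptfulleeb}; the novelty of the theorem lies precisely in observing that the Petz-type similarity cleanly detaches trace-preservation from the entanglement-breaking and PPT structure, so that the reduction to the faithful CPTP case is essentially automatic once one has the correct transformation at hand. The only thing to double-check is that the hypothesis on $\sigma$ and $X$ is exactly the minimal input needed for $\cP_X$ to be well-defined and to produce a faithful CPTP map, which is precisely what \Cref{prop:similarity_transform} guarantees.
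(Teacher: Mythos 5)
Your proposal is correct and follows essentially the same route as the paper's proof: apply properties (\ref{it:faithfulCPTP}) and (\ref{it:PPT}) of \Cref{prop:similarity_transform} to obtain a faithful PPT quantum channel, invoke \Cref{cor:pptfulleeb}, and transfer the conclusion back via (\ref{it:powers}) and (\ref{it:EB}). Nothing is missing.
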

\begin{proof}
    By (\ref{it:faithfulCPTP}), $\cP_X(\Phi)$ is faithful, CPTP, and by (\ref{it:PPT}) PPT. Hence, for some $n$, $\cP_X(\Phi)^n = \cP_X(\Phi^n)$ is EB, using (\ref{it:powers}) and \Cref{cor:pptfulleeb}. Then $\Phi^n$ is EB by (\ref{it:EB}).
\end{proof}

\section{Finite time properties of discrete time evolutions} \label{sec:finite-time-discrete}
\subsection{Discrete time decoherence-free Poincar\'{e} inequality}\label{strongmixing}
To derive quantitative bounds on entanglement-breaking times for faithful discrete-time evolutions it suffices to restrict to primitive evolutions due to \Cref{prop:charEEB-discrete}, since the entanglement-breaking time of a direct sum of channels is the maximum of the entanglement-breaking times of the component channels. Consider a primitive channel $\Phi$. Since $(\Phi^n\otimes \id)(\Omega) \xrightarrow{n\to\infty} \sigma \otimes \frac{\one}{d_\cH}$, where $\sigma$ is the unique invariant state of $\Phi$, we can use \Cref{lemma1} to show that any state sufficiently close to $\sigma \otimes \frac{\one}{d_\cH}$ is separable. Thus, it remains to establish a mixing time for $\Phi\otimes \id$. The argument given in \eqref{eq:diamond-norm-bound} shows that in fact it suffices to obtain a mixing time for $\Phi$ itself.

However, in the discrete-time case, mixing time bounds even for primitive evolutions have some subtleties. Consider the qubit quantum channel $\Phi$ defined in the Pauli basis as:
\begin{equation*}
    \Phi(\one):=\one, \quad\Phi(\sigma_x):=0,\quad \Phi(\sigma_y):=0, \quad\text{and}\quad \Phi(\sigma_z):=\sigma_x.
\end{equation*}
It is easy to see $\Phi$ defines a valid unital quantum channel by a direct inspection of the Choi matrix. Moreover,  $\Phi^2(X)=\tr(X)\frac{\one}{2}$, which implies that $\Phi$ is primitive (by e.g. \cite[Theorem 6.7]{wolf2012quantum}). One can also easily check that:
\begin{equation*}
    \Phi^*(\one)=\one, \quad\Phi^*(\sigma_x)=\sigma_z,\quad \Phi^*(\sigma_y)=0,\quad \text{and} \quad \Phi^*(\sigma_z)=0.
\end{equation*}
We see that both $\Phi^*\Phi(\one)=\one$ and $\Phi^*\Phi(\sigma_z)=\sigma_z$, implying the second largest singular value (with multiplicity) of $\Phi$ is $1$ and results such as~\cite[Theorem 9]{TKRV10} based on deriving mixing times from the singular values yield trivial estimates for the convergence of the map. This example is revisited in \Cref{rem:algnotcoincideexample}.

In this section, we remedy this by deriving mixing-time bounds which are always non-trivial. In fact, we will remove the primitivity assumption and work directly with general faithful quantum channels. To the best of our knowledge, the only framework that so far allows for convergence statements in the non-primitive case is that of~\cite{Szehr2015}. In that work, the underlying constants that govern the convergence of the quantum channel are a function of the complete spectrum of the channel. This approach does not provide a simple variational formulation of the constants that govern the evolution (such as \eqref{spectralgap}), which is usually vital for applications.
Here, we provide a new bound in the spirit of \cite{TKRV10} by exploiting the relation between the singular values of $\Phi$ and the Poincar\'{e} constant of the generator of a well-chosen continuous time quantum Markov semigroup. Furthermore, we give necessary and sufficient conditions for the distance between the image of the quantum channel and its limit to be strictly contractive in terms of properties of this semigroup.
Thus, we believe that these techniques will find applications outside of our context of estimating when a channel becomes entanglement-breaking.

In \Cref{sec:preliminaries}, we briefly described the general asymptotic structure of quantum channels. In the Heisenberg picture, i.e.~for completely positive, unital channels $\Phi^*$, the situation is similar: assuming that $\cK_0=\{0\}$, the range of the projector $P^*$, which is called the \textit{decoherence-free subalgebra} of $\{\Phi^n\}_{n\in\NN}$ and denoted by $\cN(\Phi^*)$, has the form
\begin{align}\label{Dfalgebra}
    \cN(\Phi^*):=\bigoplus_{i = 1}^K\,\cB(\cH_i)\otimes \mathbb{I}_{\cK_i}\, \quad P^*(X) = \sum_{i=1}^K \tr_{\cK_i}[ p_i X p_i ( \one_{\cH_i} \otimes \tau_i )] \otimes \one_{\cK_i}
\end{align}
where $p_i$ is the projection onto the $i$th term in the direct sum, and each $\tau_i \in \cD(\cK_i)$ is a full rank state (see e.g. \cite[Theorem 34]{lami_entanglement-saving_2016}).
This occurs when $\Phi$ is \textit{faithful} (it possesses a full-rank invariant state).
Recalling \eqref{eq:decohere-decomp}, under the faithfulness assumption, we likewise have
\begin{equation*}
    \tilde{\cN}(\Phi):=\bigoplus_{i=1}^K\,\cB(\cH_i)\otimes \tau_i\,,~~~P(\rho)=\sum_{i=1}^K\tr_{\cK_i}(p_i\rho p_i)\otimes \tau_i\,,
\end{equation*}
in the Schr\"odinger picture,
where $\tilde{\cN}(\Phi)$ is the phase subspace of $\Phi$ and $P$ is the projection onto $\tilde \cN(\Phi)$.
For any such discrete time quantum Markov semigroup, the Jordan decomposition \eqref{Jordandecomp} immediately yields the following convergence result: for any $X\in\cB(\cH)$,
\begin{equation}\label{eq:asymptotic-conv-to-Pstar}
    (\Phi^*)^n(X)-(\Phi^*)^n\circ P^*(X)\to 0~~\text{ as }~~n\to\infty\,.
\end{equation}
Here, the map $P^*$ plays the role of the conditional expectation $E_\cN^*$ associated to quantum Markov semigroup in continuous time discussed in \Cref{sec:preliminaries}. Note, we work in the Heisenberg picture here because $P^*$ projects onto  $\cN(\Phi^*)$ which is an algebra, while $P$ projects onto $\tilde \cN(\Phi)$ which is not.

The state $\sigma_{\tr}:=d_{\cH}^{-1}P(\mathbb{I}) = \frac{1}{d_\cH} \sum_{i=1}^K d_{\cK_i}I_{\cH_i} \otimes \tau_i$ plays a privileged role in the following, and commutes with all of $\cN(\Phi^*)$. We consider the inner product $\braket{X,Y}_{\sigma_{\tr}} = \tr[\Gamma_{\sigma_{\tr}}(X^\dagger) Y]$, where $\Gamma_{\sigma_{\tr}}$ is the superoperator defined as $\Gamma_{\sigma_{\tr}}(X)=\sigma_{\tr}^{1/2}X\sigma_{\tr}^{1/2}$. Note that the adjoint of a Hermitian-preserving linear map $\Psi: \cB(\cH) \to \cB(\cH)$ with respect to $\braket{\cdot,\cdot}_{\sigma_{\tr}}$ is given by $\Gamma_{\sigma_{\tr}}\inv \circ \Psi^* \circ \Gamma_{\sigma_{\tr}}$ where $\Psi^*$ is the adjoint of $\Psi$ with respect to the Hilbert-Schmidt inner product. We then consider $\hat{\Phi}:=\Gamma_{\sigma_{\tr}}^{-1}\circ \Phi\circ \Gamma_{\sigma_{\tr}}$ which is the adjoint of $\Phi^*$ with respect to $\braket{\cdot,\cdot}_{\sigma_{\tr}}$, along with $\hat{P}:=\Gamma_{\sigma_{\tr}}^{-1}\circ P\circ \Gamma_{\sigma_{\tr}}$ which is the adjoint of $P^*$ with respect to  $\braket{\cdot,\cdot}_{\sigma_{\tr}}$. Note that $\hat \Phi$ and $\hat P$ commute. Moreover, for $X \in \cB(\cH)$,
\begin{equation*}
    \hat P(X) = \sum_{i=1}^K \tr_{\cK_i} (  p_i (\mathbb{I}_{\cH_i} \otimes \tau_i^{1/2})X(\mathbb{I}_{\cH_i} \otimes \tau_i^{1/2}) p_i) \otimes \mathbb{I}_{\cK_i} = P^*(X).
\end{equation*}
In other words, $P^*$ is self-adjoint with respect to $\braket{\cdot,\cdot}_{\sigma_{\tr}}$. Note also that $\hat \Phi$ is unital.

Next, the following holds for any $\rho\in\cD(\cH)$ and $X=\sigma_{\tr}^{-1/2}\rho \sigma_{\tr}^{-1/2}$:
\begin{align}\label{fromstatetoobs}
    \|\Phi(\rho)-\Phi\circ P(\rho)\|_1=\|\hat{\Phi}(X)- \hat{\Phi}\circ\hat{P}(X)\|_{1,\sigma_{\tr}}\le \|\hat{\Phi}(X)- \hat{\Phi}\circ\hat{P}(X)\|_{2,\sigma_{\tr}}\,.
\end{align}

The relationship \eqref{fromstatetoobs} allows the convergence in the Schr\"odinger picture to be controlled by the convergence of $\hat \Phi$. This has the following advantages: the norm $\|\cdot\|_{2,\sigma_{\tr}}$ is induced by an inner product (unlike $\|\cdot\|_1$), and $\hat P$ is a conditional expectation as shown by the following lemma (unlike $P$).

\begin{lemma}\label{lemmatechn}
    \item[(i)] $\hat{P}$ is adjoint preserving: for any $X\in\cB(\cH)$, $\hat{P}(X^\dagger)=\hat{P}(X)^\dagger$;
    \item[(ii)] for any $X\in\cB(\cH)$ and $Y,Z\in\cN(\hat{\Phi})$, $\hat{P}(YXZ)=Y\hat{P}(X)Z$;
    \item[(iii)] for any $X\in\cB(\cH)$, $\tr(\sigma_{\tr}X)=\tr(\sigma_{\tr}\hat{P}(X))$;
    \item[(iv)] $\hat{P}$ is self-adjoint with respect to $\sigma_{\tr}$. Moreover, for any $X,Y\in\cB(\cH)$.
    \begin{align*}
        \langle X,\,\hat{P}(Y)\rangle_{\sigma_{\tr}}=   \langle \hat{P}(X),\,Y\rangle_{\sigma_{\tr}}=   \langle \hat{P}(X),\,\hat{P}(Y)\rangle_{\sigma_{\tr}}
    \end{align*}
\end{lemma}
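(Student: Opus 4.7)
The plan is to exploit the identity $\hat{P} = P^*$ established in the paragraph preceding the lemma, together with the fact that $P^*$ is a conditional expectation onto the finite-dimensional $*$-algebra $\cN(\Phi^*) = \bigoplus_i \cB(\cH_i)\otimes \mathbb{I}_{\cK_i}$. Given these two facts, all four statements reduce to standard properties of conditional expectations, combined with the idempotency $\hat{P}^2 = \hat{P}$ inherited from $P^2 = P$ via $\hat{P}^2 = \Gamma_{\sigma_{\tr}}^{-1}\circ P^2\circ \Gamma_{\sigma_{\tr}}$.

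For (i), I would note that $\hat{P} = P^*$ is completely positive (being a conditional expectation) and hence automatically Hermiticity-preserving. Alternatively, one can read it off the explicit formula $P^*(X)=\sum_i \tr_{\cK_i}[p_i X p_i(\mathbb{I}_{\cH_i}\otimes\tau_i)]\otimes \mathbb{I}_{\cK_i}$, using that $\tau_i$ and $p_i$ are self-adjoint. For (ii), I would first identify $\cN(\hat{\Phi})$ with $\cN(\Phi^*)$: since $\hat{P}$ is the spectral projection onto the peripheral part of $\hat{\Phi}$ and at the same time equals $P^*$, which projects onto $\cN(\Phi^*)$, the two subalgebras coincide. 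The bimodule identity $\hat{P}(YXZ) = Y\hat{P}(X)Z$ for $Y,Z\in\cN(\hat{\Phi})$ then follows directly from the explicit formula above: writing $Y = \bigoplus_i Y_i \otimes \mathbb{I}_{\cK_i}$ and $Z = \bigoplus_i Z_i \otimes \mathbb{I}_{\cK_i}$, both commute with each $p_i$ and with each $\mathbb{I}_{\cH_i}\otimes\tau_i$, so they can be pulled out of $\tr_{\cK_i}$ on the left and right respectively.

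For (iii) and (iv), I would use the self-adjointness of $\hat{P}$ with respect to $\braket{\cdot,\cdot}_{\sigma_{\tr}}$, already recorded in the paragraph preceding the lemma, together with unitality $\hat{P}(\mathbb{I}) = P^*(\mathbb{I}) = \mathbb{I}$ and idempotency $\hat{P}^2 = \hat{P}$. For (iii), the chain
\[
\tr(\sigma_{\tr}\,\hat{P}(X)) = \braket{\mathbb{I}, \hat{P}(X)}_{\sigma_{\tr}} = \braket{\hat{P}(\mathbb{I}), X}_{\sigma_{\tr}} = \braket{\mathbb{I}, X}_{\sigma_{\tr}} = \tr(\sigma_{\tr}X)
\]
does the job. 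For (iv), the first equality is exactly the self-adjointness of $\hat{P}$, and the remaining equality follows by applying self-adjointness and idempotency in turn:
\[
\braket{\hat{P}(X), Y}_{\sigma_{\tr}} = \braket{X, \hat{P}(Y)}_{\sigma_{\tr}} = \braket{X, \hat{P}^2(Y)}_{\sigma_{\tr}} = \braket{\hat{P}(X), \hat{P}(Y)}_{\sigma_{\tr}}.
\]
The only substantive subtlety is the identification $\cN(\hat{\Phi})=\cN(\Phi^*)$ in (ii); this is however immediate from $\hat{P} = P^*$ together with the fact that the decoherence-free subalgebra is, by definition, the range of the associated spectral projection, so I do not anticipate any real obstacle beyond careful bookkeeping.
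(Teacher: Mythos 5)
Your proof is correct, and for part (ii) it coincides with the paper's argument (decompose $Y,Z$ into blocks and pull them through the partial traces). For the remaining parts you take a mildly different but equally valid route, essentially reversing the paper's logical dependencies. The paper proves (iii) directly from trace-preservation of $P$ via $\tr(\sigma_{\tr}\hat P(X))=\tr(P\circ\Gamma_{\sigma_{\tr}}(X))=\tr(\Gamma_{\sigma_{\tr}}(X))$, and only then assembles (iv) from (i)--(iii) through an explicit seven-line chain of identities; you instead take the self-adjointness of $\hat P$ as already secured by the displayed computation $\hat P(X)=P^*(X)$ in the paragraph preceding the lemma, and derive (iii) from self-adjointness plus unitality $\hat P(\one)=\one$, and the second equality in (iv) from idempotency $\hat P^2=\hat P$. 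This is legitimate and arguably cleaner, since the identity $\hat P=P^*$ really is established (not merely asserted) before the lemma; the only caveat is that your argument then leans on that preliminary computation being airtight, whereas the paper's proof of (iv) is self-contained modulo (i)--(iii). Your identification $\cN(\hat\Phi)=\cN(\Phi^*)$ in (ii), via the fact that $\hat P$ and $P^*$ are the same projection and hence have the same range, is exactly the bookkeeping the paper leaves implicit when it writes $Y=\sum_i Y_i\otimes\one_{\cK_i}$ for $Y\in\cN(\hat\Phi)$. For (i), your observation that $\hat P=P^*$ is manifestly Hermiticity-preserving from its explicit formula is a fine substitute for the paper's appeal to $P$ being a Ces\`aro mean of CPTP maps.
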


\begin{proof}
    \begin{enumerate}[(i)]
        \item follows from the fact that $P$ itself is adjoint-preserving. This can be seen from the fact that $P = \lim_{N\to\infty} \frac{1}{N}\sum_{n=1}^N \Phi^n$ is the Cesaro mean of $\Phi$, and hence inherits the trace-preserving and complete positivity from $\Phi$ itself (see e.g.~Proposition 6.3 of \cite{wolf2012quantum}).
        \item For $Y,Z\in \cN(\hat\Phi)$, we have the decompositions $Y = \sum_i Y_i \otimes \one_{\cK_i}$ and $Z = \sum_i Z_i \otimes \one_{\cK_i}$. Then,
              \begin{align*}
                  \hat P(YXZ) & = \sum_i \tr_{\cK_i} ( (\mathbb{I}_{\cH_i} \otimes \tau_i) p_iY X Z p_i) \otimes \mathbb{I}_{\cK_i}                                                                                       \\
                              & = \sum_i \tr_{\cK_i} ( (\mathbb{I}_{\cH_i} \otimes \tau_i)  (Y_i \otimes \mathbb{I}_{\cK_i}) p_iXp_i (Z_i \otimes \mathbb{I}_{\cK_i})) \otimes \mathbb{I}_{\cK_i}                         \\
                              & = \sum_i [\tr_{\cK_i} (Y_i (\mathbb{I}_{\cH_i} \otimes \tau_i) p_i X  p_i )Z_i] \otimes \mathbb{I}_{\cK_i}                                                                                \\
                              & = \sum_j Y_j \otimes \mathbb{I}_{\cK_j}\left(\sum_i \tr_{\cK_i} ( (\mathbb{I}_{\cH_i} \otimes \tau_i) p_i X p_i) \otimes \mathbb{I}_{\cK_i} \right) \sum_j Z_j \otimes \mathbb{I}_{\cK_j} \\
                              & =Y \hat P(X) Z.
              \end{align*}

        \item follows from a simple computation: since $\Phi$ is trace preserving, so is $P$ (as discussed in the first point), and hence
              \begin{align*}
                  \tr(\sigma_{\tr}\hat{P}(X))=\tr (P\circ \Gamma_{\sigma_{\tr}}(X))=\tr (\Gamma_{\sigma_{\tr}}(X))=\tr(\sigma_{\tr} X)
              \end{align*}
        \item is a consequence of (i)--(iii):
              \begin{align*}
                  \langle X,\,\hat{P}(Y)\rangle_{\sigma_{\tr}} & =\tr(\sigma_{\tr}^{1/2}X^\dagger\sigma_{\tr}^{1/2}\hat{P}(Y))          \\
                                                               & =\tr(\sigma_{\tr}X^\dagger\hat{P}(Y))                                  \\
                                                               & =\tr(\sigma_{\tr}\hat{P}(X^\dagger\hat{P}(Y)))                         \\
                                                               & =\tr(\sigma_{\tr}\hat{P}(X^\dagger)\hat{P}(Y))                         \\
                                                               & =\tr(\sigma_{\tr}\hat{P}(X)^\dagger\hat{P}(Y))                         \\
                                                               & =\tr(\sigma_{\tr}^{1/2}\hat{P}(X)^\dagger\sigma_{\tr}^{1/2}\hat{P}(Y)) \\
                                                               & =\langle \hat{P}(X),\,\hat{P}(Y)\rangle_{\sigma_{\tr}}
              \end{align*}
              where we used the commutativity of $\sigma_{\tr}$ with $\cN(\hat{\Phi})$ in the second and sixth lines, (i) in the fifth line, (ii) in the fourth line and (iii) in the third line. The first identity in (iv) follows by symmetry. \qed
    \end{enumerate}
\end{proof}
\medskip
Next, we show how to obtain the strong decoherence constants of a quantum Markov chain $\{\Phi^n\}_{n=1}^\infty$ in terms of the Poincar\'e constant of a semigroup defined via $\Phi$. However, in the discrete time case the situation is a bit more subtle, as discrete-time semigroups are not necessarily strictly contractive. That is, there might be $X\not=\hat{P}(X)$ such that
\begin{align*}
    \|\hat{\Phi}(X)-\hat{\Phi}\circ \hat{P}(X)\|_{2,\sigma_{\tr}}=\|X-\hat{P}(X)\|_{2,\sigma_{\tr}}.
\end{align*}
Thus, before showing how to obtain functional inequalities that characterize the strict contraction of the channel, we first discuss some necessary conditions for the contraction to occur.
\paragraph{Cycles in the phase algebra:} Here, we show that quantum channels might not be contractive for times not proportional to the order of the permutation in the phase algebra.
More precisely, recall that for $X=P(X)=\bigoplus_i X_i\otimes \tau_i\oplus0_{\cK_0}$, we have
\begin{equation}
    \Phi(X) = \bigoplus_{i=1}^K\, U_i X_{\pi(i)} U_i^\dagger \otimes \tau_i\oplus 0_{\cK_0}.
\end{equation}
for some permutation $\pi$. We will now construct quantum channels which are not contractive for times smaller than the largest cycle in $\pi$.
To this end, define in a fixed orthonormal basis $\{|i\rangle\}_{i=0}^{d-1}$ of $\CC^d$ the matrix $A\in\M_d$ as
\begin{align*}
    A:=\sum\limits_{i,j=0}^{d-2}\ketbra{i}{j}+\ketbra{d-1}{d-1}+\epsilon\left(\ketbra{d-1}{d-2}+\ketbra{d-2}{d-1}\right).
\end{align*}
for some $0<\epsilon<1$ and let $\pi\in S_d$ be the $d-$cycle $(0\,1\,2\,\cdots\,d-1)$. $A$ is clearly a positive matrix with $1$-s on its diagonal. Next, we define the quantum channel $\Phi_{\pi,A}(X):=U_\pi\lb A\circ X\rb U_\pi^\dagger$ where $U_\pi:\,|i\rangle\mapsto|\pi(i)\rangle$ is a unitary representation of $\pi$ and $\circ$ denotes the Hadamard product.

For this quantum channel, $P=\hat{P}$ can be easily shown to be the projection onto the algebra of matrices that are diagonal in the basis $\{|i\rangle\}_{i=0}^{d-1}$. Moreover, $\Phi_{\pi,A}$ just acts as a permutation on this algebra.
Now, let $T=\sum_{i,j=0}^1\ketbra{i}{j}$. We have for $1\leq k< d-2$ and $m\in\mathbb{N}$:
\begin{align*}
    \Phi^{md+k}_{A,\pi}(T)=\ketbra{k}{k}+\ketbra{k+1}{k+1}+\epsilon^m\left(\ketbra{k+1}{k}+\ketbra{k}{k+1}\right),\quad \hat{P}(T)= \ketbra{0}{0}+\ketbra{1}{1}.
\end{align*}
Clearly, since $\sigma_{\tr}=d^{-1}\,\mathbb{I}_{\CC^d}$, $$\|\Phi^{md+k-1}_{A,\pi}(T)-P\circ \Phi^{md+k}_{A,\pi}(T)\|_{2,\sigma_{\tr}}=\|\Phi^{md+k}_{A,\pi}(T)-P\circ \Phi^{md+k+1}_{A,\pi}(T)\|_{2,\sigma_{\tr}}\,,$$ which shows that these channels are not contractive at intermediate steps.

\paragraph{Contraction properties through decoherence-free algebras:} Surprisingly, there exists a link between the strict contraction of a quantum channel $\Phi$ and whether its decoherence-free subalgebra coincides with the decoherence-free subalgebra of the quantum dynamical semigroup with generator $\cL^*:=\Phi^*\circ\hat{\Phi}-\id$. First, we show the following inclusion of algebras:

\begin{proposition}
    Given a quantum channel $\Phi:\cB(\cH)\to\cB(\cH)$ whose action on its phase space is purely unitary (i.e. contains no cycles), consider the continuous time quantum Markov semigroup $(\Phi_t)_{t\ge 0}$ whose corresponding generator is given (in the Heisenberg picture) by $\cL^*=\Phi^*\circ\hat{\Phi}-\id$. Then $\cN(\hat{\Phi})\subset \cN((\Phi_t^*)_{t\ge 0})$.
\end{proposition}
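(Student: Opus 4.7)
The plan is to show the stronger statement that every peripheral eigenvector of $\hat\Phi$ is annihilated by $\cL^*$, which yields the inclusion via $\cN(\hat\Phi)\subseteq \ker\cL^*\subseteq \cN((\Phi_t^*)_{t\ge 0})$. Here the first inclusion uses that $\cN(\hat\Phi)$ is by definition the span of peripheral eigenvectors of $\hat\Phi$, and the second because any $X$ with $\cL^*(X)=0$ is fixed by the whole semigroup $(\Phi_t^*)_{t\ge 0}$ and hence lies in the range of the conditional expectation $E_\cN^*$.

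The first substantive step is to exploit the ``no cycles'' assumption to derive the identity $\Phi(\sigma_{\tr})=\sigma_{\tr}$. Writing $\sigma_{\tr}=d_\cH^{-1}\sum_i d_{\cK_i}\,\mathbb{I}_{\cH_i}\otimes\tau_i$ from the structure \eqref{eq:decohere-decomp} and applying \eqref{eq:Phi-on-decohere-decomp} with trivial permutation $\pi=\id$, one reads off $\Phi(\mathbb{I}_{\cH_i}\otimes\tau_i)=U_i\mathbb{I}_{\cH_i}U_i^\dagger\otimes\tau_i=\mathbb{I}_{\cH_i}\otimes\tau_i$, and summing gives the claim. As a consequence $\hat\Phi(\mathbb{I})=\Gamma_{\sigma_{\tr}}^{-1}\Phi(\sigma_{\tr})=\mathbb{I}$, so $\hat\Phi$ is unital (and $\cL^*(\mathbb{I})=0$, confirming that $\cL^*$ is a genuine Heisenberg-picture Lindblad generator). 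More importantly, invariance of $\sigma_{\tr}$ under $\Phi$ together with Kadison--Schwarz for the CP unital map $\Phi^*$ yields the standard data-processing bound $\|\Phi^*\|_{2,\sigma_{\tr}\to 2,\sigma_{\tr}}\le 1$ (see e.g.~\cite{OLKIEWICZ1999246}); passing to the $\sigma_{\tr}$-adjoint, the same bound then holds for $\hat\Phi$.

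With contractivity established, the conclusion follows from a standard argument on peripheral eigenvectors of Hilbert-space contractions. Let $X$ satisfy $\hat\Phi(X)=\lambda X$ with $|\lambda|=1$. Using the adjoint relation $\langle \Phi^*(X),X\rangle_{\sigma_{\tr}}=\langle X,\hat\Phi(X)\rangle_{\sigma_{\tr}}=\lambda\,\|X\|_{2,\sigma_{\tr}}^2$ and its complex conjugate, one expands
\[
\bigl\|\Phi^*(X)-\bar\lambda\, X\bigr\|_{2,\sigma_{\tr}}^{2}=\|\Phi^*(X)\|_{2,\sigma_{\tr}}^{2}-\|X\|_{2,\sigma_{\tr}}^{2}\le 0,
\]
so $\Phi^*(X)=\bar\lambda X$. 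Therefore $\cL^*(X)=\Phi^*\hat\Phi(X)-X=\lambda\bar\lambda X-X=0$, i.e.\ $X\in\ker\cL^*\subseteq \cN((\Phi_t^*)_{t\ge 0})$, as required.

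The main obstacle is the contractivity of $\hat\Phi$ on $(\cB(\cH),\langle\cdot,\cdot\rangle_{\sigma_{\tr}})$: although $\hat\Phi$ is CP and unital, it is generally neither trace preserving nor does it admit $\sigma_{\tr}$ as an invariant state, so one cannot apply the standard KMS-contraction results to $\hat\Phi$ directly. The no-cycles hypothesis is indispensable precisely because it is what guarantees $\Phi(\sigma_{\tr})=\sigma_{\tr}$, and hence what enables the data-processing bound on $\Phi^*$ whose $\sigma_{\tr}$-adjoint is $\hat\Phi$; without this hypothesis, peripheral eigenvectors of $\hat\Phi$ coming from non-trivial cycles need not coincide with fixed points of $\cL^*$, and the inclusion can fail.
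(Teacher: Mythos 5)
Your proof is correct and follows essentially the same route as the paper's: both arguments combine the fact that $\hat\Phi$ acts isometrically on its phase space with the fact that $\Phi^*\circ\hat\Phi$ is a positive contraction for $\langle\cdot,\cdot\rangle_{\sigma_{\tr}}$ to force $\Phi^*\circ\hat\Phi(X)=X$ on $\cN(\hat\Phi)$; you do it directly via the equality case $\Phi^*(X)=\bar\lambda X$, while the paper argues by contradiction using the spectral decomposition of $\Phi^*\circ\hat\Phi$. A nice touch is that you explicitly justify the norm-one bound on $\Phi^*\circ\hat\Phi$ (via $\Phi(\sigma_{\tr})=\sigma_{\tr}$, which is where the no-cycle hypothesis enters, and Kadison--Schwarz), a point the paper simply asserts.
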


\begin{proof} Assume that $\Phi^*\circ\hat{\Phi}(X)\not=X$, but $X\in \cN( \hat \Phi )$. As discussed before, $\hat{\Phi}$ acts unitarily on $X$. Thus,
    \begin{align}\label{equ:unitaryonx}
        \langle \hat{\Phi}(X),\,\hat{\Phi}(X)\rangle_{\sigma_{\tr}}= \langle X,X\rangle_{\sigma_{\tr}}.
    \end{align}
    Now, note that $\Phi^*\circ \,\hat{\Phi}$ is a positive operator of norm $1$ w.r.t.~ $\langle \cdot,\cdot\rangle_{\sigma_{\tr}}$\footnote{By this we do not mean that it maps positive matrices to positive matrices, but rather that is a positive semidefinite operator of operator norm $1$ when seen as a linear operator between the Hilbert space $\M_d$ with scalar product given by $\langle \cdot,\cdot\rangle_{\sigma_{\tr}}$}. Thus, we may decompose $X$ as $X=a_1X_1+a_2X_2$, where $X_1$ and $X_2$ are orthogonal matrices w.r.t. the weighted scalar product and of unit norm, where $\Phi^*\circ \,\hat{\Phi}(X_1)=X_1$ and $\Phi^*\circ\,\hat{\Phi}(X_2)=X'_2$ with $\|X_2'\|_{2,\sigma_{\tr}}<1$ and such that $X'_2$ is orthogonal to $X_1$. This can be achieved by picking $X_1$ in the eigenspace corresponding to $1$ and $X_2$ in the orthogonal eigenspace where $\Phi^*\circ \,\hat{\Phi}$ is strictly contractive, i.e. corresponding to eigenvalues $<1$. It is then easy to see that
    \begin{align*}
        \langle \Phi^*\circ\,\hat{\Phi}(X),X\rangle_{\sigma_{\tr}}=a_1^2+a_2^2\langle X'_2,X_2\rangle_{\sigma_{\tr}}<a_1^2+a_2^2=\langle X,X\rangle_{\sigma_{\tr}},
    \end{align*}
    which contradicts equation~\eqref{equ:unitaryonx}.

    \qed
\end{proof}

\begin{remark}\label{rem:algnotcoincideexample}
    Note that in general we have that $\cN( \hat \Phi )\subset\cN(( \Phi_t^* )_{t\ge 0})$ is a strict inclusion, even if the action of the channel $\Phi$ on the phase space does not contain any cycle. As an example, consider the qubit quantum channel $\Phi$ defined at the start of the section. Recall
    \begin{align*}
        \Phi(\one)=\one, \,\Phi(\sigma_x)=0,\, \Phi(\sigma_y)=0,\, \Phi(\sigma_z)=\sigma_x \\
        \Phi^*(\one)=\one, \,\Phi^*(\sigma_x)=\sigma_z,\, \Phi^*(\sigma_y)=0,\, \Phi^*(\sigma_z)=0,
    \end{align*}
    and that $\Phi$ is a primitive unital quantum channel such that $\Phi^2(X)=\tr(X)\frac{\one}{2}$.
    Since $\sigma_{\tr}=\mathbb{I}/2$, $\hat{\Phi}=\Phi$ and $\Phi^*\circ\,\Phi(\sigma_z)=\sigma_z$, we have $\sigma_z\in\cN( (\Phi_t^*)_{t\ge0} )$ but $\sigma_z\not\in\cN( \hat \Phi )=\CC\mathbb{I}$. Not surprisingly, it can be checked that the
    state $\frac{1}{2}\left(\one+\sigma_z\right)$ does not contract under this channel.
\end{remark}

\begin{proposition}
    Given a quantum channel $\Phi:\cB(\cH)\to\cB(\cH)$ whose action on its phase space is purely unitary (i.e. contains no cycles), consider the continuous time quantum Markov semigroup $(\Phi_t)_{t\ge 0}$ whose corresponding generator is given (in the Heisenberg picture) by $\cL^*=\Phi^*\circ\hat{\Phi}-\id$ and the state $\sigma_{\tr}=d_{\cH}^{-1}\,P(\mathbb{I}_{\cH})$. If for all $X\notin\cN(\hat{\Phi})$
    \begin{align}\label{discpoinc}
        \|\hat{\Phi}(X)-\hat{\Phi}\circ \hat{P}(X)\|_{2,\sigma_{\tr}}^2<\|X-\hat{P}(X)\|_{2,\sigma_{\tr}}^2,
    \end{align}
    then $\cN(( \Phi_t^*)_{t\ge 0} ) = \cN( \hat \Phi )$.
\end{proposition}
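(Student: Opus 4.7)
The plan is to establish the missing inclusion $\cN((\Phi_t^*)_{t\geq 0})\subseteq \cN(\hat\Phi)$, since the previous proposition already supplies the reverse direction. First I would observe that because $\hat\Phi$ is the adjoint of $\Phi^*$ with respect to $\langle\cdot,\cdot\rangle_{\sigma_{\tr}}$, the operator $\Phi^*\circ\hat\Phi$ is self-adjoint and positive semidefinite on the Hilbert space $(\cB(\cH),\langle\cdot,\cdot\rangle_{\sigma_{\tr}})$, with spectrum in $[0,1]$ (the contraction property of $\hat\Phi$ in the $\sigma_{\tr}$-weighted $2$-norm is standard, using $\hat\Phi(\mathbb{I})=\mathbb{I}$ and $\hat\Phi^*(\sigma_{\tr})=\sigma_{\tr}$). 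Hence $\cL^* = \Phi^*\circ\hat\Phi-\id$ is self-adjoint with spectrum in $[-1,0]$, and the decoherence-free subalgebra of the semigroup it generates coincides with $\ker\cL^* = \ker(\Phi^*\hat\Phi - \id)$.

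Next I would verify that $\hat P$ is compatible with $\Phi^*\hat\Phi$. The hypothesis that the action of $\Phi$ on its phase space contains no cycles ensures that $\hat\Phi$ restricted to $\cN(\hat\Phi)$ preserves $\|\cdot\|_{2,\sigma_{\tr}}$. For any $Y\in\cN(\hat\Phi)$, this gives
\[
    \langle Y,\Phi^*\hat\Phi(Y)\rangle_{\sigma_{\tr}} = \|\hat\Phi(Y)\|_{2,\sigma_{\tr}}^2 = \|Y\|_{2,\sigma_{\tr}}^2,
\]
which, together with $\|\Phi^*\hat\Phi(Y)\|_{2,\sigma_{\tr}}\leq \|Y\|_{2,\sigma_{\tr}}$, forces saturation of the Cauchy--Schwarz inequality and hence $\Phi^*\hat\Phi(Y) = Y$. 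Consequently $\cN(\hat\Phi)\subseteq\ker(\Phi^*\hat\Phi-\id)$, and in particular $\hat P(X)$ is a fixed point of $\Phi^*\hat\Phi$ for every $X$.

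The core step is then a short contradiction argument using \eqref{discpoinc}. Let $X\in \ker(\Phi^*\hat\Phi-\id)$ and set $Y := X-\hat P(X)$. Since both $X$ and $\hat P(X)$ are fixed by $\Phi^*\hat\Phi$, so is $Y$, and therefore
\[
    \|\hat\Phi(Y)\|_{2,\sigma_{\tr}}^2 = \langle Y,\Phi^*\hat\Phi(Y)\rangle_{\sigma_{\tr}} = \|Y\|_{2,\sigma_{\tr}}^2.
\]
Idempotency of $\hat P$ gives $\hat P(Y)=0$, so if $Y\neq 0$ then $Y\notin\cN(\hat\Phi)$ and the hypothesis \eqref{discpoinc} yields $\|\hat\Phi(Y)\|_{2,\sigma_{\tr}}^2 < \|Y\|_{2,\sigma_{\tr}}^2$, a contradiction. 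Hence $X = \hat P(X)\in\cN(\hat\Phi)$, completing the proof.

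The delicate point I expect is the ``unitary-on-phase-space'' hypothesis: it has to be invoked precisely to promote the algebraic statement $Y\in\cN(\hat\Phi)$ into the one-step isometry $\|\hat\Phi(Y)\|_{2,\sigma_{\tr}}=\|Y\|_{2,\sigma_{\tr}}$ that drives the Cauchy--Schwarz saturation. Without it, the inclusion $\cN(\hat\Phi)\subseteq\ker\cL^*$ can be strict, as witnessed by the qubit channel of \Cref{rem:algnotcoincideexample}, and the argument breaks down.
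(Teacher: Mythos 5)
Your proof is correct and follows essentially the same route as the paper's: both arguments show that an element of $\ker(\Phi^*\circ\hat\Phi-\id)$ outside $\cN(\hat\Phi)$ would saturate the contraction inequality \eqref{discpoinc}, using the no-cycles hypothesis exactly where you do, namely to make $\hat\Phi$ isometric on the image of $\hat P$. Your repackaging via $Y=X-\hat P(X)$ (itself a fixed point of $\Phi^*\circ\hat\Phi$, by your Cauchy--Schwarz saturation step) is only a cosmetic variant of the paper's direct expansion of $\|\hat\Phi(X)-\hat\Phi\circ\hat P(X)\|_{2,\sigma_{\tr}}^2$.
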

\begin{proof}
    Suppose that there exists $X$ such that $X\in\cN( (\Phi_t^*)_{t\ge 0} )$ but $X\not\in \cN( \hat \Phi )$. This is equivalent to
    \begin{align*}
        \Phi^*\circ\hat{\Phi}\,(X)=X,\quad \hat{P}(X)\not=X.
    \end{align*}
    We will now show that
    \begin{align}\label{equ:equalitynorms}
        \|\hat{\Phi}(X)-\hat{\Phi}\circ \hat{P}(X)\|_{2,\sigma_{\tr}}^2=\|X-\hat{P}(X)\|_{2,\sigma_{\tr}}^2.
    \end{align}
    First, note that
    \begin{align*}
        \|\hat{\Phi}(X) & -\hat{\Phi}\circ \hat{P}(X)\|_{2,\sigma_{\tr}}^2 \\
                        & =
        \langle \hat{\Phi}(X),\,\hat{\Phi}(X)\rangle_{\sigma_{\tr}}+\langle \hat{\Phi}\circ\hat{P}(X),\,\hat{\Phi}\circ\hat{P}(X)\rangle_{\sigma_{\tr}}
        -2\operatorname{Re}\langle \hat{\Phi}(X),\,\hat{\Phi}\circ\hat{P}(X)\rangle_{\sigma_{\tr}}.
    \end{align*}
    From $\Phi^*\circ\,\hat{\Phi}(X)=X$, it follows that $\langle \hat{\Phi}(X),\,\hat{\Phi}(X)\rangle_{\sigma_{\tr}}=\langle X,X\rangle_{\sigma_{\tr}}$ and $\langle \hat{\Phi}(X),\,\hat{\Phi}\,\circ\,\hat{P}(X)\rangle_{\sigma_{\tr}}=\langle X,\,\hat{P}(X)\rangle_{\sigma_{\tr}}$. Similarly, we have $\langle \hat{\Phi}\circ\hat{P}(X),\,\hat{\Phi}\circ\hat{P}(X)\rangle_{\sigma_{\tr}}=\langle \hat{P}(X),\,\hat{P}(X)\rangle_{\sigma_{\tr}}$, since $\hat{\Phi}$ acts unitarily on the image of $\hat{P}$ with a corresponding unitary in $\cN(\Phi)$ which commutes with $\sigma_{\tr}$. It is then easy to see that~\eqref{equ:equalitynorms} follows and $\hat{P}$ is not a strict contraction.

    \qed
\end{proof}

The previous discussion suggests that, in general, it is necessary to wait until the algebras coincide and the quantum channel has no cycle to ensure strict contraction.
The next theorem, which generalizes Theorem 9 of \cite{TKRV10} to the non-primitive case, shows that this is also sufficient.
\begin{theorem} \label{prop:discrete-Poincare}
    Given a discrete time quantum Markov semigroup $\{\Phi^n\}_{n\in\NN}$ whose transition map $\Phi$ acts purely unitarily on its phase space (i.e. contains no cycles), consider the continuous time quantum Markov semigroup $(\Phi_t)_{t\ge 0}$ whose corresponding generator is given (in the Heisenberg picture) by $\cL^*=\Phi^*\circ\hat{\Phi}-\id$ and the state $\sigma_{\tr}:=d_\cH^{-1}\,P(\mathbb{I})$. Then
    \begin{itemize}
        \item[(i)]   If  $\cN(( \Phi_t^*)_{t\ge 0} ) = \cN( \hat \Phi )$, then:
              \begin{align}\label{discpoinc1}
                  \|\hat{\Phi}(X)-\hat{\Phi}\circ \hat{P}(X)\|_{2,\sigma_{\tr}}^2\le (1-\lambda(\cL^*))\|X-\hat{P}(X)\|_{2,\sigma_{\tr}}^2
              \end{align}
              Moreover, for any $\rho\in\cD(\cH)$ and any $n\in\NN$:
              \begin{align}\label{sdpoincdisc}
                  \| \Phi^n(\rho)-\Phi^n\circ P(\rho)\|_{1}\le\,\sqrt{\|  \sigma_{\tr}^{-1}\|_\infty} \,(1-\lambda(\cL^*))^{\frac{n}{2}}\,.
              \end{align}
              where $1-\lambda(\cL^*) < 1$ corresponds to the second largest singular value of $\Phi$, counted without multiplicities.
        \item[(ii)]  More generally, let $(\Phi^{(k)}_{t})_{t\ge 0}$ be the continuous time quantum Markov semigroup whose corresponding generator is given (in the Heisenberg picture) by $\cL_k^*=\lb\Phi^*\rb^k\circ(\hat{\Phi})^k-\id$. Then, for any $k$ larger than the size $m$ of the largest Jordan block of $\Phi$, $     \cN((( \Phi^{(k)}_t)^*)_{t\ge 0} ) = \cN( \hat \Phi )$ and
              \begin{align}\label{discpoinck}
                  \|\hat{\Phi}^k(X)-\hat{\Phi}^k\circ \hat{P}(X)\|_{2,\sigma_{\tr}}^2\le (1-\lambda(\cL_k^*))\|X-\hat{P}(X)\|_{2,\sigma_{\tr}}^2
              \end{align}
              Then, for any $\rho\in\cD(\cH)$ and any $n\in\NN$:
              \begin{align}\label{sdpoincdisck}
                  \| \Phi^{nk}(\rho)-\Phi^{nk}\circ P(\rho)\|_{1}\le\,\sqrt{\|  \sigma_{\tr}^{-1}\|_\infty} \,(1-\lambda(\cL_k^*))^{\frac{n}{2}}\,.
              \end{align}
              Moreover, there are quantum channels for which the minimal integer $k$ such that the decoherence-free algebras coincide is given by $m$.
    \end{itemize}
    Finally, the condition of non-cyclicity of the action of $\Phi$ on its phase space can be removed by choosing $k$ to be proportional to the least common denominator of the cycles of the semigroup.
\end{theorem}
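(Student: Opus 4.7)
My plan is to exploit the fact that $\Phi^*\circ\hat\Phi$ is self-adjoint and non-negative with respect to the weighted inner product $\langle\cdot,\cdot\rangle_{\sigma_{\tr}}$, since $\hat\Phi$ is by construction the $\sigma_{\tr}$-adjoint of $\Phi^*$. This makes $\cL^* = \Phi^*\hat\Phi - \id$ a symmetric Lindbladian -- the discrete-time analogue of a reversible generator -- so that the continuous-time Poincar\'e inequality \eqref{spectralgap} applied to the QMS $(e^{t\cL^*})_{t\ge 0}$ translates directly into a one-step contraction estimate for $\hat\Phi$, which then iterates cleanly to give the desired mixing bound.

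For (i), I would first verify that $\cL^*$ generates a valid QMS: $\Phi^*\circ\hat\Phi$ is CP (both factors are CP) and unital, since $\hat\Phi(\mathbb{I}) = \sigma_{\tr}^{-1/2}\Phi(\sigma_{\tr})\sigma_{\tr}^{-1/2} = \mathbb{I}$, using that $\sigma_{\tr} = d_\cH^{-1}P(\mathbb{I})$ is $\Phi$-invariant (because $\Phi\circ P = P$). Under the assumption $\cN((\Phi_t^*)_{t\ge 0}) = \cN(\hat\Phi)$, the conditional expectation $E_{\cN}^*$ for $(e^{t\cL^*})_{t\geq 0}$ coincides with $\hat P$. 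I would then combine the identity
\[
    -\langle Y,\,\cL^*(Y)\rangle_{\sigma_{\tr}} \;=\; \|Y\|^2_{2,\sigma_{\tr}} - \|\hat\Phi(Y)\|^2_{2,\sigma_{\tr}}
\]
with \eqref{spectralgap} applied to $Y := X - \hat P(X)$; the commutativity $\hat\Phi\hat P = \hat P \hat\Phi$ (inherited from $P\Phi = \Phi P$) together with $\hat P^2 = \hat P$ then yields \eqref{discpoinc1}. Iteration is clean because $\hat P(\hat\Phi(Y)) = \hat\Phi(\hat P(Y)) = 0$, so orthogonality to $\cN(\hat\Phi)$ is preserved at each step. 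To pass to the Schr\"odinger picture, I would apply \eqref{fromstatetoobs} to $X = \sigma_{\tr}^{-1/2}\rho\sigma_{\tr}^{-1/2}$ and use $\|X-\hat P(X)\|_{2,\sigma_{\tr}} \leq \|X\|_{2,\sigma_{\tr}} \leq \sqrt{\|\sigma_{\tr}^{-1}\|_{\infty}}$, producing \eqref{sdpoincdisc}. The identification of $1-\lambda(\cL^*)$ as the squared second-largest distinct singular value of $\Phi$ follows from the spectral theorem applied to the self-adjoint operator $\Phi^*\hat\Phi$, whose eigenvalues are the squared singular values of $\hat\Phi$, combined with the formula $\lambda(\cL^*) = 1 - \sup\{\mu\in\spec(\Phi^*\hat\Phi) : \mu<1\}$.

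For (ii), I would apply (i) with $\Phi$ replaced by $\Phi^k$; the only new ingredient is the equality $\cN(\cL_k^*) = \cN(\hat\Phi)$ for $k$ exceeding the largest Jordan block size $m$. This reduces to showing that $\hat\Phi^k$ has no unit singular value on the $\sigma_{\tr}$-orthogonal complement of $\cN(\hat\Phi)$: on this invariant subspace, $\hat\Phi$ has spectral radius strictly less than one and Jordan blocks of size at most $m$, and a careful Jordan-Chevalley analysis shows that raising to the power $k > m$ suffices to force all singular values strictly below one. Tightness of this threshold ($k=m$ achievable) can be witnessed by an explicit example built from a single $m\times m$ Jordan block. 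Finally, the non-cyclicity hypothesis is removed by replacing $k$ with $zk$, where $z$ is the least common multiple of the cycle lengths of $\Phi$'s action on its phase space, so that $\Phi^{zk}$ acts trivially on the peripheral part and the preceding analysis applies verbatim. I anticipate the main obstacle is the sharp threshold in (ii): naive spectral-radius or Gelfand-type arguments yield contractivity only beyond some $k_0$ depending on the concrete eigenvalues, so extracting the clean $k>m$ bound dependent only on the Jordan structure, together with its tightness, requires the careful structural decomposition of $\hat\Phi$ restricted to $\cN(\hat\Phi)^\perp$.
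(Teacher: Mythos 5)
Your treatment of part (i) is essentially the paper's argument: identify $\hat P$ with the conditional expectation of the auxiliary semigroup generated by $\cL^*=\Phi^*\circ\hat{\Phi}-\id$, rewrite the Dirichlet form as $\|Y\|_{2,\sigma_{\tr}}^2-\|\hat{\Phi}(Y)\|_{2,\sigma_{\tr}}^2$, apply the Poincar\'e inequality to $Y=X-\hat P(X)$, iterate using $\hat P\circ\hat\Phi=\hat\Phi\circ\hat P$, and pass to trace norm via \eqref{fromstatetoobs}. One small slip: $\Phi\circ P\neq P$ in general ($\Phi$ acts by a nontrivial unitary conjugation on $\ran P$); the identity you actually need, $\Phi(P(\one))=P(\one)$, does hold in the cycle-free case, so the conclusion $\Phi(\sigma_{\tr})=\sigma_{\tr}$ and the unitality of $\hat\Phi$ stand.

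The genuine gap is in part (ii). You propose to deduce that $\hat{\Phi}^k$ restricted to the $\sigma_{\tr}$-orthogonal complement of $\cN(\hat\Phi)$ has all singular values strictly below one for $k>m$ from the facts that this restriction has spectral radius $<1$ and Jordan blocks of size at most $m$. That implication is false: a non-normal matrix with small spectral radius and trivial Jordan structure can have operator norm far above its spectral radius (e.g. $N=\tfrac12 I + c\,E_{12}$ with $c$ large has $m=1$ and $\spr(N)=\tfrac12$, yet $\|N^k\|_\infty$ grows before it decays), so no power counting based on the Jordan structure alone controls singular values. The missing ingredient, which the paper's proof supplies, is that $\hat\Phi$ is a contraction for $\|\cdot\|_{2,\sigma_{\tr}}$: combined with the unitarity of $\hat\Phi$ on its phase space, a Schur decomposition $\hat\Phi=U^*TU$ (with $U$ unitary for $\langle\cdot,\cdot\rangle_{\sigma_{\tr}}$) forces the peripheral rows and columns of $T$ to be standard basis vectors, so that $T=D_{\mathrm{phase}}\oplus N$ with $N$ upper triangular, $\|N\|_\infty\le 1$ and $\spr(N)<1$; it is the interplay of this contractivity with the nilpotent part of $N$ that yields $\|N^k\|_\infty<1$ beyond the largest Jordan block. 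Your ``Jordan--Chevalley analysis'' must be replaced by, or at least supplemented with, this contractivity argument; as written the step would fail. (The same caveat applies to your sketch of the tightness example and of the reduction of the cyclic case, both of which rely on this threshold claim.)
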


\begin{proof}
    (i) That $1-\lambda(\cL^*)$ is the second largest singular value of $\Phi$ follows from the definition of the semigroup and of its Poincar\' e constant. This is because $\cL^*$ is, by definition, a s.a. operator with spectrum in $[-2,0]$. In this case, the conditional expectation w.r.t. $\cL^*$ is just the projection onto the eigenspace corresponding to the eigenvalue $1$ of $\Phi^*\circ\hat{\Phi}$, which just correspond to right singular vectors of the singular value $1$ of $\Phi^*$. An application of the min-max theorem for eigenvalues then yields the claim.
    Then, notice that
    \begin{align*}
        \|\hat{\Phi}(X)-\hat{\Phi}\circ \hat{P}(X)\|_{2,\sigma_{\tr}}^2 & =\|\hat{\Phi}(X)\|_{2,\sigma_{\tr}}^2+\|\hat{\Phi}\circ\hat{P}(X)\|_{2,\sigma_{\tr}}^2-2\operatorname{Re}\langle \hat{\Phi}(X),\,\hat{\Phi}\circ\hat{P}(X)\rangle_{\sigma_{\tr}} \\
                                                                        & =\|\hat{\Phi}(X)\|_{2,\sigma_{\tr}}^2-\|\hat{\Phi}\circ\hat{P}(X)\|_{2,\sigma_{\tr}}^2\,,
    \end{align*}
    where we simply used $(iv)$ of \Cref{lemmatechn} and the commutativity of $\hat{P}$ with $\hat{\Phi}$ for the second identity. One can similarly prove that $\|X-\hat{P}(X)\|_{2,\sigma_{\tr}}^2=\|X\|_{2,\sigma_{\tr}}^2-\|\hat{P}(X)\|_{2,\sigma_{\tr}}^2$. Next, the conditional expectation $E_\cN^*$ associated with the continuous time quantum Markov semigroup $(\Phi_t^*)_{t\ge 0}$ of generator $\cL^*$ is equal to $\hat{P}$. To see this, note that $\sigma_{\tr}$ (defined by the discrete time QMS) is invariant under the semigroup, since $\cL(\sigma_{\tr}) =  (\hat \Phi)^* \circ \Phi(\sigma_{\tr}) - \sigma_{\tr} = (\hat \Phi)^*(\sigma_{\tr}) - \sigma_{\tr} =0$, and therefore is invariant under $E_\cN^*$. So both $\hat P$ and $E_\cN^*$ are conditional expectations onto the same algebra $\cN((\Phi_t^*)_{t\geq 0}) = \cN(\hat \Phi)$ which both preserve $\sigma_{\tr}$. Next, \cite[Theorem 9]{carbone_decoherence_2013} shows that there exists a unique conditional expectation onto $\cN(\hat \Phi)$ which preserves a given faithful invariant state, and consequently we have $\hat P  = E_\cN^*$.
    Next,
    $\lambda(\cL^*)$ is the largest positive number $\lambda$ such that
    \begin{align}\label{poincarecontodisc}
        \lambda\,\|X-\hat{P}(X)\|_{2,\sigma_{\tr}}^2\le -\langle X,\,(\Phi^*\circ\,\hat{\Phi}-\id)(X)\rangle_{\sigma_{\tr}}\,.
    \end{align}
    By a simple computation, one shows that the right hand side of the above inequality is equal to $\|X\|_{2,\sigma_{\tr}}^2-\|\hat{\Phi}(X)\|^2_{2,\sigma_{\tr}}$. Similarly, the norm on the left hand side of \Cref{poincarecontodisc} is equal to $\|X\|_{2,\sigma_{\tr}}^2-\|\hat{P}(X)\|_{2,\sigma_{\tr}}^2$.
    Hence, (\ref{poincarecontodisc}) is equivalent to
    \begin{align*}
        \lambda(\cL)\,( \|X\|_{2,\sigma_{\tr}}^2- \|\hat{\Phi}\circ \hat{P}(X)\|_{2,\sigma_{\tr}}^2  )\le \|X\|_{2,\sigma_{\tr}}^2-\| \hat{\Phi}(X)\|_{2,\sigma_{\tr}}^2,
    \end{align*}
    which is itself equivalent to (\ref{discpoinc1}). A simple iteration procedure of (\ref{poincarecontodisc}) together with (\ref{fromstatetoobs}) leads to (\ref{sdpoincdisc}) after observing that
    \begin{align*}
        \|X-\hat{P}(X)\|_{2,\sigma_{\tr}}^2 & =\|X-\tr(\sigma_{\tr}X)\,\mathbb{I}_\cH\|_{2,\sigma_{\tr}}^2-\| \hat{P}(X)-\tr(\sigma_{\tr}X)\,\mathbb{I}_\cH \|_{2,\sigma_{\tr}}^2 \\
                                            & \le \|X-\tr(\sigma_{\tr}X)\,\mathbb{I}_\cH\|_{2,\sigma_{\tr}}^2                                                                     \\
                                            & =\| X\|_{2,\sigma_{\tr}}^2-1                                                                                                        \\
                                            & \le \tr(\,\rho\,\sigma_{\tr}^{-1/2}\,\rho\,\sigma_{\tr}^{-1/2})                                                                     \\
                                            & \le \|\sigma_{\tr}^{-1}\|_\infty\,.
    \end{align*}

    (ii) We apply a Schur decomposition of the quantum channel $\hat{\Phi}=U^*TU$, where $U$ is unitary w.r.t. to  $\langle \cdot,\,\cdot\rangle_{\sigma_{\tr}}$ and $T$ is upper triangular. W.l.o.g. assume that the eigenvalues corresponding to the peripheral spectrum are on the diagonals $T_{i,i}$ for $1\leq i\leq r$, where $r$ is the size of the peripheral spectrum. As the action of the channel is unitary on that subspace, the first $r$ columns and rows of  the matrix $T$ must be orthonormal. But their diagonal entries already have norm $1$, which shows that we may decompose the matrix $T$ into a direct sum $T=D_{phase}\oplus N$, where $D_{phase}$ is diagonal and acts on the phase space of the channel, while $N$ is upper triangular and has eigenvalues $<1$ in modulus. Now, note that the condition $ \cN((( \Phi^{(k)}_t)^*)_{t\ge 0} ) = \cN( \hat \Phi )$ is equivalent to the span of right singular vectors corresponding to the singular value $1$ of $\hat{\Phi}^k$ being equal to the phase space of $\Phi$.
    It then follows from the decomposition of $T$ that the two algebras coinciding corresponds to the matrix $N^k$ only having singular values strictly smaller than $1$. This is equivalent to $\|N^k\|_\infty<1$, where by $\|.\|_\infty$ we mean the operator norm of the matrix. Clearly, for $k\geq m$ the matrix $N^k$ is diagonalizable and, thus, the operator norm is bounded by the largest eigenvalue of  $N^k$ in modulus. As $N$ has spectral radius $<1$, it follows that $\|N^k\|_\infty<1$.
    Finally, observe that the example given in remark~\ref{rem:algnotcoincideexample} saturates this bound. (\ref{discpoinck}) and (\ref{sdpoincdisck}) follow similarly to (i).
    \qed

\end{proof}

\medskip

Note that Theorem~\ref{prop:discrete-Poincare} always ensures a strict contraction of the quantum channel, unlike previous results in the literature, as the second largest singular value of a quantum channel ignoring multiplicities is always strictly smaller than $1$.
Moreover, in the case of quantum channels  $\Phi$ satisfying detailed balance, it follows from the last proposition that $\Phi^2$ is strictly contractive. To see why this is the case, not that these channels are always diagonalizable and have a real spectrum. Thus, the only possible eigenvalues for the peripheral spectrum are $1$ and $-1$ and we can only have cycles of length $2$.

\begin{remark}
    As in the continuous case, it is expected that is possible to further improve these convergence bounds by considering discrete versions of relative entropy convergence.
    In the discrete case, strict contraction in relative entropy is related to the notion of a \textit{strong data processing inequality}. In certain situations, the corresponding strong data processing constant can be related to the so-called logarithmic Sobolev inequality of order 2 (see \cite{Raginsky2013,Miclo97,MHF16,MullerHermes2018sandwichedrenyi}), although all the results known in the quantum case only cover primitive evolutions. Also in this setting it is possible to connect the  contraction of the quantum channel to the contraction of the quantum semigroup in continuous-time considered here.
\end{remark}

\subsection{Entanglement breaking times for discrete time evolutions}\label{EBTdiscrete}

Similarly to the continuous time case, we define the following entanglement-loss times: the \textit{entanglement-breaking time} (or \emph{entanglement-breaking index}, introduced in~\cite{Lami2015eb})
\begin{equation*}
    n_{\EB}\p \{\Phi^n\}_{n\in\NN})\overset{\operatorname{def}}{=}\min\left\{n\in \NN: \Phi^n   \in \EB\p\cH)\right\}\,.
\end{equation*}
Similarly, given a discrete time quantum Markov semigroup $\{\Gamma^n\}_{n=1}^\infty$ over a bipartite Hilbert space $\cH_A\otimes\cH_B$, we define the \textit{entanglement annihilation time} $n_{\text{EA}}(\Gamma)$ as follows
\begin{align*}
    n_{\text{EA}}( \{\Gamma^n\}_{n\in\NN})\overset{\operatorname{def}}{=}\min\left\{n\in\NN:\Gamma^n\in\operatorname{EA}(\cH_A,\cH_B)\right\}.
\end{align*}
In the case when $\Gamma^n=\Phi^n\otimes\Phi^n$, $\Phi^n:\,\cB(\cH)\to\cB(\cH)$, this time is called the $2$-\textit{local entanglement annihilation time}, and is denoted by
\begin{align*}
    n_{\text{LEA}_2}(\{\Phi^n\}_{n\in\NN})\overset{\operatorname{def}}{=}\min\left\{n\in\NN:\Phi^n\in\operatorname{LEA}_2(\cH)\right\}\,.
\end{align*}
In analogy with \Cref{strongdeco}, we say that a discrete time quantum Markov semigroup $\{\Phi^n\}_{n\in \NN}$ on $\cB(\cH)$ satisfies the so-called \textit{discrete strong decoherence property} if there exist $k\in\NN$ and constants $\tilde{K}>0$ and $\tilde{\gamma}>1$, possibly depending on $d_\cH$, such that for any initial state $\rho\in\cD(\cH)$ and any $n\in\NN$:
\begin{align}\label{SDPdiscrete}\tag{$\operatorname{dSD}$}
    \|\Phi^{nk}(\rho-P(\rho))\|_1\le \,\tilde{K}\,\,\tilde{\gamma}^{-n}\,.
\end{align}
This property was shown to hold for faithful channels in \Cref{sdpoincdisck} with $k$ the size of the largest Jordan block of $\Phi$,  $\tilde{K}=\sqrt{\|\sigma_{\tr}^{-1}\|_{\infty}}$ and $\tilde{\gamma}=(1-\lambda(\cL_k^*))^{-\frac{1}{2}}$. It is now straightforward to adapt the results from the last sections to obtain bounds on when discrete time quantum Markov semigroups become entanglement breaking.
\begin{theorem}\label{tEB_upperdiscrete}
    For a quantum channel $\Phi:\cB(\cH) \to \cB(\cH)$ that is primitive with full-rank invariant state $\sigma$, we have
    \begin{align*}
        n_{\operatorname{EB}}(\{\Phi^n\}_{n\in\NN})<\frac{3\,k\,\log (d_{\cH}\|\sigma^{-1}\|_\infty)}{-\log(1-\lambda(\cL_k^*))}\,,
    \end{align*}
    where $k$ is the size of the largest Jordan block of $\Phi$ and $\lambda(\cL^*_k)$ is the spectral gap of $\cL_k^*:=(\Phi^*)^k\circ\,(\hat{\Phi})^k-\id$.
\end{theorem}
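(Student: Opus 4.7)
The approach is to carry the proof of \Cref{tEB_upper} over to the discrete setting, using the discrete-time Poincar\'e inequality from \Cref{prop:discrete-Poincare} in place of its continuous-time analog. Since $\Phi$ is primitive, the phase projection is $P(\rho)=\sigma\,\tr[\rho]$ and the reference state equals $\sigma_{\tr}=\sigma$. Specialising \eqref{sdpoincdisck} to this setting gives, for every state $\rho\in\cD(\cH)$,
\[
\|\Phi^{nk}(\rho)-\sigma\|_1 \le \sqrt{\|\sigma^{-1}\|_\infty}\,(1-\lambda(\cL_k^*))^{n/2}.
\]

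To upgrade this pointwise bound to one on $\Phi^{nk}\otimes\id$ applied to $\ketbra{\Omega}{\Omega}$, I would follow the same diamond-norm detour as in \Cref{upperbounds}. A standard decomposition of Hermitian inputs into their positive and negative parts turns the above state-wise estimate into a $(1\to 1)$-norm bound of the same order, and then the inequality $\|\Phi^{nk}-\sigma\,\tr[\cdot]\|_\diamond \le d_\cH^{1/2}\,\|\Phi^{nk}-\sigma\,\tr[\cdot]\|_{1\to 1}$ recalled around \eqref{eq:diamond-norm-bound} yields
\[
\|(\Phi^{nk}\otimes\id)(\ketbra{\Omega}{\Omega}) - \sigma\otimes \tfrac{\mathbb{I}}{d_\cH}\|_2 \;\le\; \|(\Phi^{nk}\otimes\id)(\ketbra{\Omega}{\Omega}) - \sigma\otimes \tfrac{\mathbb{I}}{d_\cH}\|_1 \;\le\; d_\cH^{1/2}\sqrt{\|\sigma^{-1}\|_\infty}\,(1-\lambda(\cL_k^*))^{n/2},
\]
where the first inequality uses $\|\cdot\|_2\le\|\cdot\|_1$ on matrices.

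Finally, invoking \Cref{lemma1} with $\omega=\mathbb{I}/d_\cH$ and noting $\lambda_{\min}(\sigma)=\|\sigma^{-1}\|_\infty^{-1}$, the Choi matrix $(\Phi^{nk}\otimes\id)(\ketbra{\Omega}{\Omega})$ is separable as soon as its 2-norm distance to $\sigma\otimes\mathbb{I}/d_\cH$ is bounded by $\lambda_{\min}(\sigma)\lambda_{\min}(\mathbb{I}/d_\cH)=(d_\cH\|\sigma^{-1}\|_\infty)^{-1}$. Combining this with the previous display leads to the sufficient condition
\[
d_\cH^{1/2}\sqrt{\|\sigma^{-1}\|_\infty}\,(1-\lambda(\cL_k^*))^{n/2} \le (d_\cH\|\sigma^{-1}\|_\infty)^{-1},
\]
and taking logarithms produces $n\ge \frac{3\log(d_\cH\|\sigma^{-1}\|_\infty)}{-\log(1-\lambda(\cL_k^*))}$, the factor $3$ coming from $(d_\cH\|\sigma^{-1}\|_\infty)^{1/2}\cdot d_\cH\|\sigma^{-1}\|_\infty=(d_\cH\|\sigma^{-1}\|_\infty)^{3/2}$. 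Since $\Phi^{nk}$ being entanglement breaking for such an $n$ implies $n_{\EB}\le nk$, the claimed bound follows.

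The main subtlety is not analytic but bookkeeping: one must apply the discrete Poincar\'e inequality to the $k$-fold iterate of $\Phi$, so that the strictly contractive rate $1-\lambda(\cL_k^*)<1$ governs the decay (as guaranteed by part (ii) of \Cref{prop:discrete-Poincare} for $k$ at least the size of the largest Jordan block), and account carefully for the $d_\cH^{1/2}$ cost of lifting from $(1\to 1)$-norm to diamond norm. No further ingredients beyond \Cref{lemma1}, the discrete Poincar\'e bound, and this diamond-norm estimate are needed.
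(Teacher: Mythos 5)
Your proof is correct and follows essentially the same route as the paper's: the paper's own proof of this theorem simply says to repeat the argument of \Cref{tEB_upper} with the discrete Poincar\'e inequality of \Cref{prop:discrete-Poincare} in place of \eqref{poincareDF}, which is exactly what you do, including the diamond-norm lift of \eqref{eq:diamond-norm-bound} and the separable ball from \Cref{lemma1}. Your bookkeeping producing the exponent $3/2$ (hence the factor $3$) and the factor $k$ from iterating $\Phi^k$ matches the intended derivation.
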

\begin{proof}
    The proof follows the same reasoning as in \Cref{tEB_upper}, simply using \Cref{prop:discrete-Poincare} instead of \eqref{poincareDF} and noting that primitive quantum channels do not have cycles.
    \qed
\end{proof}
\begin{remark} \label{rem:prim-eb-bound}
    ~\begin{enumerate}
        \item   In the system studied in \Cref{ex:RWA}, for $|\gamma|<1$, the map $\Phi$ is primitive, with $\sigma = \rho_{\beta^*}$, and $\|\sigma\inv\|_\infty = 1+g$, while $1- \lambda(\cL^*) = |\gamma|^2$. Moreover, $\Phi$ has four distinct eigenvalues, and in particular, every Jordan block is of size 1. Hence,
              \[
                  n_{\operatorname{EB}}(\{\Phi^n\}_{n\in\NN})<\frac{3\,\log (2(1+g))}{-2\log|\gamma|} \leq \frac{3 \log (3)}{-2\log|\gamma|}\,,
              \]
              which matches the $\log((|\gamma|\inv))\inv$ scaling of the exact entanglement breaking time computed in  \Cref{ex:RWA}.
        \item The bound given by \Cref{tEB_upperdiscrete} together with \Cref{prop:charEEB-discrete} yields a method to check whether a faithful channel $\Phi$ is EEB, and if so, to compute a bound on the entanglement-breaking time. One first computes $Z$, the least common multiple of the periods of $\Phi$ restricted to its irreducible components, and checks if $\Phi^{d_\cH^2Z}$ is the direct sum of primitive channels, $\Phi^{d_\cH^2Z} = \bigoplus_i \Phi_i$. If not, $\Phi\not\in \EEB(\cH)$. Otherwise, \Cref{tEB_upperdiscrete} yields an upper bound $n_i$ on the entanglement-breaking time for each $\Phi_i$. Then $n_\text{EB}(\Phi) \leq d_\cH^2Z \max_i n_i$ is  a bound on the entanglement-breaking time of $\Phi$.
    \end{enumerate}

\end{remark}
\medskip
We now derive some lower bounds on the time it takes to a bipartite channel to become $2$-locally entanglement annihilating. Next proposition is a simple consequence of \Cref{cor:n-shot-LEA2}. Note that the lower bound given here is also a lower bound on the time it takes
for a discrete time quantum Markov semigroup to become entanglement breaking (cf. \Cref{EBinLEA2}).
\begin{proposition}[Lower bound for $n_{\operatorname{LEA}_2}$] \label{tANN_lower}

    For a quantum channel $\Phi:\cB(\cH) \to \cB(\cH)$ with $\det(\Phi)\not=0$ which is reversible with respect to a faithful state $\sigma$,
    \begin{align*}
        n_{\operatorname{LEA}_2}(\{\Phi^n\}_{n\in\NN}) \geq \frac{\log(d_\cH \frac{\lambda_{\min}(\sigma)}{\|\sigma\|_\infty})}{\log(\|(\Phi^*)^{-1}\|_{2,\sigma\to2,\sigma})}.
    \end{align*}
\end{proposition}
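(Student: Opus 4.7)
My plan is to reduce the proposition to \Cref{cor:n-shot-LEA2}, which asserts that $\|\Phi^{-k}\|_{2\to 2}\le d_\cH$ suffices to guarantee $\Phi^k\notin\operatorname{LEA}_2(\cH)$. The main work will be to translate this condition from the unweighted Hilbert--Schmidt operator norm appearing in the corollary to the $\sigma$-weighted operator norm that features in the statement, and reversibility of $\Phi$ with respect to $\sigma$ is exactly the device that makes this clean.

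First I would invoke reversibility in the form $\hat{\Phi}=\Phi^*$ from \Cref{strongmixing}, i.e.\ $\Phi=\Gamma_\sigma\circ\Phi^*\circ\Gamma_\sigma^{-1}$. Since $\det\Phi\ne 0$, both $\Phi$ and $\Phi^*$ are invertible, and iterating yields $\Phi^{-k}=\Gamma_\sigma\circ (\Phi^*)^{-k}\circ \Gamma_\sigma^{-1}$ for every $k\ge 1$. Next, by spectrally diagonalising $\sigma$ and comparing matrix entries in that basis, one verifies the two elementary norm inequalities
\begin{equation*}
\|\sigma^{1/2}X\sigma^{1/2}\|_2\le \sqrt{\|\sigma\|_\infty}\,\|X\|_{2,\sigma},\qquad \|\sigma^{-1/2}A\sigma^{-1/2}\|_{2,\sigma}\le \lambda_{\min}(\sigma)^{-1/2}\,\|A\|_2,
\end{equation*}
valid for any $A,X\in\cB(\cH)$. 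Applying the first of these to $X=(\Phi^*)^{-k}(\sigma^{-1/2}A\sigma^{-1/2})$, followed by sub-multiplicativity of the $\sigma$-weighted operator norm (extracting a factor of $\|(\Phi^*)^{-1}\|_{2,\sigma\to 2,\sigma}^{k}$) and then the second inequality on the remaining argument, gives
\begin{equation*}
\|\Phi^{-k}\|_{2\to 2}\le \sqrt{\frac{\|\sigma\|_\infty}{\lambda_{\min}(\sigma)}}\,\|(\Phi^*)^{-1}\|_{2,\sigma\to 2,\sigma}^{k}.
\end{equation*}

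Finally, since $\lambda_{\min}(\sigma)/\|\sigma\|_\infty\le 1$, the (weaker-looking) sufficient condition $\|(\Phi^*)^{-1}\|_{2,\sigma\to 2,\sigma}^{k}\le d_\cH\,\lambda_{\min}(\sigma)/\|\sigma\|_\infty$ already forces the right-hand side above to be at most $d_\cH$, and hence $\Phi^k\notin\operatorname{LEA}_2(\cH)$ by \Cref{cor:n-shot-LEA2}. Reversibility implies $\Phi^*$ is self-adjoint with respect to $\langle\cdot,\cdot\rangle_\sigma$, with spectrum contained in $[-1,1]$, so $\|(\Phi^*)^{-1}\|_{2,\sigma\to 2,\sigma}\ge 1$ and taking logarithms is justified; rearranging then yields the claimed lower bound on $n_{\operatorname{LEA}_2}$.

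The main subtlety is selecting the chain of norm inequalities so that reversibility leads to a bound on $\|\Phi^{-k}\|_{2\to 2}$ in terms of the $\sigma$-weighted norm of $(\Phi^*)^{-k}$ with only the mild prefactor $\sqrt{\|\sigma\|_\infty/\lambda_{\min}(\sigma)}$; without reversibility, $\Phi^{-k}$ and $(\Phi^*)^{-k}$ are not conjugate via a simple similarity and the argument would have to route through the Schur form of \Cref{prop:discrete-Poincare}. A subsidiary technical point is interpreting the bound in the degenerate case $\|(\Phi^*)^{-1}\|_{2,\sigma\to 2,\sigma}=1$, where $\Phi^*$ is an isometry in the $\sigma$-norm (a unitary evolution) and the right-hand side should be read as $+\infty$, consistent with $\Phi^n$ never becoming entanglement annihilating.
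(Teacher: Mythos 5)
Your proposal is correct and follows essentially the same route as the paper: reduce to \Cref{cor:n-shot-LEA2}, compare $\|\Phi^{-k}\|_{2\to 2}$ to $\|(\Phi^*)^{-1}\|_{2,\sigma\to 2,\sigma}^{k}$ via reversibility at the cost of a condition-number prefactor in $\sigma$, and rearrange. The only (immaterial) differences are that you route the norm comparison through the explicit similarity $\Phi^{-k}=\Gamma_\sigma\circ(\Phi^*)^{-k}\circ\Gamma_\sigma^{-1}$, obtaining the slightly sharper prefactor $\sqrt{\|\sigma\|_\infty/\lambda_{\min}(\sigma)}$ where the paper uses the full ratio, and you use submultiplicativity where the paper uses the exact multiplicativity of powers coming from self-adjointness; both yield the stated bound.
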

\begin{proof}
    Note that we have for any linear operator $\Lambda$ that
    \begin{align*}
        \|\Lambda\|_{2\to 2}=   \|\Lambda^*\|_{2\to2}\leq\left(\frac{\|\sigma\|_\infty}{\lambda_{\min}(\sigma)}\right)\|\Lambda^*\|_{2,\sigma\to2,\sigma}
    \end{align*}
    As $\Phi^*$ is reversible with respect to $\sigma$, we have that $\|(\Phi^*)^{-k}\|_{2,\sigma\to2,\sigma}=\|(\Phi^*)^{-1}\|^k_{2,\sigma\to2,\sigma}$, which is
    just the inverse of the smallest eigenvalue of $\Phi^*$. The claim then follows from \Cref{cor:n-shot-LEA2}.
    \qed
\end{proof}
Note that a reversible continuous time quantum Markov semigroup $(\Phi_t)_{t\ge 0}$ with generators $\mathcal{L}$ is invertible and we have $\|\Phi_t^{-1}\|_{2,\sigma\to2,\sigma}=e^{\,t\|\mathcal{L}\|_{2,\sigma\to2,\sigma}}$, so that \Cref{tANN_lower} can be adapted to cover this case.
\begin{corollary} \label{tANN_lower_arb}
    Let $\Phi_i:\cB(\cH) \to \cB(\cH)$ be a quantum channel for each $i=1,\dotsc, n$ such that $\ell := \min_{i\in[n]} |\det \Phi_i|  > 0$. Note $\ell \leq 1$. If
    \begin{equation}\label{n_EA_bound}
        n < \frac{d_\cH^2 \log d_\cH}{2 \log ( \ell\inv )}
    \end{equation}
    then $\Phi_1\circ \dotsm \circ \Phi_n$ is not $\operatorname{EA}$.
\end{corollary}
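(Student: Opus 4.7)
The plan is to combine the determinant-to-$2$-norm inequality with \Cref{lem:1-shot-LEA2}, exactly as \Cref{tANN_lower} does for a single channel. The key identity is that for the linear map $\Psi:= \Phi_1\circ \dotsm \circ \Phi_n$ viewed as an operator on the $d_\cH^2$-dimensional Hilbert space $\cB(\cH)$, the Schatten $2$-norm can be written in terms of its singular values $\sigma_1(\Psi),\dotsc,\sigma_{d_\cH^2}(\Psi)$ as $\|\Psi\|_2^2 = \sum_i \sigma_i(\Psi)^2$, while $|\det \Psi| = \prod_i \sigma_i(\Psi)$.

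First, I will observe that determinants are multiplicative under composition, so $|\det \Psi| = \prod_{i=1}^n |\det \Phi_i| \geq \ell^n$ by hypothesis (and, in particular, $\Psi$ is invertible). Second, I will apply AM--GM to the squared singular values to obtain the estimate
\[
    \|\Psi\|_2^2 = \sum_{i=1}^{d_\cH^2} \sigma_i(\Psi)^2 \;\geq\; d_\cH^2 \left(\prod_{i=1}^{d_\cH^2} \sigma_i(\Psi)^2\right)^{1/d_\cH^2} = d_\cH^2\, |\det \Psi|^{2/d_\cH^2} \;\geq\; d_\cH^2\, \ell^{2n/d_\cH^2}.
\]
Third, invoking \Cref{lem:1-shot-LEA2}, it is enough to ensure $\|\Psi\|_2^2 > d_\cH$, i.e.\ $d_\cH^2\, \ell^{2n/d_\cH^2} > d_\cH$. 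Taking logarithms and using $\ell \in (0,1]$ (so $\log(\ell^{-1})\geq 0$), this rearranges to exactly the hypothesis \eqref{n_EA_bound}. Thus under that hypothesis, $\Psi\notin \LEA_2(\cH)$, and in particular $\Psi\notin \EB(\cH)$ by \Cref{EBinLEA2}.

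There is no real obstacle here: the only thing to be careful about is that AM--GM requires the singular values to be nonnegative (which they are) and that the bound is non-vacuous precisely when $\det \Psi \neq 0$, which is guaranteed by $\ell > 0$. The degenerate case $\ell = 1$ gives $\|\Psi\|_2 = d_\cH > \sqrt{d_\cH}$ directly and requires no hypothesis on $n$; the bound \eqref{n_EA_bound} is then vacuously read as ``for all $n$'', which is consistent with $\log(\ell^{-1})=0$ making the right-hand side $+\infty$.
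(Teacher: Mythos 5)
Your proof is correct and follows essentially the same route as the paper: multiplicativity of the determinant, the AM--GM bound $\|\Psi\|_2^2 \geq d_\cH^2|\det\Psi|^{2/d_\cH^2}$ (the paper phrases it via $\|J(\Psi)\|_2$, which equals $\|\Psi\|_2$ by \eqref{eq:J-preserves-2norm}), and then \Cref{lem:1-shot-LEA2}. Your remarks on the sign of $\log(\ell^{-1})$ and the $\ell=1$ edge case are fine additions but do not change the argument.
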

\begin{proof}
    By the arithmetic-geometric mean inequality,  for any linear map $\Phi$,
    \begin{equation}
        \| J(\Phi)\|_2^2 \geq d_\cH^2 |\det \Phi|^{\frac{2}{d_\cH^2}}.
    \end{equation}
    Applying this to $\Phi_1\circ \dotsm \circ \Phi_n$, and using the multiplicativity of the determinant,
    \begin{align}
        \| J(\Phi_1\circ \dotsm \circ \Phi_n)\|_2^2 & \geq \,d_\cH^2(|\det \Phi_1| \dotsm |\det \Phi_n|)^{\frac{2}{d_\cH^2}} \\
                                                    & \geq d_\cH^2 \,\ell^{\frac{2n}{d_\cH^2}}
    \end{align}
    Using that $\|J(\Phi_1\circ \dotsm \circ \Phi_n)\|_2^2 > d_\cH$ implies $\Phi_1\circ \dotsm \circ \Phi_n$ is not EA (cf. \Cref{lem:1-shot-LEA2}), we rearrange to find the result.
\end{proof}
\begin{remark}
    In the case $\ell =1$, each $\Phi_i$ is unitary, and the composition $\Phi_1\dotsm \Phi_n$ is not EA for any $n$, which agrees with interpreting $\frac{1}{0}=\infty$ in \eqref{n_EA_bound}.
\end{remark}

We saw as a consequence of \Cref{cor:pptfulleeb} that PPT channels with a full rank state are eventually entanglement breaking. In the following proposition, we provide a quantitative version of that statement.
\begin{theorem} \label{prop:PPT-EB_time}
    Let $\Phi:\cB(\cH)\to \cB(\cH)$ be a $\operatorname{PPT}$ quantum channel with an invariant full rank state $\sigma$. Assuming $\operatorname{(}$\ref{SDPdiscrete}$\operatorname{)}$ holds for $\{\Phi^n\otimes\id\}_{n\in\NN}$,
    \begin{align*}
        n_{\operatorname{EB}}(\{\Phi^n\}_{n\in\NN})\le  \frac{\,\log\left( d_\cH\,\tilde{K}\|  (\sigma)^{-1}\|_{\infty} \right)}{\log\tilde{\gamma}}\,.
    \end{align*}
\end{theorem}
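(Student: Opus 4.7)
The plan is to combine the direct-sum decomposition of \Cref{cor:pptfulleeb} with the strong decoherence bound \eqref{SDPdiscrete} and the sufficient entanglement-breaking criterion of \Cref{theoremEB}. Because $\Phi$ is $\operatorname{PPT}$ and faithful, \Cref{cor:pptfulleeb} gives a decomposition $\Phi=\bigoplus_j \Phi_j$ with each $\Phi_j$ irreducible and $\operatorname{PPT}$ on an invariant subspace $P_j\cH$, with full-rank invariant state $\sigma_j$ and period $z_j$. Let $\Phi=\Phi_P+\Phi_Q$ be the Jordan decomposition and let $P$ denote the projection onto the phase subspace $\tilde{\cN}(\Phi)$.

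The first step is to translate \eqref{SDPdiscrete} into a bound on $\|J(\Phi_Q^n)\|_2$. Since $\Phi_Q\circ P=0$ and $\Phi_P$ acts as $\Phi$ on $\tilde{\cN}(\Phi)$, we have $\Phi^n\circ P=\Phi_P^n$, and therefore $\Phi_Q^n(\rho)=\Phi^n(\rho-P(\rho))$. The phase subspace of $\Phi\otimes\id$ is $\tilde{\cN}(\Phi)\otimes\cB(\cH)$ with projection $P\otimes\id$, because $\id$ has only eigenvalue $1$. Substituting $\rho=\ketbra{\Omega}{\Omega}$ into \eqref{SDPdiscrete} applied to $\{\Phi^n\otimes\id\}_{n\in\NN}$, using the identity $J(\Psi)=d_\cH(\Psi\otimes\id)(\ketbra{\Omega}{\Omega})$, and using the elementary Schatten inequality $\|\cdot\|_2\le\|\cdot\|_1$ yields
\begin{equation*}
\|J(\Phi_Q^n)\|_2\;\le\;\|J(\Phi_Q^n)\|_1\;=\;d_\cH\bigl\|(\Phi^n\otimes\id)(\ketbra{\Omega}{\Omega}-(P\otimes\id)\ketbra{\Omega}{\Omega})\bigr\|_1\;\le\;d_\cH\,\tilde{K}\,\tilde{\gamma}^{-n}.
\end{equation*}

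The second step is to apply \Cref{theoremEB} blockwise. Since $\Phi^n$ is $\operatorname{PPT}$, so is each $\Phi_j^n$, and the implication \eqref{eq:irred-nec_for_EB} then forces the off-diagonal vanishing condition \eqref{eq:assume_PhiQ_kills_offdiag} for $\Phi_j^n$. By the remark \eqref{EB} following \Cref{theoremEB}, it suffices to establish $\|J(\Phi_{j,Q}^n)\|_2\le z_j\lambda_{\min}(\sigma_j)$ in order to deduce $\Phi_j^n\in\EB(P_j\cH)$. The block-diagonal identity $J(\Phi_Q^n)=\bigoplus_j J(\Phi_{j,Q}^n)$ gives $\|J(\Phi_{j,Q}^n)\|_2\le\|J(\Phi_Q^n)\|_2$, while the relation $\sigma|_{P_j\cH}=w_j\sigma_j$ with $w_j\in(0,1]$ implies $\lambda_{\min}(\sigma_j)\ge\lambda_{\min}(\sigma)=\|\sigma^{-1}\|_\infty^{-1}$; combined with $z_j\ge 1$, it is enough to arrange $d_\cH\,\tilde{K}\,\tilde{\gamma}^{-n}\le\|\sigma^{-1}\|_\infty^{-1}$, which rearranges to the claimed bound on $n_{\operatorname{EB}}$. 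Then $\Phi^n=\bigoplus_j\Phi_j^n\in\EB(\cH)$.

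The main subtlety I anticipate is checking that the block-preserving structure from \Cref{prop:decomp-irred} is preserved under the power $\Phi_j^n$ so that \Cref{lem:block preserving-PPT} can be applied to conclude \eqref{eq:assume_PhiQ_kills_offdiag}; this is exactly what is established in the proof of \Cref{cor:irred-PPT-EEB}. The remaining work is routine bookkeeping of the block weights $w_j$ and periods $z_j$, so the cost of using the crude inequalities $\|\cdot\|_2\le\|\cdot\|_1$ and $z_j\lambda_{\min}(\sigma_j)\ge\lambda_{\min}(\sigma)$ is at worst a small polynomial in $d_\cH$ absorbed by the prefactor inside the logarithm.
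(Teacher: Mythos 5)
Your proof is correct and follows essentially the same route as the paper's (much terser) argument: decompose $\Phi$ into irreducible PPT blocks via \Cref{cor:pptfulleeb}, convert the strong decoherence bound into a $2$-norm bound on $J(\Phi_Q^n)$, and invoke the separable-ball criterion \eqref{EB} of \Cref{theoremEB} on each block. Your write-up is in fact more explicit than the paper's about the reduction $\Phi_Q^n(\rho)=\Phi^n(\rho-P(\rho))$, the blockwise comparison $\lambda_{\min}(\sigma_j)\ge\lambda_{\min}(\sigma)$, and the use of PPT-ness to secure \eqref{eq:assume_PhiQ_kills_offdiag}, all of which are the intended ingredients.
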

\begin{proof}
    Note that it follows from the proof of~\Cref{cor:pptfulleeb} that the Choi matrix of the channel $\Phi$ is the direct sum of the Choi matrix of primitive quantum channels.
    The proof proceeds similarly to the one of \Cref{tEB_upper} together with a use of \Cref{EB}, restricted to each one of these direct sums.
\end{proof}

\subsection{Approximate quantum Markov networks via strong mixing} \label{sec:aQMN}
Another application of our results is to determine when all outputs of a quantum Gibbs sampler are close to being quantum Markov networks.
A quantum Gibbs sampler $(\Phi^{\operatorname{Gibbs}}_t)_{t\ge 0}$ is a continuous time quantum Markov semigroup that converges towards the Gibbs state of the Hamiltonian of a given lattice spin system. In the case when the Hamiltonian $H$ consists of commuting terms, it was shown in \cite{Kastoryano2016} that at large enough temperature the spectral gap $\lambda$ can be chosen independently of the size of the lattice. In particular, this means that the following bound on the trace distance between any initial state $\rho$ and the equilibrium Gibbs state $\sigma$:
\begin{align*}
    \|\Phi^{\operatorname{Gibbs}}_t(\rho)-\sigma\|_1\le \frac{1}{2}\,\e^{-\lambda t}\,\|\sigma^{-1}\|_\infty\,.
\end{align*}
On the other hand, the existence of a modified logarithmic Sobolev constant $\alpha_1$ independent of the system size is still an important open problem.

It is a well-known fact that Gibbs states corresponding to commuting interactions are quantum Markov networks \cite{LEIFER20081899,brown2012quantum}: given any tripartition $ABC$ of the lattice such that $B$ shields $A$ away from $C$, $\sigma\equiv \sigma_{ABC}$ is a quantum Markov chain, that is, there exists a quantum channel $\mathcal{R}_{B\to BC}$, usually referred to as the recovery map, such that $\mathcal{R}_{B\to BC}\circ \tr_C(\sigma)=\sigma$. Equivalently, such states are the ones with vanishing conditional quantum mutual information: $I(A:C|B)_\sigma=0$, where given a state $\rho$, $$I(A:C|B)_\rho=H(AB)_\rho+H(BC)_\rho-H(B)_\rho-H(ABC)_\rho\,.$$

More recently, \cite{Fawzi2015} introduced the concept of an $\eps$-approximate quantum Markov chain. Such tripartite quantum states $\rho$ can be defined by requiring that their conditional quantum mutual information is small, that is, $I(A:C|B)_\rho\le \eps$, for $\eps>0$. It was shown in \cite{Fawzi2015} that this is equivalent to the existence of a map $\mathcal{R}_{B\to BC}$ such that $\rho$ is close to $\mathcal{R}_{B\to BC}\circ \tr_C(\rho)$, say in trace distance. A simple use of the data processing inequality implies that states that are close to a quantum Markov chain are $\eps$-approximate Markov chains. However, the opposite statement does not hold true in general \cite{Christandl2012,Ibinson2008}.

In this section, we are interested in deriving bounds on the time it takes a state evolving according to a quantum Gibbs sampler to become an $\eps$-approximate quantum Markov network.
More precisely, let $ABC$ be a tripartition of a lattice spin system $\Lambda$ as above, and let $(\Phi^{\operatorname{Gibbs}}_t)_{t\ge 0}$ a continuous-time quantum Gibbs sampler.  A very simple bound can be achieved by rudimentary triangle inequality: letting $\rho_t:=\Phi^{\operatorname{Gibbs}}_t(\rho)$,

\begin{align*}
    \|  \rho_t-\cR_{B\to BC}(\rho_t)\|_1 & \le \|   \rho_t-\sigma\|_1+\|\cR_{B\to BC}(\sigma)-\cR_{B\to BC}(\rho_t)\|_1 \\
                                         & \le 2 \|\rho_t-\sigma\|_1                                                    \\
                                         & \le\e^{-\lambda t}\|\sigma^{-1}\|_\infty\,.
\end{align*}
and therefore
\begin{align*}
    t^{A-B-C}_{\operatorname{aQMC}}(\eps)\le \frac{1}{\lambda}\ln\left( \frac{\|\sigma\|_\infty^{-1}}{\eps}\right)\,,
\end{align*}
where the time to become an $\eps$-quantum Markov chain with respect to a clique $A-B-C$ is defined as
\begin{align*}
    t^{A-B-C}_{\operatorname{aQMC}} & ((\Phi^{\operatorname{Gibbs}}_t)_{t\ge 0},\eps)                                                                                                                                                                                       \\
                                    & \!\!\!\!\!\!\!\!\!\!\!\!  :=\left\{ t\ge 0:\, \exists \,\mathcal{R}_{B\to BC}:\,\cB(\cH_B)\to \cB(\cB_B\otimes \cB_C)\,\text{CPTP},\,\|\rho_t-\mathcal{R}_{B\to  BC}(\rho_t)\|_1\le \eps    \,\forall\rho\in\cD(\cH_{ABC})\right\}\,.
\end{align*}

A much better bound can however be derived from the joint use of the following bound from \cite{Fawzi2015}: for any tripartite state $\rho_{ABC}\in\cD(\cH_A\otimes \cH_B\otimes \cH_C)$, there exists a CPTP map $\Gamma_{B\to BC}$ such that
\begin{align*}
    I(A:C|B)\ge -\log\,F(\rho_{ABC},\Gamma_{B\to BC}\circ \tr_C(\rho_{ABC}))\,,
\end{align*}
where $F(\rho,\sigma)=(\tr|\sqrt{\rho}\sqrt{\sigma}|)^2$ denotes the fidelity of to states $\rho$ and $\sigma$. By the Fuchs and van de Graaf inequality, this implies that:
\begin{align}\label{FawziRenner}
    I(A:C|B)_\rho\ge -\log\,\left(1-\frac{1}{4}\,\|\rho_{ABC}-\Gamma_{B\to BC}\circ\tr_C(\rho_{ABC})\|_1^2\right)\,.
\end{align}
Together with the Alicki-Fannes-Winter continuity bounds on conditional entropies, we obtain the following result:
\begin{proposition}
    Assume that $(\Phi_t^{\operatorname{Gibbs}})_{t\ge 0}$  $\operatorname{(}$\ref{SDP}$\operatorname{)}$. Then, for any partition $A-B-C$ of the lattice such that $B$ shields $A$ away from $C$:
    \begin{align*}
        t^{A-B-C}_{\operatorname{aQMC}}((\Phi^{\operatorname{Gibbs}}_t)_{t\ge 0},\eps)\le \min\left\{\frac{1}{\gamma}\,\log\left( \frac{(\log |B|+1)K+\frac{K^2}{2}}{\log\left[(1-\frac{1}{4}\eps^2)^{-1}\right]}   \right)\,,     \frac{1}{\gamma}\,\log\left( \frac{(\log |A|+1)K+\frac{K^2}{2}}{\log\left[(1-\frac{1}{4}\eps^2)^{-1}\right]}   \right)     \right\}\,.
    \end{align*}

\end{proposition}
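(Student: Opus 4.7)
The plan is to combine the three ingredients already assembled in this subsection: (i) the fact that for commuting-interaction Hamiltonians the Gibbs state $\sigma$ is an exact quantum Markov chain along $A$--$B$--$C$, so $I(A:C|B)_\sigma = 0$; (ii) the strong-decoherence hypothesis \reff{SDP}, which for the (primitive) Gibbs sampler reads $\|\Phi_t^{\operatorname{Gibbs}}(\rho) - \sigma\|_1 \le K\e^{-\gamma t}$ for every input state $\rho$, since $E_\cN(\rho) = \sigma$; and (iii) the Fawzi--Renner inequality \reff{FawziRenner}, which converts an upper bound on $I(A:C|B)_{\rho_t}$ into the existence of a recovery map with small trace-norm reconstruction error.

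Concretely, I would first observe that it suffices to produce a threshold $\eta(\eps) := \log[(1-\tfrac14\eps^2)^{-1}]$ such that $I(A:C|B)_{\rho_t} \le \eta(\eps)$, for then inverting \reff{FawziRenner} yields a CPTP map $\Gamma_{B\to BC}$ with $\|\rho_t - \Gamma_{B\to BC}\circ\tr_C(\rho_t)\|_1 \le \eps$, which is precisely the defining condition of $t^{A-B-C}_{\operatorname{aQMC}}$. So the problem reduces to bounding $I(A:C|B)_{\rho_t}$ as a function of $t$.

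To obtain such a bound I would invoke the Alicki--Fannes--Winter continuity bound on conditional entropy. Writing the conditional mutual information in the two symmetric forms
\[
I(A:C|B) \;=\; H(A|B) - H(A|BC) \;=\; H(C|B) - H(C|AB),
\]
and applying AFW to each summand, one gets two bounds of the shape
\[
\bigl|I(A:C|B)_{\rho_t} - I(A:C|B)_\sigma\bigr| \;\le\; (\log|X|+1)\,\|\rho_t-\sigma\|_1 \;+\; \tfrac{1}{2}\|\rho_t-\sigma\|_1^{2},
\]
one with $X=A$ and one with $X=C$ (interpreted as the "$|A|$" and "$|B|$" branches of the statement in the paper's notation for the two sides of the Markov chain). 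Using $\|\rho_t-\sigma\|_1 \le K\e^{-\gamma t}$ from (SD) together with $I(A:C|B)_\sigma = 0$, each branch gives
\[
I(A:C|B)_{\rho_t} \;\le\; \bigl[(\log|X|+1)K + \tfrac{K^{2}}{2}\bigr]\e^{-\gamma t}.
\]
Requiring the right-hand side to be at most $\eta(\eps)$ and solving for $t$ delivers the claimed bound in each of the two arguments of the minimum.

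The main obstacle is simply to verify that the particular AFW-type continuity estimate yields the exact constants $(\log|X|+1)K + K^2/2$ rather than a cruder $4\eps\log|X| + 2g(\eps)$; this is a matter of using the sharp form of the bound, bounding the binary-entropy term $(1+\delta)h_2(\delta/(1+\delta))$ against $\delta + \delta^{2}/2$ in the relevant regime, and absorbing the resulting constants. Once that has been done cleanly, the triangle inequality with \reff{FawziRenner} and a straightforward logarithm rearrangement finish the proof, and taking the minimum over the two decompositions of $I(A:C|B)$ yields the stated $\min\{\,\cdot\,,\,\cdot\,\}$. \qed
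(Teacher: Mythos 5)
Your proposal is correct and follows essentially the same route as the paper's proof: it combines the vanishing conditional mutual information of the commuting Gibbs state, the Alicki--Fannes--Winter continuity bound applied to a chain-rule decomposition of $I(A:C|B)$ together with the strong decoherence estimate $\|\rho_t-\sigma\|_1\le K\e^{-\gamma t}$, and the Fawzi--Renner inequality \eqref{FawziRenner} to convert the resulting CMI bound into the recovery-map condition, before solving for $t$. The only difference is bookkeeping in the AFW step: your (standard) decompositions $H(A|B)-H(A|BC)$ and $H(C|B)-H(C|AB)$ produce $\log|A|$ and $\log|C|$ branches, whereas the paper writes the difference of $H(B|C)$ and $H(B|AC)$ to obtain the $\log|B|$ branch of the stated minimum --- a discrepancy you correctly flag and which lies in the paper's statement rather than in your argument.
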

\begin{proof}
    Denote $\delta(t):=\|\rho_t-\sigma\|_1$ and consider the following difference between conditional quantum mutual informations:
    \begin{align*}
        I(A:C|B)_{\rho_t} & =   |   I(A:C|B)_{\rho_t}-  I(A:C|B)_\sigma|                                                    \\
                          & \le |H(B|C)_{\rho_t}-H(B|C)_\sigma|+|H(B|AC)_{\rho_t}-H(B|AC)_\sigma|                           \\
                          & \le  \delta(t)\,\ln |B|+(1+\delta(t)/2)  \,h_2 \left( \frac{\delta(t)/2}{1+\delta(t)/2} \right) \\
                          & \le \delta(t)\,(\ln|B|+1)+\frac{\delta(t)^2}{2}                                                 \\
                          & \le \left[(\ln|B|+1)\,K+\frac{K^2}{2}\right]\,\e^{-\gamma \,t }
    \end{align*}
    where the second inequality follows from Lemma 2 of \cite{Winter2016}, and given $0\le p\le 1$, $h_2(p)$ denotes the binary entropy of the distribution $\{p,1-p\}$.
    The result directly follows.
    \qed
\end{proof}
\medskip
The above proposition together with the bound coming from the Poincar\'{e} inequality yields a bound where $K=\sqrt{\|\sigma^{-1}\|_\infty} $ and $\gamma=\lambda$. Having access to a modified logarithmic Sobolev inequality would provide an even stronger bound: by Theorem 4 of \cite{Ibinson2008},
\begin{align*}
    I(A:C|B)_{\rho_t}\le D(\rho_t\|\sigma)\le \log(\|\sigma^{-1}\|_\infty)\,\e^{-2\alpha_1\,t}\,.
\end{align*}
This together with (\ref{FawziRenner}) yields
\begin{align*}
    t^{A-B-C}_{\operatorname{aQMC}}((\Phi^{\operatorname{Gibbs}}_t)_{t\ge 0},\eps)\le \frac{1}{2\alpha_1}\log\left( \frac{\log(\|\sigma^{-1}\|_\infty)}{\log\left[(1-\frac{1}{4}\eps^2)^{-1}\right]}  \right)\,.
\end{align*}

\subsection{Time to become mixed unitary} \label{sec:tMU}
Functional analytical methods can also be used in order to derive other convergence results. As an example, we provide a bound on the time required by a doubly stochastic discrete time quantum Markov semigroup to come close to a random unitary channel. Birkhoff's theorem states that any doubly stochastic matrix may be written as a convex combination of permutation matrices. It is well known that the quantum analogue of this theorem does not hold in general, i.e. there are doubly stochastic quantum channels that cannot be written as a convex combination of unitary channels, that is, mixed unitary channels. We refer to e.g. ~\cite{Mendl2009} and references therein for a discussion of this problem. However, we will show that applying the channel often enough might lead it to become a mixed unitary channel.

Let $\{\Phi^n\}_{n\in\NN}$ be a discrete time quantum Markov semigroup on $\cB\p \cH)$, with invariant state $\mathbb{I}/d_\cH\in\cD_+\p \cH)$. The mixed unitary time $n_{\operatorname{MU}}(\Phi)$ of $\{\Phi^n\}_{n\in\NN}$ is defined as follows:
\begin{equation*}
    n_{\operatorname{MU}}(\{\Phi^n\}_{n\in\NN})\overset{\operatorname{def}}{=}\min\left\{n_0\in\NN|~\forall n\ge n_0,~ \Phi^n   \text{ is a mixed unitary channel}\right\}\,.
\end{equation*}
The goal of this section is to estimate this time, and we will follow an approach that is similar to the last sections on entanglement breaking times. In~\cite[Corollary 2]{watrous2008mixing}, the author shows that if
\begin{align*}
    \left\|J\lb \Phi\rb-\frac{\mathbb{I}}{d_\cH^2}\right\|_\infty\leq\frac{1}{d_\cH(d_\cH^2-1)}
\end{align*}
and $\Phi$ is a doubly stochastic channel, then it is a mixed unitary channel.
By observing that
\begin{align*}
    \left\|J\lb \Phi\rb-\frac{\mathbb{I}}{d_\cH^2}\right\|_\infty\leq  \left\|J\lb\Phi\rb-\frac{\mathbb{I}}{d_\cH^2}\right\|_p
\end{align*}
for all $p\geq1$, we may immediately adapt the bounds in the previous sections for entanglement breaking times to this scenario as well. It would be tedious to repeat all the argument given before to the mixed unitary case, so we just show how to adapt the bound in discrete time we obtain from the Poincar\'e inequality to this setting. For simplicity we will state the result we obtain for self-adjoint channels, but it should be clear how to generalize the result.
\begin{theorem}\label{tEB_upper2}
    For any primitive doubly stochastic self-adjoint quantum channel $\Phi:\cB(\cH) \to \cB(\cH)$, we have
    \begin{align*}
        n_{\operatorname{MU}}(\{\Phi^n\}_{n\in\NN})\leq-\frac{3\log (d_{\cH})}{2\log(1-\lambda(\cL^*))}\,,
    \end{align*}
    where $\lambda(\cL^*)$ is the spectral gap of $\cL^*:=\Phi^*\circ\,\hat{\Phi}-\id$.
\end{theorem}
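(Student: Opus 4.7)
The proof will follow the template of \Cref{tEB_upperdiscrete}, substituting Watrous's $\infty$-norm mixed-unitary criterion (recalled just before the theorem) for the separable-ball criterion (\Cref{lemma1}) used there. The first step is to verify that the hypotheses of the discrete Poincar\'e inequality \Cref{prop:discrete-Poincare}(i) hold with $k=1$. Since $\Phi$ is doubly stochastic, $\sigma_{\tr}=\mathbb{I}/d_\cH$, so $\langle\cdot,\cdot\rangle_{\sigma_{\tr}}$ is a rescaling of the Hilbert--Schmidt inner product and $\hat{\Phi}=\Phi$. Self-adjointness of $\Phi$ makes it diagonalisable with real spectrum contained in $[-1,1]$, while primitivity restricts the peripheral spectrum to $\{1\}$ with eigenspace $\bC\mathbb{I}$; in particular the action on the phase subspace is trivial (no cycles), and the two decoherence-free algebras coincide, $\cN(\hat{\Phi})=\cN((\Phi_t^*)_{t\ge 0})=\bC\mathbb{I}$.

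With these hypotheses in place, the Poincar\'e bound \eqref{discpoinc1} gives $\|\Phi^n-\Phi_\infty\|_{2\to 2}\le(1-\lambda(\cL^*))^{n/2}$ as a superoperator on $(\cB(\cH),\|\cdot\|_2)$, where $\Phi_\infty(X):=\tr(X)\,\mathbb{I}/d_\cH$. Since the $(2\to 2)$-operator norm is stable under tensoring with the identity superoperator and $\||\Omega\rangle\langle\Omega|\|_2=1$, the Choi isometry \eqref{eq:J-preserves-2norm} together with $\|\cdot\|_\infty\le\|\cdot\|_2$ yields a bound of the form
\[
\bigl\|J(\Phi^n)/d_\cH-\mathbb{I}/d_\cH^{2}\bigr\|_{\infty}\;\le\;c(d_\cH)\,(1-\lambda(\cL^*))^{n/2},
\]
where $c(d_\cH)$ is a dimension-dependent prefactor arising from the $\sqrt{d_\cH^{2}-1}$ factor one picks up when bounding the Hilbert--Schmidt norm of the Choi error by the sum over the at most $d_\cH^{2}-1$ nontrivial singular values of $\Phi-\Phi_\infty$. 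Feeding this into Watrous's criterion---which requires the left-hand side to be at most $[d_\cH(d_\cH^{2}-1)]^{-1}$---and taking logarithms produces an upper bound on $n_{\operatorname{MU}}$ of the advertised form $C\,\log(d_\cH)/[-\log(1-\lambda(\cL^*))]$.

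The main technical point is tracking the dimensional constants tightly enough to recover the precise prefactor $3/2$ in the statement. A direct combination of $\|\cdot\|_\infty\le\|\cdot\|_2$ with the crude estimate $\|\Phi^n-\Phi_\infty\|_2\le\sqrt{d_\cH^{2}-1}\,(1-\lambda(\cL^*))^{n/2}$ produces a somewhat larger numerical constant $C$; sharpening $C$ to the stated value will require a tighter interplay between the $\sqrt{d_\cH^{2}-1}$ factor from the spectral sum and the $[d_\cH(d_\cH^{2}-1)]^{-1}$ tolerance in Watrous's criterion, using the full spectral decomposition of $\Phi$ (available because of self-adjointness) rather than a loose product bound. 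The rest of the argument is essentially bookkeeping.
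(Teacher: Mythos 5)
Your proposal is correct and follows essentially the same route as the paper: verify that \Cref{prop:discrete-Poincare} applies with $k=1$ (primitivity plus self-adjointness excludes the eigenvalue $-1$, so there are no cycles and $\cN(\hat\Phi)=\cN((\Phi_t^*)_{t\ge 0})=\bC\mathbb{I}$), deduce exponential convergence of the Choi matrix to $\mathbb{I}/d_\cH^2$, and feed this into Watrous's $\infty$-norm criterion; the paper simply runs the convergence estimate in trace norm, obtaining $\|J(\Phi^n)-\mathbb{I}/d_\cH^2\|_1\le\sqrt{d_\cH}\,(1-\lambda(\cL^*))^{n/2}$, rather than through the $2\to 2$ norm and the Choi isometry as you do. Your worry about recovering the prefactor $3/2$ is well founded: with $n$ equal to the stated bound the paper's own right-hand side evaluates to $d_\cH^{-1/4}$, not the required $d_\cH^{-3}$, so the published constant appears to be an arithmetic slip; the honest constant delivered by this argument (requiring $(1-\lambda(\cL^*))^{n/2}\le d_\cH^{-7/2}$) is $n_{\operatorname{MU}}\le 7\log(d_\cH)/(-\log(1-\lambda(\cL^*)))$, and no further "tight interplay" of dimensional factors will rescue $3/2$. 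Apart from this constant, your argument is complete.
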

\begin{proof}
    It follows from \Cref{prop:discrete-Poincare} that
    \begin{align}\label{equ:convmixedunitary}
        \left\| J\lb \Phi^n\rb-\frac{\mathbb{I}}{d_\cH^2}\right\|_{1}\le\,\sqrt{d_\cH}\, \,(1-\lambda(\cL^*))^{\frac{n}{2}}\,.
    \end{align}
    Setting $n$ as in the statement ensures that the R.H.S. of the equation above is at most $d_\cH^{-3}$, which implies that the quantum channel $\Phi^n$
    is mixed unitary.
    \qed
\end{proof}
\medskip
The above result supplements those of~\cite{Mendl2009}, in particular those of Section V. There the authors show several ``revivals'' of Birkhoff's theorem in the quantum case, that is, relaxed conditions under which a doubly stochastic quantum channel becomes a mixture of unitaries.
The above result shows that all primitive doubly stochastic quantum channels become mixed unitary if we apply them often enough.
This has interesting applications to environment assisted capacities, introduced in~\cite{gregorattiwerner}.
There they identify the class of channels that allow for complete correction, given a suitable feedback of classical
information from the environment, with the set of mixed unitary channels. Together with our last result, this implies the counterintuitive fact that a primitive doubly stochastic channel becomes perfectly correctable in this scenario if we apply it often enough, i.e., add more noise.

\appendix

\section{Proof of \Cref{prop:decomp-irred}}\label{proof:prop:decomp-irred}

Let us show that given $z, \{p_n\}_{n=0}^{z-1}$, $\sigma$, and $\Phi_Q$, the decomposition \eqref{eq:Phi-irred} gives an irreducible quantum channel.
Note, by the definition of $P_n$, for any $X\in\mathcal{B}(\cH)$,
\[
    P_j \circ P_k (X) = \tr[ u^{-k} X] \tr[ u^{k-j}\sigma] u^j\sigma  = \delta_{jk} P_j(X)
\]
using $\tr[ \sigma p_n ] = \frac{1}{z}$ and the formula
\begin{equation}\label{eq:roots-of-unity-sum}
    \sum_{n=0}^{z-1} \theta^{n m} = \begin{cases}
        z & m = zk \text{ for some } k\in \mathbb{Z} \\
        0 & \text{otherwise.}
    \end{cases}
\end{equation}
Since $P_0(X) = \tr[X]\sigma$, we have for $j\neq 0$, $0 = P_0\circ P_j (X) = \tr[P_j(X)] \sigma$, yielding that $P_j$ is trace-annihilating: $\tr[P_j(X)] =0$ for all $X\in \cB(\cH)$. In the same way, using assumption~\labelcref{it:Phi_Q-Pj},  $\Phi_Q$ is trace-annihilating. Thus, $\Phi = P_0 + \sum_{n=1}^{z-1} \theta^n P_n + \Phi_Q$ is trace-preserving.

Next, we prove \eqref{eq:irred-choi-P}, which will prove $\Phi$ is CP via assumption~\labelcref{it:Phi_Q-J}. For $\Phi_P := \sum_{m=0}^{z-1} \theta^m P_m$, we have $\Phi_P^k = \sum_{m=0}^{z-1} \theta^{km} P_m$. Then, for any $X\in \cB(\cH)$, we have the discrete Fourier-type computation,
\begin{align}
    \Phi_P^k(X) & = \sum_{m=0}^{z-1}\theta^{km} \tr[ u^{-m} X]u^m \sigma = \sum_{m,n,\ell=0}^{z-1} \theta^{km} \tr[\theta^{-m n}  p_n X] \theta^{\ell m} p_\ell \sigma\nonumber \\
                & =\sum_{m,n,\ell=0}^{z-1} \theta^{m(k-n+\ell)} \tr[ p_n X]  p_\ell \sigma =\sum_{n,\ell=0}^{z-1} z\delta_{\ell = n-k} \tr[ p_n X]  p_\ell \sigma\nonumber      \\
                & =z\sum_{n=0}^{z-1} \tr[ p_n X]  p_{n-k} \sigma
    \label{eq:PhikP_formula}
\end{align}
using \eqref{eq:roots-of-unity-sum}. Next, let $\{\ket{i}\}_{i=0}^{d_{\cH}-1}$ be an orthonormal basis of $\cH$ such that the first $\rank(p_0)$ elements are a basis for $p_0 \cH$, the next $\rank(p_1)$ elements are a basis for $p_2\cH$, and so on. We have $p_0 = \sum_{i=0}^{\rank(p_0)-1} \ket{i}\bra{i}$, and so forth. Thus,
\begin{align}
    J(\Phi_P^k) & =\sum_{i,j = 0}^{d-1} P^k(\ket{i}\bra{j}) \otimes \ket{i}\bra{j} = z\sum_{i,j}\sum_{n=0}^{z-1} \tr[p_n \ket{i}\bra{j}] p_{n-k} \sigma \otimes \ket{i}\bra{j}\nonumber \\
                & = z \sum_{n=0}^{z-1} \sum_{i = 0}^{d-1}\braket{i | p_n | i} p_{n-k} \sigma \otimes \ket{i}\bra{i} = z\sum_{n=0}^{z-1} \sigma p_{n-k}  \otimes p_n\label{eqJPhiP}      \\
                & =z \big(\sigma \otimes\one \big) L_{k}\nonumber.
\end{align}
In particular, $J(\Phi_P) =z \big(\sigma \otimes \one\big) L_{1}$. Thus, by assumption \labelcref{it:Phi_Q-J},
\[
    J(\Phi) = J(\Phi_P) + J(\Phi_Q) \geq J(\Phi_P) - z \big(\sigma \otimes \one\big) L_{1} =0
\]
and hence $\Phi$ is CP. Since $\Phi$ is CPTP, we can use \eqref{eq:irred-erogdic-average} to prove $\Phi$ is irreducible.  We have
\begin{align*}
    \frac{1}{M}\sum_{n=0}^{M-1} \Phi^n = P_0 + \frac{1}{M} \sum_{m=1}^{z-1} \frac{1 - \theta^{Mm}}{ 1 - \theta^m} P_m + \frac{1}{M}\sum_{n=0}^{M-1} \Phi_Q^n
\end{align*}
using the geometric series $\sum_{n=0}^{M-1} \theta^{mM} = \frac{1 - \theta^{mM}}{1 - \theta^m}$ for $m\neq 0$, which is valid as $\theta^m \neq 1$. Since $P_0[X] = \tr[X]\sigma$, it remains to show that the latter two terms vanish in the limit $M\to \infty$. In fact, since $ \sum_{m=1}^{z-1} \frac{1 - \theta^{Mm}}{ 1 - \theta^m} P_m$ is bounded in norm uniformly in $M$, the second term vanishes asymptotically. Next, since $\ell := \spr(\Phi_Q) < 1$ by assumption~\labelcref{it:Phi_Q-spr}, for $\eps = \frac{1 - \ell}{2} > 0$,  Gelfand's theorem gives that there is $n_0 > 0$ such that (in any matrix norm $\|\cdot\|$), for all $n\geq n_0$,
\[
    \| \Phi_Q^n \|\leq (\ell+\eps)^n < 1.
\]
We may write
\[
    \frac{1}{M}\sum_{n=0}^{M-1} \Phi_Q^n = \frac{1}{M}\sum_{n=0}^{n_0} \Phi_Q^n + \frac{1}{M}\sum_{n=n_0 + 1}^{M-1} \Phi_Q^n.
\]
Since $\sum_{n=0}^{n_0} \Phi_Q^n $ is bounded in norm independently of $M$, the first term vanishes asymptotically; the second term is bounded in norm by the triangle inequality and the geometric series $\sum_{n=0}^\infty (\ell+\eps)^n = \frac{1}{1 - (\ell + \eps)}$. Thus, the limit
\[
    \lim_{M\to \infty} \frac{1}{M}\sum_{n=0}^{M-1} \Phi^n = P_0
\]
holds in (any) norm. In particular, we have \eqref{eq:irred-erogdic-average}, so $\Phi$ is irreducible.
\qed

\paragraph{Acknowledgements}
E.H. would like to thank Yan Pautrat for discussions on the properties of irreducible quantum channels, and Ivan Bardet for discussions on the PPT$^2$ conjecture. E.H. is supported by the Cantab Capital Institute for the Mathematics of Information (CCIMI).
C.R. acknowledges support by the DFG cluster of excellence 2111 (Munich
Center for Quantum Science and Technology).
D.S.F. would like to thank Alexander M\"uller-Hermes for interesting discussions. D.S.F. acknowledges financial support from VILLUM FONDEN via the QMATH Centre of Excellence (Grant no. 10059) and from the QuantERA ERA-NET Cofund in Quantum Technologies implemented within the European Union's Horizon 2020 Programme (QuantAlgo project) via the Innovation Fund Denmark. We would like to thank the anonymous QIP 2019 reviewers for pushing us more to consider  the non-irreducible case, and an anonymous referee for suggesting the very simple proof of \Cref{lemma1}.

\bibliographystyle{abbrv}
\bibliography{refs}

\end{document}